\newcommand{\longversion}[1]{#1}
\newcommand{\shortversion}[1]{}
\author{Palash Dey\\Indian Institute of Technology, Kharagpur\\Email: palash.dey@cse.iitkgp.ac.in}
\newcommand{\suc}{\ensuremath{\succ}\xspace}
\let\mydelta\delta
\renewcommand{\delta}{\ensuremath{\mydelta}\xspace}
\let\myalpha\alpha
\renewcommand{\alpha}{\ensuremath{\myalpha}\xspace}
\let\mystar\star
\renewcommand{\star}{\ensuremath{\mystar}}
\newcommand{\NP}{\ensuremath{\mathsf{NP}}\xspace}
\newcommand{\NPC}{\ensuremath{\mathsf{NP}}-complete\xspace}
\newcommand{\el}{\ensuremath{\ell}\xspace}
\newcommand{\Pb}{\ensuremath{P}\xspace}
\newcommand{\TSAT}{\ensuremath{(3,\text{B}2)}-{\sc SAT}\xspace}
\newcommand{\YES}{{\sc yes}\xspace}
\newcommand{\NO}{{\sc no}\xspace}
\newcommand{\OB}{\textsc{Local Distance constrained bribery}\xspace}
\newcommand{\ODB}{\textsc{Local Distance constrained \$bribery}\xspace}
\newcommand{\SHB}{\textsc{Shift bribery}\xspace}
\newcommand{\SWB}{\textsc{Swap bribery}\xspace}
\newcommand{\DB}{\$\textsc{Bribery}\xspace}
\newcommand{\FB}{\textsc{Frugal bribery}\xspace}
\renewcommand{\AA}{\ensuremath{\mathcal A}\xspace}
\newcommand{\BB}{\ensuremath{\mathcal B}\xspace}
\newcommand{\CC}{\ensuremath{\mathcal C}\xspace}
\newcommand{\DD}{\ensuremath{\mathcal D}\xspace}
\newcommand{\EE}{\ensuremath{\mathcal E}\xspace}
\newcommand{\GG}{\ensuremath{\mathcal G}\xspace}
\newcommand{\LL}{\ensuremath{\mathcal L}\xspace}
\newcommand{\NN}{\ensuremath{\mathcal N}\xspace}
\newcommand{\PP}{\ensuremath{\mathcal P}\xspace}
\newcommand{\QQ}{\ensuremath{\mathcal Q}\xspace}
\newcommand{\RR}{\ensuremath{\mathcal R}\xspace}
\newcommand{\VV}{\ensuremath{\mathcal V}\xspace}
\newcommand{\WW}{\ensuremath{\mathcal W}\xspace}
\newcommand{\XX}{\ensuremath{\mathcal X}\xspace}
\newcommand{\NB}{\ensuremath{\mathbb N}\xspace}
\newcommand{\ZB}{\ensuremath{\mathbb Z}\xspace}
\newcommand{\nfrac}{\nicefrac}
\newcommand{\eps}{\ensuremath{\varepsilon}\xspace}
\renewcommand{\epsilon}{\eps}
\newcommand{\ignore}[1]{}
\newcommand{\pr}{\ensuremath{\prime}}
\newcommand{\prr}{\ensuremath{{\prime\prime}}}
\renewcommand{\ge}{\geqslant}
\renewcommand{\le}{\leqslant}
\crefname{theorem}{theorem}{\bf Theorems}
\crefname{observation}{observation}{\bf Observations}
\crefname{lemma}{lemma}{\bf Lemmas}
\crefname{corollary}{corollary}{\bf Corollaries}
\crefname{proposition}{proposition}{\bf Propositions}
\crefname{definition}{definition}{\bf Definitions}
\crefname{claim}{claim}{\bf Claims}
\crefname{reductionrule}{reduction rule}{\bf Reduction rules}
\title{Local Distance Constrained Bribery in Voting}
\begin{document}

\longversion{
\maketitle
\begin{abstract}
 Studying complexity of various bribery problems has been one of the main research focus in computational social choice. In all the models of bribery studied so far, the briber has to pay every voter some amount of money depending on what the briber wants the voter to report and the briber has some budget at her disposal. Although these models successfully capture many real world applications, in many other scenarios, the voters may be unwilling to deviate too much from their true preferences. In this paper, we study the computational complexity of the problem of finding a preference profile which is as close to the true preference profile as possible and still achieves the briber's goal subject to budget constraints. We call this problem \ODB. We consider three important measures of distances, namely, swap distance, footrule distance, and maximum displacement distance, and resolve the complexity of the optimal bribery problem for many common voting rules. We show that the problem is polynomial time solvable for the plurality and veto voting rules for all the three measures of distance. On the other hand, we prove that the problem is \NPC for a class of scoring rules which includes the Borda voting rule, maximin, Copeland$^\alpha$ for any $\alpha\in[0,1]$, and Bucklin voting rules for all the three measures of distance even when the distance allowed per voter is $1$ for the swap and maximum displacement distances and $2$ for the footrule distance even without the budget constraints (which corresponds to having an infinite budget). For the $k$-approval voting rule for any constant $k>1$ and the simplified Bucklin voting rule, we show that the problem is \NPC for the swap distance even when the distance allowed is $2$ and for the footrule distance even when the distance allowed is $4$ even without the budget constraints.\longversion{ We complement these hardness results by showing that the problem for the $k$-approval and simplified Bucklin voting rules is polynomial time solvable for the swap distance if the distance allowed is $1$ and for the footrule distance if the distance allowed is at most $3$.} For the $k$-approval voting rule for the maximum displacement distance for any constant $k>1$, and for the simplified Bucklin voting rule for the maximum displacement distance, we show that the problem is \NPC (with the budget constraints) and, without the budget constraints, they are polynomial time solvable.

\end{abstract}

}

\shortversion{

\keywords{Computational social choice; bribery; algorithms; complexity}

\maketitle
}

\section{Introduction}

\begin{table*}
\centering
\begin{adjustbox}{max width=\textwidth}
\shortversion{\renewcommand{\arraystretch}{1.5}}
\longversion{\renewcommand{\arraystretch}{2.5}}
 \begin{tabular}{|c|c|c|c|}\hline
  \multirow{2}{*}{Voting rule} & \multicolumn{3}{c|}{Distance Metric}\\\cline{2-4}
  &Swap & Footrule & Maximum displacement\\\hline\hline
  
  Plurality & \multicolumn{3}{c|}{\multirow{2}{*}{$^\star$\Pb [\Cref{thm:plurality_poly}]}}\\\cline{1-1}
  
  Veto & \multicolumn{3}{c|}{}\\\hline
  
  $k$-approval & \makecell{$^\star$\Pb for $\delta=1$ [\Cref{thm:kapp_poly}]\\ \NPC for $\delta= 2$ [\Cref{thm:ob_kapp_swap}]} & \makecell{$^\star$\Pb for $\delta\le 3$ [\Cref{thm:kapp_poly}]\\ \NPC for $\delta= 4$ [\Cref{thm:ob_kapp_swap}]} & \makecell{\Pb [\Cref{thm:kapp_maxdis_poly}], $^\star$\Pb for $\delta_i=1, \forall i$ [\Cref{thm:kapp_poly}]\\$^\star$\NPC for $\delta_i=2, \forall i$ [\Cref{thm:kapp_max_hard}]} \\\hline
  
  Borda & \makecell{\NPC for $\delta= 1$ [\Cref{thm:borda_max}]} & \makecell{\NPC for $\delta=2$ [\Cref{thm:borda_max}]}& \makecell{\NPC for $\delta=1$ [\Cref{thm:borda_max}]}\\\hline
  
  Maximin & \makecell{\NPC for $\delta= 1$ [\Cref{thm:ob_maximin_swap}]} & \makecell{\NPC for $\delta= 2$ [\Cref{thm:ob_maximin_swap}]}& \makecell{\NPC for $\delta= 1$ [\Cref{thm:ob_maximin_swap}]} \\\hline
  
  \makecell{Copeland$^\alpha, \alpha\in[0,1]$} &\makecell{\NPC for $\delta= 1$ [\Cref{thm:ob_copeland_swap}]} & \makecell{\NPC for $\delta= 2$ [\Cref{thm:ob_copeland_swap}]}& \makecell{\NPC for $\delta= 1$ [\Cref{thm:ob_copeland_swap}]}\\\hline

  Simplified Bucklin & \makecell{$^\star$\Pb for $\delta=1$  [\Cref{thm:bucklin_poly}]\\ \NPC for $\delta= 2$ [\Cref{thm:ob_bucklin_swap}]} & \makecell{$^\star$\Pb for $\delta\le 3$  [\Cref{thm:bucklin_poly}]\\ \NPC for $\delta= 4$ [\Cref{thm:ob_bucklin_swap}]} & \makecell{\Pb [\Cref{thm:bucklin_maxdis_poly}], $^\star$\Pb for $\delta_i=1, \forall i$ [\Cref{thm:kapp_poly}]\\$^\star$\NPC for $\delta_i=2, \forall i$ [\Cref{thm:ob_bucklin_maxdis_hard}]} \\\hline

  Bucklin & \makecell{\NPC for $\delta= 1$ [\Cref{thm:ob_bucklin}]} & \makecell{\NPC for $\delta=2$ [\Cref{thm:ob_bucklin}]}& \makecell{\NPC for $\delta=1$ [\Cref{thm:ob_bucklin}]}\\\hline
 \end{tabular}
\end{adjustbox}
 \caption{The results marked $\star$ hold for the \ODB problem; others hold for the \OB problem.}\label{tbl:summary}
\end{table*}

Aggregating preferences of a set of agents over a set of alternatives is a fundamental problem in many applications both in real life and artificial intelligence\shortversion{~\cite{Cohen,PennockHG00}}.\longversion{ Voting has often served as a natural tool for aggregating preferences in such applications. Pioneering use of voting in key applications of artificial intelligence includes spam detection~\cite{Cohen}, collaborative filtering~\cite{PennockHG00}, etc.} A typical voting setting consists of a set of alternatives, a set of agents each having a preference which is a complete order over the alternatives, and a voting rule which declares one or more alternatives as the winner(s) of the election.

However any such election scenario is susceptible to control attacks of various kinds -- internal or external agents may try to influence the election system in someone's favor. One such attack which has been studied extensively in computational social choice is {\em bribery}. In every model of bribery studied so far (see~\cite{DBLP:reference/choice/FaliszewskiR16}), we have the preferences of a set of voters, an external agent called briber with some budget, a bribing model which dictates how much one has to bribe any voter to persuade her to cast a vote of briber's choice, and the computational problem is to check whether it is possible to bribe the voters subject to the budget constraint so that some alternative of briber's choice becomes the winner. This models not only serve as a true theoretical abstraction of various real world scenarios but also generalizes many other important control attacks, for example, coalitional manipulation~\cite{bartholdi1989computational,conitzer2007elections}. In this paper, we study a refinement of the above bribery model motivated by the following important observation made by Obraztsova and Elkind~\cite{DBLP:conf/aaai/ObraztsovaE12,DBLP:conf/aamas/ObraztsovaE12}

\vspace{-1ex}
\begin{quote}
{\em ``...if voting is public (or
if there is a risk of information leakage), and a voter's preference is at least somewhat known to her friends and colleagues, she may be worried that voting non-truthfully can harm her reputation yet hope that she will not be caught if her vote is sufficiently similar to her true ranking. Alternatively, a voter who is uncomfortable about manipulating an election for ethical reasons may find a lie more palatable if it does not require her to re-order more than a few candidates.''}
\end{quote}

Indeed, in the context of bribery, there can be situations where a voter may be bribed to report some preference which ``resembles'' her true preference but a voter is simply unwilling to report any preference which is far from her true preference. We remark that existing models of bribery do not capture the above constraint since, intuitively speaking, the budget feasibility constraint in these models restricts the total money spent (which is a global constraint) whereas the situations above demand (local) constraints per voter. For example, let us think of a voter $v$ with preference $a\suc b\suc c$. Suppose the voter $v$ can be persuaded to make at most two swaps and the cost of persuading her does not depend on the number of swaps she performs in her preference. This could be the situation when she is happy to change her preference as briber advises (simply because she trusts the briber that her change will finally ensure a better social outcome) but does not wish to deviate from her own preference too much to avoid social embarrassment. One can see that the classical model of bribery (\SWB for example) fails to capture the intricacies of this situation (for example, making the cost per swap to be $0$ fails because the voter $v$ is not willing to cast $c\suc b\suc a$). In this paper, we fill this research gap by proposing a bribery model which directly addresses these scenarios.

More specifically, we study the computational complexity of the following problem which we call \ODB. Given preferences $\PP=(\suc_i)_{i\in[n]}$ of a set of agents, non-negative integers $(\delta_i)_{i\in[n]}$ denoting the distance change allowed for corresponding agents, non-negative integers $(p_i)_{i\in[n]}$ where $p_i$ denotes the amount one has to pay the voter $i$ to change his or her preference, a non-negative integer budget $\BB$, and an alternative $c$, compute if the preferences can be changed subject to the ``price, distance, and budget constraints'' so that $c$ is a winner in the resulting election for some voting rule. We also study an interesting special case of the \ODB problem where $\delta_i=\delta$ for some non-negative integer \delta and $p_i=0$ for every $i$ and $\BB=0$; we call the latter problem \OB. In this paper, we study the following commonly used distance functions on the set of all possible preferences (permutations on the set of alternatives): (i) swap distance~\cite{kendall1938new}, (ii) footrule distance~\cite{spearman1904proof}, and (iii) maximum displacement distance~\shortversion{\shortcite{DBLP:conf/aaai/ObraztsovaE12,DBLP:conf/aamas/ObraztsovaE12}}\longversion{\cite{DBLP:conf/aaai/ObraztsovaE12,DBLP:conf/aamas/ObraztsovaE12}}. The swap distance (aka Kendall Tau distance, bubble sort distance, etc.) between two preferences is the number of pairs of alternatives which are ranked in different order in these two preferences. Whereas the footrule distance (maximum displacement distance respectively) between two preferences is the sum (maximum respectively) of the absolute value of the differences of the positions of every alternative in two preferences. We refer to \Cref{sec:prelim} for formal definitions of the problems and the distance functions above.

\subsection{Contribution}

We study the computational complexity of the \ODB and \OB problems for the plurality, veto, $k$-approval, a class of scoring rules which includes the Borda voting rule [\Cref{cor:scr}], maximin, Copeland$^\alpha$ for any $\alpha\in[0,1]$, Bucklin, and simplified Bucklin voting rules for the swap, footrule, and maximum displacement distance. We summarize our results in \Cref{tbl:summary}. We highlight that all our results are tight in the sense that we even find the exact value of $\delta$ till which the \OB problem for some particular voting rule is polynomial time solvable and beyond which it is \NPC. As can be observed in \Cref{tbl:summary}, most of our \NP-completeness results (except \Cref{thm:kapp_max_hard,thm:ob_bucklin_maxdis_hard} for which the corresponding \OB problems are polynomial time solvable) hold even for the \OB problem (and thus for the \ODB problem too) and even for small constant values for \delta. On the other hand, most of our polynomial time algorithms (except \Cref{thm:kapp_poly,thm:bucklin_maxdis_poly} for which the corresponding \ODB problems are \NPC) work for the general \ODB problem (and thus for the \OB problem too). We would like to highlight a curious case -- for the maximum displacement distance, the \OB problem is polynomial time solvable for the simplified Bucklin voting rule (for any \delta) and \NPC for the Bucklin voting rule even for $\delta=1$. To the best of our knowledge, this is the first instance where a natural problem is polynomial time solvable for the simplified Bucklin voting rule and \NPC for the Bucklin voting rule. We also observe that, unlike the optimal manipulation problem in~\shortversion{\shortcite{DBLP:conf/aaai/ObraztsovaE12,DBLP:conf/aamas/ObraztsovaE12}}\longversion{\cite{DBLP:conf/aaai/ObraztsovaE12,DBLP:conf/aamas/ObraztsovaE12}}, the complexity of the \OB problem for some common voting rule ($k$-approval with $k>1$ and simplified Bucklin voting rules for example) can depend significantly on the distance function under consideration.

\subsection{Related Work}

Faliszewski et al.~\longversion{\cite{faliszewski2006complexity}}\shortversion{\shortcite{faliszewski2006complexity}} propose the first bribery problem where the briber's goal is to change a minimum number of preferences to make some candidates win the election. Then they extend their basic model to more sophisticated models of \SHB and \DB~\longversion{\cite{faliszewski2009hard,faliszewski2009llull}}\shortversion{\shortcite{faliszewski2009hard,faliszewski2009llull}}. Elkind et al.~\longversion{\cite{elkind2009swap}}\shortversion{\shortcite{elkind2009swap}} extend this model further and study the \SWB problem where there is a cost associated with every swap of alternatives. Dey et al.~\longversion{\cite{frugalDeyMN16}}\shortversion{\shortcite{frugalDeyMN16}} show that the bribery problem remains intractable for many common voting rules for an interesting special case which they call \FB. The bribery problem has also been studied in various other preference models, for example, truncated ballots~\cite{BaumeisterFLR12}, soft constraints~\cite{pini2013bribery}, approval ballots~\cite{schlotter2017campaign}, campaigning in societies~\cite{faliszewski2018opinion}, CP-nets~\cite{dorn2014hardness}, combinatorial domains~\cite{mattei2012bribery}, iterative elections~\cite{maushagen2018complexity}, committee selection~\cite{bredereck2016complexity}, probabilistic lobbying~\cite{erdelyi2009complexity}, etc. Erdelyi et al.~\cite{erdelyi2014bribery} study the bribery problem under voting rule uncertainty. Faliszewski et al.~\longversion{\cite{faliszewski2014complexity}}\shortversion{\shortcite{faliszewski2014complexity}} 
study bribery for the simplified Bucklin and the Fallback voting rules. Xia~\longversion{\cite{xia2012computing}}\shortversion{\shortcite{xia2012computing}}, and  Kaczmarczyk and Faliszewski~\longversion{\cite{kaczmarczyk2016algorithms}}\shortversion{\shortcite{kaczmarczyk2016algorithms}} study the destructive variant of bribery. Dorn and Schlotter~\longversion{\cite{dorn2012multivariate}}\shortversion{\shortcite{dorn2012multivariate}} and Bredereck et al.~\longversion{\cite{bredereck2014prices}}\shortversion{\shortcite{bredereck2014prices}} explore parameterized complexity of various bribery problems. Chen et al.~\longversion{\cite{chen2018protecting}}\shortversion{\shortcite{chen2018protecting}} provide novel mechanisms to protect elections from bribery. Knop et al.~\longversion{\cite{Knop}}\shortversion{\shortcite{Knop}} provide a uniform framework for various control problems. Although most of the bribery problems are intractable, few of them, \SHB for example, have polynomial time approximation algorithms~\cite{elkind2010approximation,DBLP:conf/aaai/KellerHH18}. Manipulation, a specialization of bribery, is another fundamental attack on election~\cite{DBLP:reference/choice/ConitzerW16}. In the manipulation problem, a set of voters (called manipulators) wants to cast their preferences in such a way that (when tallied with the preferences of other preferences) makes some alternative win the election. Obraztsova and Elkind~\shortversion{\shortcite{DBLP:conf/aaai/ObraztsovaE12,DBLP:conf/aamas/ObraztsovaE12}}\longversion{\cite{DBLP:conf/aaai/ObraztsovaE12,DBLP:conf/aamas/ObraztsovaE12}} initiate the study of optimal manipulation in that context.

The rest of the paper has been organized as follows. We introduce our notation and formally define our computational problems in \Cref{sec:prelim}; then we present our polynomial time algorithms and \NP-completeness results in respectively \Cref{sec:poly} and \Cref{sec:hard}; we finally conclude with future research directions in \Cref{sec:con}. A short version of our work will appear in~\cite{optbribery2019}.

\section{Preliminaries}\label{sec:prelim}

\longversion{\subsection{Voting and Voting Rules}}

For a positive integer $k$, we denote the set $\{1, 2, \ldots, k\}$ by $[k]$. Let $\AA=\{a_i: i\in[m]\}$ be a set of $m$ {\em alternatives}. A complete order over the set \AA of alternatives is called a {\em preference}. We denote the set of all possible preferences over \AA by $\LL(\AA)$. A tuple $(\suc_i)_{i\in[n]}\in\LL(\AA)^n$ of $n$ preferences is called a {\em profile}. We say that an alternative $a\in\AA$ is placed at the $\el^{th}$ position (from left or from top) in a preference $\suc\in\LL(\AA)$ for some positive integer \el (and denote it by $\text{pos}(a,\suc)$) if $|\{b\in\AA: b\suc a\}|=\el-1$. We say that the distance of two alternatives $a,b\in\AA$ in a preference $\suc\in\LL(\AA)$ is some positive integer $k$ if there exists a positive integer \el such that the positions of $a$ and $b$ in \suc are either $\el$ and $\el+k$ respectively or $\el+k$ and \el respectively. An {\em election} \EE is a tuple $(\suc,\AA)$ where \suc is a profile over a set \AA of alternatives. If not mentioned otherwise, we denote the number of alternatives and the number of preferences by $m$ and $n$ respectively. A map $r:\uplus_{n,|\mathcal{A}|\in\mathbb{N}^+}\mathcal{L(A)}^n\longrightarrow 2^\mathcal{A}\setminus\{\emptyset\}$ is called a \emph{voting rule}. Given an election $\EE$, we can construct from \EE a directed weighted graph $\GG_\EE$ which is called the \textit{weighted majority graph} of \EE. The set of vertices in $\GG_\EE$ is the set of alternatives in $\EE$. For any two alternatives $x$ and $y$, the weight of the edge $(x,y)$ is $\DD_\EE(x,y) = \NN_\EE(x,y) - \NN_\EE(y,x)$, where $\NN_\EE(a, b)$ is the number of preferences where the alternative $a$ is preferred over the alternative $b$ for $a,b\in\AA, a\ne b$. Examples of some common voting rules are as follows.

\begin{itemize}[leftmargin=0cm,itemindent=.5cm,labelwidth=\itemindent,labelsep=0cm,align=left]
 \item \textbf{Positional scoring rules:} An $m$-dimensional vector $\alpha=\left(\alpha_1,\alpha_2,\dots,\alpha_m\right)\in\mathbb{N}^m$ with $\alpha_1\ge\alpha_2\ge\dots\ge\alpha_m$ and $\alpha_1>\alpha_m$ for every $m\in \mathbb{N}$ naturally defines a voting rule --- an alternative gets score $\alpha_i$ from a preference if it is placed at the $i^{th}$ position, and the score of an alternative is the sum of the scores it receives from all the preferences. The winners are the alternatives with the maximum score. \longversion{Scoring rules remain unchanged if we multiply every $\alpha_i$ by any constant $\lambda>0$ and/or add any constant $\mu$. Hence, we can assume without loss of generality that for any score vector $\alpha$, we have $gcd((\alpha_i)_{i\in[m]})=1$ and there exists a $j< m$ such that $\alpha_\el = 0$ for all $\el>j$. We call such an $\alpha$ a {\em normalized} score vector.} For some $k\in[m-1]$, if $\alpha_i$ is $1$ for $i\in [k]$ and $0$ otherwise, then, we get the {\em $k$-approval} voting rule. \longversion{The {\em $k$-approval} voting rule is also called the {\em $(m-k)$-veto} voting rule.} The $1$-approval voting rule is called the {\em plurality} voting rule and the $(m-1)$-approval voting rule is called the {\em veto} voting rule. If $\alpha_i=m-i$ for every $i\in[m]$, then we get the {\em Borda} voting rule.
 
 \item \textbf{Maximin:} The maximin score of an alternative $x$ is $\min_{y\ne x} \DD_\EE(x,y)$. The winners are the alternatives with the maximum maximin score.
 
 \item \textbf{Copeland$^{\alpha}$:} Given $\alpha\in[0,1]$, the Copeland$^{\alpha}$ score of an alternative $x$ is $|\{y\ne x:\DD_\EE(x,y)>0\}|+\alpha|\{y\ne x:\DD_\EE(x,y)=0\}|$. The winners are the alternatives with the maximum Copeland$^{\alpha}$ score. 
 
 \item \textbf{Simplified Bucklin and Bucklin:} The simplified Bucklin score of an alternative $x$ is the minimum number $\ell$ such that $x$ is placed within the first $\ell$ positions in more than half of the preferences. The winners are the alternatives with the lowest simplified Bucklin score. Let $k$ be the minimum simplified Bucklin score of any alternative. Then the Bucklin winners are the alternatives who appear the maximum number of times within the first $k$ positions.
\end{itemize}

\longversion{\subsection{Distance Function}}

A distance function $d$ takes two preferences on a set \AA of alternatives as input and outputs a non-negative number which satisfies the following properties: (i) $d(\suc_1,\suc_2)=0$ for $\suc_1,\suc_2\in\LL(\AA)$ if and only if $\suc_1=\suc_2$, (ii) $d(\suc_1,\suc_2)=d(\suc_2,\suc_1)$ for every $\suc_1, \suc_2\in\LL(\AA)$, (iii) $d(\suc_1,\suc_2)\le d(\suc_1,\suc_3)+d(\suc_3,\suc_2)$ for every $\suc_1, \suc_2, \suc_3\in\LL(\AA)$. We will consider the following distance functions in this paper. 
\begin{itemize}
 \item {\bf Swap distance:} $$d_{\text{swap}}(\suc_1,\suc_2)=\big|\{\{a,b\}\subset\AA: a\suc_1 b, b\suc_2 a\}\big|$$
 \item {\bf Footrule distance:} $$d_{\text{footrule}}(\suc_1,\suc_2)=\sum_{a\in\AA} \Big|\text{pos}(a,\suc_1)-\text{pos}(a,\suc_2)\Big|$$
 \item {\bf Maximum displacement distance:} $$d_{\text{max displacement}}(\suc_1,\suc_2)=\max_{a\in\AA} \Big|\text{pos}(a,\suc_1)-\text{pos}(a,\suc_2)\Big|$$
\end{itemize}

It is well known that, for any two preferences $\suc_1, \suc_2\in\LL(\AA)$, we have $d_{\text{swap}}(\suc_1,\suc_2)\le d_{\text{footrule}}(\suc_1,\suc_2)\le 2 d_{\text{swap}}(\suc_1,\suc_2)$~\cite{diaconis1977spearman}. Let $r$ be any voting rule and $d$ be any distance function on $\LL(\AA)$. We now define our computational problem formally for any distance function $d$ on $\LL(\AA)$.

 \noindent\fbox{\begin{minipage}{\linewidth}
\begin{definition}[$d$-\OB]\label{def:prob}
 Given a set \AA of alternatives, a profile $\suc=(\suc_i)_{i\in[n]}\in\LL(\AA)^n$ of $n$ preferences, a positive integer $\delta$, and an alternative $c\in\AA$, compute if there exists a profile $\suc^\pr=(\suc^\pr_i)_{i\in[n]}\in\LL(\AA)^n$ such that
 \begin{enumerate}[(i),leftmargin=*]
  \item $d(\suc_i,\suc^\pr_i) \le \delta$ for every $i\in[n]$
  \item $r(\suc^\pr)=\{c\}$
 \end{enumerate} 
 We denote an instance of \OB by $(\AA,\PP,c,\delta)$.
\end{definition}
 \end{minipage}}

\longversion{\vspace{2ex}}
 \noindent\fbox{\begin{minipage}{\linewidth}
\begin{definition}[$d$-\ODB]\label{def:prob_dollar}
 Given a set \AA of alternatives, a profile $\suc=(\suc_i)_{i\in[n]}\in\LL(\AA)^n$ of $n$ preferences, positive integers $(\delta_i)_{i\in[n]}$ denoting distances allowed for every preference, non-negative integers $(p_i)_{i\in[n]}$ denoting the prices of every preference, a non-negative integer \BB denoting the budget of the Briber, and an alternative $c\in\AA$, compute if there exists a subset $J\subseteq[n]$ and a profile $\suc^\pr=(\suc^\pr_i)_{i\in J}\in\LL(\AA)^{|J|}$ such that
 \begin{enumerate}[(i),leftmargin=*]
  \item $\sum_{i\in J} p_i \le \BB$
  \item $d(\suc_i,\suc^\pr_i) \le \delta_i$ for every $i\in J$
  \item $r\left((\suc^\pr_i)_{i\in J}, (\suc_i)_{i\in[n]\setminus J}\right)=\{c\}$
 \end{enumerate} 
 We denote an instance of \ODB by $(\AA,\PP,c,(\delta_i)_{i\in[n]},(p_i)_{i\in[n]})$.
\end{definition}
 \end{minipage}}

We remark that the optimal bribery problem, as described in \Cref{def:prob}, demands the alternative $c$ to win uniquely. It is equally motivating to demand that $c$ is a co-winner. As far as the optimal bribery problem is concerned, we can easily verify that all our results, both algorithmic and hardness, extend easily to the co-winner case. However, we note that it need not always be the case in general (see Section 1.1 in~\cite{DBLP:journals/jair/XiaC11} for example).

\section{Polynomial Time Algorithms}\label{sec:poly}

In this section, we present our polynomial time algorithms. All our polynomial time algorithms are obtained by reducing our problem to the maximum flow problem which is quite common in computational social choice (see for example \cite{faliszewski2008nonuniform}). In the maximum flow problem, the input is a directed graph with two special vertices $s$ and $t$ and (positive) capacity for every edge, and the goal is to compute the value of maximum flow that can be sent from $s$ to $t$ subject to the capacity constraints of every edge. This problem is known to have polynomial time algorithms. Moreover, it is also known that, if the capacity of every edge is positive integer, then the value of maximum flow is also a positive integer, the flow in every edge is also a non-negative integer, and such a maximum flow can be found in polynomial time~\cite{cormen2009introduction}.

\begin{theorem}\label{thm:plurality_poly}
 The \ODB problem is polynomial time solvable for the plurality and veto voting rules for the swap, footrule, and maximum displacement distance.
\end{theorem}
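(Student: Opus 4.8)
The plan is to handle, uniformly for plurality and veto and for all three distance measures, the $d$-\ODB problem by reducing it to polynomially many minimum-cost flow computations. The key structural fact is that under both rules the winner is determined by a single position of each preference --- the top position for plurality, the bottom position for veto --- so the only relevant datum of a (possibly bribed) preference $\suc^\pr_i$ is its extreme entry.

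First I would record a local-cost observation. Fix a voter $i$, an alternative $a$, and put $q=\text{pos}(a,\suc_i)$. Among all $\suc^\pr_i\in\LL(\AA)$ that place $a$ first, $\min d(\suc_i,\suc^\pr_i)$ equals $q-1$ for the swap and maximum-displacement distances and $2(q-1)$ for the footrule distance, the minimizer being the preference that moves $a$ to the top and shifts every alternative originally above $a$ down by one. The lower bounds are immediate: $a$ alone crosses $q-1$ pairs and is displaced by $q-1$, and for footrule the $q-1$ alternatives originally above $a$ must occupy $q-1$ distinct positions in $\{2,\dots,m\}$ afterwards, forcing at least $q-1$ further displacement. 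Symmetrically, placing $a$ last costs exactly $m-q$ (swap, maximum displacement) or $2(m-q)$ (footrule). Two consequences follow: because only the extreme entry matters, we may assume any bribed voter's new preference is exactly one such shift; and whether voter $i$ can be persuaded, within its own bound $\delta_i$, to rank a given alternative $a_j$ first (respectively last) reduces to a threshold test on $\text{pos}(a_j,\suc_i)$.

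Next, for plurality I would guess the final plurality score $s\in\{1,\dots,n\}$ of $c$ and build a flow network with a source, a sink, a node $v_i$ per voter and a node $u_j$ per alternative. Each arc from the source to $v_i$ has capacity $1$ and cost $0$. For every pair $(i,j)$ such that voter $i$ can be made to rank $a_j$ first within $\delta_i$ there is an arc $v_i\to u_j$ of capacity $1$, with cost $0$ if $a_j$ already tops $\suc_i$ and cost $p_i$ otherwise. Each arc $u_j\to$ sink with $a_j\ne c$ has capacity $s-1$, and the arc $u_c\to$ sink is required to carry exactly $s$ units. A unit of flow through $v_i$ and then $u_j$ means ``voter $i$ ends up ranking $a_j$ first'', the arc costs charge $p_i$ precisely for the bribed voters, and the capacities force $c$ to be the unique winner with score $s$; hence the original instance is a \YES-instance if and only if, for some $s$, this network admits an integral feasible flow of value $n$ and total cost at most $\BB$ --- both directions using the local-cost observation, with bribed preferences reconstructed as the corresponding shifts. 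The veto case is the mirror image: guess the number $\ell\in\{0,\dots,n\}$ of voters vetoing $c$ in the final profile, let the arcs $v_i\to u_j$ encode ``voter $i$ ranks $a_j$ last'' (with the corresponding threshold and cost), give the arc $u_c\to$ sink capacity $\ell$, and impose lower bound $\ell+1$ on every arc $u_j\to$ sink with $a_j\ne c$; again a feasible value-$n$ flow of cost at most $\BB$ corresponds exactly to $c$ being the unique veto winner within budget. Trying all $O(n)$ guesses and solving each minimum-cost flow with lower bounds in polynomial time (lower bounds removed by the standard transformation, integrality for free since all capacities are integral) yields the claimed algorithm.

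The step I expect to demand the most care is the two-way correspondence between flows and bribery solutions: arguing that restricting every bribed voter's new preference to the single shift permutation of the local-cost observation loses nothing (for plurality and veto only the extreme entry is relevant, and that shift simultaneously minimizes the distance), and that encoding the budget through arc costs is legitimate because each voter carries exactly one unit of flow and is therefore charged its flat price $p_i$ at most once. Everything else --- the threshold tests, and the polynomial-time solvability and integrality of minimum-cost flow with lower bounds --- is routine.
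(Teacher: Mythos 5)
Your proposal is correct and follows essentially the same route as the paper: guess the final plurality (resp.\ veto) score of $c$ and reduce to a minimum-cost flow with demands/lower bounds on edges, where the admissible arcs $v_i\to u_j$ are determined by exactly the positional threshold tests you derive and the costs charge $p_i$ only to genuinely bribed voters. The only differences are cosmetic --- you route all voters rather than only those not already placing $c$ at the relevant extreme position --- and your local-cost observation is in fact stated more carefully than the corresponding thresholds in the paper's proof.
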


\begin{proof}
 Let us first prove the result for the plurality voting rule. Let $\PP=(\suc_i)_{i\in[n]}$ be the input profile, and $x$ the distinguished alternative. Let the plurality score of any alternative $a\in\AA$ in \PP be $s(a)$. We first guess the final score $\el_x$ of $x$ in the range $s(x)$ to $n$. Let \QQ be the sub-profile of \PP consisting of preferences which do not place $x$ at their first position. For any preference $\suc_i\in\QQ$, we compute the set of alternatives $\AA_i\subseteq\AA$ which can be placed at the first position keeping the distance from $\suc_i$ at most $\delta_i$. We observe that, for the swap and maximum displacement distances, $\AA_i$ is the set of alternatives which appear within the first $\delta_i+1$ positions in $\suc_i$; for the footrule distance, $\AA_i$ is the set of alternatives which appear within the first $\lfloor\nfrac{\delta_i}{2}\rfloor$ 
 positions in $\suc_i$. We now create the following minimum cost flow network \GG with demand on edges. It is well known that a minimum cost flow (of certain flow value) satisfying demands of all the edges can be found in polynomial time~(see for example \longversion{\cite{flow}}\shortversion{\shortcite{flow}}). 
\longversion{\vspace{2ex}}

 \noindent\fbox{\begin{minipage}{\linewidth}
 \begin{gather*}
  \VV[\GG]=\{u_i: \suc_i\in\QQ\} \cup \{v_a: a\in\AA\} \cup \{s,t\}\\
  \EE[\GG]=\{(s,u_i): \suc_i\in\QQ\} \cup \{(v_a,t): a\in\AA\}\\
  \cup \{(u_i,v_a): \suc_i\in\QQ, a\in\AA_i\}\\
  \text{capacity}(v_x,t)=\el_x-s(x);\text{capacity}(v_a,t)=\el_x-1 \forall a\in\AA\setminus\{x\}\\ \text{capacity of every other edge is }1\\
  \text{demand}(v_x,t)=\el_x-s(x); \text{demand of every other edge is }0\\
  \text{cost}(u_i,v_a)=p_i \forall \suc_i\in\QQ, a\in\AA_i, a\text{ does not appear}\\\text{at the first position in }\suc_i;\text{ cost of other edges is }0
 \end{gather*}
 \end{minipage}}
\longversion{\vspace{2ex}}
 
 From the construction it follows that the input \ODB instance has a successful bribery where the final plurality score of the alternative $x$ is at least $\el_x$ if and only if there is an $s-t$ flow in \GG of flow value $|\QQ|$ with cost at most \BB. Since there are at most $n$ possible values that $\el_x$ can take and we try all of them, the \ODB problem for the plurality voting rule for the three distance functions is polynomial time solvable.
 
 We now prove the result for the veto voting rule. We first guess the final veto score $\el_x$ of $x$ in the range $0$ to $n$. For any preference $\suc_i\in\PP$, we compute the set of alternatives $\AA_i\subseteq\AA$ which can be placed at the last position keeping the distance from $\suc_i$ at most $\delta_i$. We observe that, for all the three the distance functions under consideration, $\AA_i$ is the set of all alternatives which appear within the last $\delta_i$ positions in $\suc_i$.  We now create the following minimum cost flow network \GG with demand on edges. 
 \longversion{\vspace{2ex}}
 
 \noindent\fbox{\begin{minipage}{\linewidth}
 \begin{gather*}
  \VV[\GG]=\{u_i: \suc_i\in\PP\} \cup \{v_a: a\in\AA\} \cup \{s,t\}\\
  \EE[\GG]=\{(s,u_i): \suc_i\in\PP\} \cup \{(v_a,t): a\in\AA\}\\
  \cup \{(u_i,v_a): \suc_i\in\PP, a\in\AA_i\}\\
  \text{capacity}(v_x,t)=\el_x;\text{capacity}(v_a,t)=n \forall a\in\AA\setminus\{x\};\\ \text{capacity of every other edge is }1\\
  \text{demand}(v_a,t)=\el_x+1\forall a\in\AA\setminus\{x\}; \text{demand of every other edge is }0\\
  \text{cost}(u_i,v_a)=p_i \forall \suc_i\in\PP, a\in\AA_i, a\text{ does not appear}\\\text{at the last position in }\suc_i;\text{ cost of other edges is }0
 \end{gather*}
 \end{minipage}}
\longversion{\vspace{2ex}}
 
 From the construction it follows that the input \ODB instance has a successful bribery where the final veto score of the alternative $x$ is at most $\el_x$ if and only if there is an $s-t$ flow in \GG of flow value $|\PP|$ with cost at most \BB. Since there are at most $n$ possible values that $\el_x$ can take and we try all of them, the \ODB problem for the veto voting rule for the three distance functions is polynomial time solvable.
\end{proof}

We prove the following results by reducing to the maximum flow problem.

\begin{theorem}\label{thm:kapp_poly}2
 The \ODB problem is polynomial time solvable for the $k$-approval voting rule for any $k$ for the swap distance distance and the maximum displacement distance when $\delta_i=1$ for every preference $i$ and thus for the footrule distance when $\delta_i\le 3$ for every preference $i$.
\end{theorem}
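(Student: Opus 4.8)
The plan is to reduce the problem, after guessing the final $k$-approval score of $c$, to polynomially many minimum-cost flow computations of the same flavour as in the proof of \Cref{thm:plurality_poly}. The decisive first step is the observation that under the swap distance with $\delta_i=1$ — and, as we note below, equally under the maximum displacement distance with $\delta_i=1$ — the only modification of a preference $\suc_i$ that changes \emph{any} $k$-approval score is the \emph{boundary swap} exchanging the two alternatives occupying positions $k$ and $k+1$ of $\suc_i$: a single adjacent transposition at positions $(j,j+1)$ with $j\neq k$ keeps both moved alternatives inside the top $k$ (if $j<k$) or outside it (if $j>k$), so the top-$k$ set, hence every score, is unchanged; and every permutation at maximum displacement distance $1$ from $\suc_i$ is a product of pairwise disjoint adjacent transpositions, of which at most one — the one on positions $k$ and $k+1$, if present — affects the top-$k$ set. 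Hence we may assume that whenever the briber pays $p_i$ to modify preference $i$ she performs exactly the boundary swap of $\suc_i$, and that she never modifies a preference that already has $c$ at position $k$, since that would only cost $c$ a point and is never useful. A boundary swap on $\suc_i$ drops by one the score of the alternative $a_i^{\text{in}}$ at position $k$ of $\suc_i$ and raises by one the score of the alternative $a_i^{\text{out}}$ at position $k+1$; write $s(\cdot)$ for the current scores.

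Now guess the final score $\el_c$ of $c$, which lies in $\{s(c),\dots,n\}$. Since $c$ gains a point exactly from the modified preferences $i$ with $a_i^{\text{out}}=c$, its final score is $\el_c$ iff exactly $q\eqdef\el_c-s(c)$ such preferences are modified, and $c$ is the unique winner iff moreover every $a\neq c$ finishes with score at most $\el_c-1$. Build a network on vertices $\{w_a:a\in\AA\}\cup\{\sigma,\tau\}$ as follows: for every preference $i$ with $a_i^{\text{in}}\neq c$, an arc $w_{a_i^{\text{in}}}\to w_{a_i^{\text{out}}}$ of capacity $1$ and cost $p_i$ (one unit of flow on it means ``modify preference $i$'', moving one approval point from $a_i^{\text{in}}$ to $a_i^{\text{out}}$); an arc $w_c\to\tau$ with lower bound and capacity both equal to $q$; for each ``dangerous'' $a\neq c$ with $s(a)\geq\el_c$, an arc $\sigma\to w_a$ with lower bound $s(a)-\el_c+1$ and no upper bound; and for each remaining $a\neq c$, an arc $\sigma\to w_a$ with no upper bound together with an arc $w_a\to\tau$ of capacity $\el_c-1-s(a)$. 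Flow conservation at $w_a$ makes the net number of approval points leaving $a$ equal to the $\sigma$-inflow minus the $\tau$-outflow at $w_a$, and a routine check then shows that this network admits an integral feasible circulation (after adding the usual return arc $\tau\to\sigma$) of cost at most \BB exactly when there is a bribery respecting every $\delta_i$ and the budget \BB that gives $c$ final score $\el_c$ and every other alternative score at most $\el_c-1$. Since a minimum-cost flow satisfying demands on edges can be found in polynomial time (as already used in the proof of \Cref{thm:plurality_poly}), and we solve one instance per guess of $\el_c$, the \ODB problem for $k$-approval under the swap and maximum displacement distances with $\delta_i=1$ is polynomial-time solvable.

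The one delicate point is the bookkeeping inside the network: a bribe is an atomic move that simultaneously removes a point from $a_i^{\text{in}}$ and adds it to $a_i^{\text{out}}$, which is exactly what conservation along the arcs $w_{a_i^{\text{in}}}\to w_{a_i^{\text{out}}}$ records, and the per-alternative requirements come in three shapes that each must be encoded faithfully — an \emph{equality} at $c$ (exactly $q$ boosting bribes), genuine \emph{lower bounds} at the dangerous alternatives (they must be driven strictly below $c$, which is why arc lower bounds cannot be avoided), and plain upper bounds elsewhere. The footrule statement then follows with no extra work: the footrule distance of two permutations is always even and equals $2$ exactly when they differ by a single adjacent transposition, so for $\delta_i\le 3$ the condition $d_{\text{footrule}}(\suc_i,\suc_i')\le\delta_i$ is equivalent to ``$\suc_i'$ is obtained from $\suc_i$ by at most one adjacent transposition'' when $\delta_i\ge 2$ and to $\suc_i'=\suc_i$ when $\delta_i\le 1$; that is, a footrule instance with $\delta_i\le 3$ is, preference by preference, a swap instance with $\delta_i\in\{0,1\}$ (a preference whose footrule budget is $0$ or $1$ simply cannot be touched), and the algorithm above applies verbatim.
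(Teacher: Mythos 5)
Your proof is correct and follows essentially the same route as the paper's: the paper's (very terse) proof also reduces the problem to a minimum-cost flow computation with edge demands in the style of \Cref{thm:plurality_poly}, relying implicitly on your key observation that under either distance with $\delta_i=1$ only the boundary swap at positions $k$ and $k+1$ can alter any $k$-approval score, and it handles the footrule case via the same equivalence between footrule distance at most $3$ and swap distance at most $1$. Your write-up is considerably more explicit than the paper's one-line sketch (and your transfer-arc circulation is a slightly different but equivalent network encoding), but there is no substantive difference in approach.
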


\longversion{
\begin{proof}
 The proof for the swap and maximum displacement distance is completely analogous to the proof of \Cref{thm:plurality_poly}. The second part of the statement follows from the following claim: for any two preferences $\suc_1,\suc_2\in\LL(\AA)$, if $d_{\text{footrule}}(\suc_1,\suc_2)\le 3$, then $d_{\text{swap}}(\suc_1,\suc_2)\le 1$. If $d_{\text{footrule}}(\suc_1,\suc_2)= 2$, then clearly $d_{\text{swap}}(\suc_1,\suc_2)=1$. Now if, $d_{\text{footrule}}(\suc_1,\suc_2)= 3$, then $d_{\text{swap}}(\suc_1,\suc_2)\ge 1$. However, if $d_{\text{swap}}(\suc_1,\suc_2)=2$, then, by analyzing all the cases, $d_{\text{footrule}}(\suc_1,\suc_2)= 4$ which proves the claims.
\end{proof}
}

\begin{theorem}\label{thm:kapp_maxdis_poly}\shortversion{[\star]}
 The \OB problem is polynomial time solvable for the $k$-approval voting rule for the maximum displacement distance for any $k$.
\end{theorem}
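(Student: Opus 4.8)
The plan is to reduce the problem to a maximum flow computation, after first pinning down exactly which top-$k$ sets a single voter can be pushed to under a maximum-displacement budget of $\delta$.

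\emph{Which top-$k$ sets are reachable.} Fix a preference $\suc\in\LL(\AA)$, and set $p_0=\max(0,k-\delta)$, $p_1=\min(m,k+\delta)$, $g=k-p_0=\min(k,\delta)$, and $w=p_1-p_0$. Let $F$ be the set of alternatives occupying positions $1,\dots,p_0$ in $\suc$ (the ``forced'' alternatives) and $W$ the set occupying positions $p_0+1,\dots,p_1$ (the ``window''). I claim that a $k$-set $S\subseteq\AA$ is the top-$k$ set of some $\suc'$ with $d_{\text{max displacement}}(\suc,\suc')\le\delta$ iff $F\subseteq S\subseteq F\cup W$, i.e.\ every alternative at position $\le p_0$ lies in $S$ and every alternative of $S$ sits at position $\le p_1$ in $\suc$. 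Necessity is immediate from the displacement bound. For sufficiency, form $\suc'$ from $\suc$ by keeping positions $1,\dots,p_0$ and $p_1+1,\dots,m$ untouched and, within the window positions $p_0+1,\dots,p_1$, listing the alternatives of $S\cap W$ first (in their $\suc$-order) and then those of $W\setminus S$ (in their $\suc$-order). Since $|S\cap W|=k-p_0=g$, the top-$k$ set of $\suc'$ is $F\cup(S\cap W)=S$. Every alternative outside the window has displacement $0$; inside it, an alternative of $S\cap W$ at window-index $i$ moves left to window-index at least $i-(w-g)$ (at most $w-g$ alternatives of $W\setminus S$ can precede it), hence by at most $w-g=p_1-k=\min(m-k,\delta)$; and an alternative of $W\setminus S$ at window-index $i$ moves to window-index $i+g-\#(S\cap W\text{ among window-indices}\le i)$, hence by at most $g=\min(k,\delta)$. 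So the displacement is at most $\max(w-g,g)\le\delta$.

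\emph{Reduction to flow.} Let $\PP=(\suc_i)_{i\in[n]}$ be the input profile and $c$ the distinguished alternative; apply the above to each $\suc_i$ to get $F_i$, $W_i$ and the common value $g$, and put $\phi(a)=|\{i:\text{pos}(a,\suc_i)\le p_0\}|$, a quantity the briber cannot change. By the first part, a successful bribery is a choice, for each $i$, of a $g$-subset $X_i\subseteq W_i$; then $c$'s final score is $|\{i:c\in F_i\cup X_i\}|$ and, for $a\ne c$, $a$'s final score is $\phi(a)+|\{i:a\in X_i\}|$. Inserting $c$ into $X_i$ whenever $c\in W_i$ (at the cost of dropping some other member of $X_i$) only raises $c$'s score and lowers a competitor's, so without loss of generality $c$'s final score equals its maximum possible value $\el_c^{\star}:=|\{i:\text{pos}(c,\suc_i)\le p_1\}|$. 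Thus the instance is a \YES instance iff, for every $i$, we can choose $Y_i\subseteq W_i\setminus\{c\}$ of size $g$ (if $c\notin W_i$) or $g-1$ (if $c\in W_i$) so that every $a\ne c$ lies in at most $\el_c^{\star}-1-\phi(a)$ of the sets $Y_i$. If $\phi(a)\ge\el_c^{\star}$ for some $a$, output \NO; otherwise this is a degree-constrained bipartite subgraph problem and is solved, exactly as in \Cref{thm:plurality_poly}, by the flow network with source $s$, sink $t$, a node $u_i$ per voter and a node $v_a$ per alternative $a\ne c$, with $\text{capacity}(s,u_i)=|Y_i|$, an arc $(u_i,v_a)$ of capacity $1$ for each $a\in W_i\setminus\{c\}$, and $\text{capacity}(v_a,t)=\el_c^{\star}-1-\phi(a)$: the answer is \YES iff the maximum $s$--$t$ flow equals $\sum_{i\in[n]}|Y_i|$.

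\emph{Wrap-up and the main obstacle.} By integrality of maximum flow, an optimal flow yields an actual choice of the $Y_i$'s and hence, via the first part, an explicit modified profile; maximum flow is computable in polynomial time, and the easy bookkeeping that $|W_i|\ge g$ always holds follows from $k\le m-1$. I expect the one place needing genuine care to be the reachability characterisation above --- in particular verifying that the ``sort the window'' preference $\suc'$ never displaces an alternative by more than $\delta$, which is precisely where the bounds $\min(m-k,\delta)\le\delta$ and $\min(k,\delta)\le\delta$ are used; once that is established, the reduction to maximum flow is routine.
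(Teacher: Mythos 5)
Your proof is correct and takes essentially the same route as the paper's: your window $W_i$ of positions $\max(0,k-\delta)+1,\dots,\min(m,k+\delta)$ plays the role of the paper's set $\AA_\suc$, and fixing $c$'s score at its maximum attainable value and then routing the remaining $g$ (or $g-1$) free top-$k$ slots of each voter through a bipartite flow network with unit middle arcs and sink capacities $\el_c^\star-1-\phi(a)$ is exactly the paper's construction. The only difference is presentational: you prove the reachability characterisation explicitly (via the sort-the-window witness) where the paper merely asserts it, and your forced/optional split at position $k-\delta$ is the cleaner bookkeeping of the same idea.
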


\longversion{
\begin{proof}
 Let \PP be the input profile and $x$ the distinguished alternative. In every preference in \PP, if it is possible to place $x$ within the first $k$ position, which happens exactly when $x$ appears within the first $k+\delta$ positions in a preference, we place $x$ at the first position and keep the relative ordering of every other alternative the same. Let $\PP^\pr\subseteq\PP$ be the set of such preferences. Then the score of $x$ in the resulting profile is $|\PP^\pr|$; let it be $\el_x$. For any preference $\suc\in\PP$, we compute the set of alternatives $\AA_\suc\subseteq\AA$ for whom there exist two preferences $\suc_1, \suc_2\in\LL(\AA)$ with distance at most \delta each from \suc such that the alternative appears within the first $k$ positions in $\suc_1$ and it does not appear within the first $k$ positions in $\suc_2$. We observe that $\AA_\suc$ is precisely the set of alternatives which appear in positions from $\max\{1,k-\delta\}$ to $\min\{k+\delta,m\}$. Let $\delta^\pr = \min\{k,\delta\}$. For any alternative $y\in\AA\setminus\{x\}$, let $\el_y$ be the number of preferences in \PP where $y$ appears within the first $k-\delta-1$ positions. If there exists an alternative $y\in\AA\setminus\{x\}$ such that $\el_y\ge \el_x$, then we output \NO. Otherwise we create the following flow network
 
 \vspace{2ex}
 \noindent\fbox{\begin{minipage}{\linewidth}
 \begin{align*}
 	\GG &= (\VV,\EE,c,s,t) \text{ where}\\
 	\VV &= \{u_\suc:\suc\in\PP\} \cup \{v_y: y\in\AA\setminus\{x\}\} \cup \{s,t\}\\
 	\EE &= \{(s,u_\suc): \suc\in\PP\} \cup \{(v_y,t): y\in\AA\setminus\{x\}\} \cup \{(u_\suc,v_y): \suc\in\PP, y\in\AA\setminus\{x\}, y\in\AA_\suc\}\\
 	c((v_y,t)) &=\el_x-1-\el_y\; \forall (v_y,t)\in\EE\\
 	c((s,u_\suc)) &=\delta^\pr-1\; \forall (s,u_\suc)\in\EE, \suc\in\PP^\pr\\
 	c((s,u_\suc)) &=\delta^\pr\; \forall (s,u_\suc)\in\EE, \suc\in\PP\setminus\PP^\pr\\
 	c(e) &= 1\;\text{ for every other edge}
 \end{align*}
 \end{minipage}}\vspace{2ex}

 From the construction it is clear that the input instance is a \YES instance if and only if there is an $s-t$ flow in \GG of value $\sum_{\suc\in\PP} c((s,u_\suc))$. Hence the \OB problem for the $k$-approval voting rule for the maximum displacement distance reduces to the maximum flow problem which proves the result.
\end{proof}
}

\begin{theorem}\label{thm:bucklin_poly}
 The \ODB problem is polynomial time solvable for the simplified Bucklin voting rule for the swap distance distance and the maximum displacement distance when $\delta_i=1$ for every preference $i$ and thus for the footrule distance when $\delta_i\le 3$ for every preference $i$.
\end{theorem}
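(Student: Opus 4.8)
The plan is to reduce, for each candidate value $\ell$ of the simplified Bucklin score of $c$, to a single instance of minimum cost flow with demands on edges (the tool already used in \Cref{thm:plurality_poly}), and to try all $O(m)$ values of $\ell$. First I would observe that the three metrics coincide for this purpose. By the elementary fact behind \Cref{thm:kapp_poly}, $d_{\text{footrule}}(\suc_1,\suc_2)\le 3$ holds iff $d_{\text{swap}}(\suc_1,\suc_2)\le 1$; and a permutation at swap distance at most $1$, as well as a permutation at maximum displacement distance at most $1$, has the property that for \emph{every} position $\ell$ the set of alternatives occupying the first $\ell$ positions is either unchanged or obtained by exchanging the two alternatives currently at positions $\ell$ and $\ell+1$ (for maximum displacement distance this needs a short case check using $|\pi(i)-i|\le 1$). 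Hence, relative to a fixed $\ell$, a voter $i$ with $\delta_i\ge 1$ has exactly two relevant choices --- keep her top-$\ell$ set, or exchange positions $\ell$ and $\ell+1$ at cost $p_i$ --- while a voter with $\delta_i=0$ has only the first; in particular the footrule case with $\delta_i\le 3$ becomes the swap case with each $\delta_i\in\{0,1\}$.

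Next I would use the characterization that $c$ is the unique simplified Bucklin winner of a profile iff there is an $\ell\in[m-1]$ such that $c$ lies within the first $\ell$ positions in strictly more than $n/2$ of the preferences, while every other alternative lies within its first $\ell$ positions in at most $n/2$ of them (for ``only if'' take $\ell$ to be the simplified Bucklin score of $c$; for ``if'' use that ``within the first $\ell'$ positions'' implies ``within the first $\ell$ positions'' whenever $\ell'\le\ell$). Fix such an $\ell$; let $t_a$ be the number of input preferences placing $a$ within the first $\ell$ positions, and for a voter $i$ let $a_i,b_i$ be the alternatives at positions $\ell,\ell+1$ of $\suc_i$, so bribing $i$ replaces $(t_{a_i},t_{b_i})$ by $(t_{a_i}-1,t_{b_i}+1)$. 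It is never beneficial to bribe a voter with $a_i=c$, since that only moves $c$ out of its top-$\ell$ set and another alternative in, so we may restrict to voters with $a_i\ne c$.

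Then I would build the network $\GG$ with vertices $\{v_a:a\in\AA\}\cup\{s,t\}$ and arcs: for each voter $i$ with $\delta_i\ge 1$ and $a_i\ne c$, an arc $(v_{a_i},v_{b_i})$ of capacity $1$, cost $p_i$, demand $0$ (a unit of flow here means bribing $i$); for each $a\ne c$, an arc $(s,v_a)$ of capacity $\infty$, cost $0$, demand $\max\{0,t_a-\lfloor n/2\rfloor\}$; for each $a\ne c$, an arc $(v_a,t)$ of capacity $\max\{0,\lfloor n/2\rfloor-t_a\}$, cost $0$, demand $0$; and an arc $(v_c,t)$ of capacity $\infty$, cost $0$, demand $\max\{0,\lfloor n/2\rfloor+1-t_c\}$. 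Flow conservation at $v_a$ forces the net flow into $v_a$ to equal the net change of $a$'s top-$\ell$ count, so the capacity of $(v_a,t)$ keeps $t'_a\le\lfloor n/2\rfloor$ when $t_a\le\lfloor n/2\rfloor$, the demand on $(s,v_a)$ drives an already overfull $a$ down to $\lfloor n/2\rfloor$, and the demand on $(v_c,t)$ pushes $t'_c$ above $n/2$; conversely, any successful bribery for threshold $\ell$ gives a feasible flow of the same cost. Since a feasible flow, when it exists, can be taken integral and found in polynomial time, there are only $O(m)$ values of $\ell$, and for \ODB we simply minimise cost and compare with $\BB$, this yields the claimed polynomial time algorithm; the footrule bound follows from the equivalence in the first step.

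The step I expect to be the main obstacle is getting this flow model exactly right. One bribe simultaneously removes one alternative from and inserts another into a voter's top-$\ell$ set, so it is naturally a unit of flow on an arc \emph{between} the two affected alternative-vertices rather than on an $s$-$t$ path; the two one-sided targets --- overfull alternatives must be brought down to $\lfloor n/2\rfloor$ and $c$ must be pushed strictly above $n/2$ --- must be encoded as demands, whereas the slack of the remaining alternatives becomes capacities of their arcs to $t$. One must also verify the small cases (voters with $\delta_i=0$, alternatives with $t_a$ exactly $\lfloor n/2\rfloor$, the possible absence of any arc into $v_c$, and that the constraints really force $c$ to be the \emph{unique} winner), all of which are routine once the network is in place.
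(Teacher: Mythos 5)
Your proposal is correct and follows essentially the same route as the paper: the paper's (one-line) proof defers to the $k$-approval argument, which is in turn the plurality-style reduction --- guess the winning round $\ell$, observe that a distance-$1$ change can only exchange the alternatives at positions $\ell$ and $\ell+1$, invoke the equivalence of footrule distance at most $3$ with swap distance at most $1$, and solve a minimum cost flow with demands on edges. Your network routes each bribe as a unit of flow on an arc between the two affected alternative vertices rather than through a per-voter vertex, but this is an equivalent formulation of the same flow reduction and the correctness argument is the same.
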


\begin{proof}
 The proof of the statement uses arguments analogous to the proof of \Cref{thm:kapp_poly}.
\end{proof}

\begin{theorem}\label{thm:bucklin_maxdis_poly}
 The \OB problem is polynomial time solvable for the simplified Bucklin voting rule for the maximum displacement distance.
\end{theorem}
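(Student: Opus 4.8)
The plan is to reduce the problem, for each candidate value $k\in[m]$ of the simplified Bucklin score of $c$, to a single maximum flow computation. Writing $h=\lfloor n/2\rfloor+1$ for the majority threshold, $c$ is the unique simplified Bucklin winner of a profile $\suc^\pr$ if and only if there is some $k$ for which $c$ lies within the first $k$ positions in at least $h$ preferences of $\suc^\pr$ while every other alternative lies within the first $k$ positions in at most $h-1$ preferences; since we will try all $m$ values of $k$, it suffices to decide, for a fixed $k$, the following feasibility question: can we replace each $\suc_i$ by some $\suc^\pr_i$ with $d_{\text{max displacement}}(\suc_i,\suc^\pr_i)\le\delta$ so that, letting $S_i\subseteq\AA$ be the set of the first $k$ alternatives of $\suc^\pr_i$, we have $|\{i:c\in S_i\}|\ge h$ and $|\{i:y\in S_i\}|\le h-1$ for every $y\ne c$?

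The technical heart is a structural characterization of which sets can serve as $S_i$. I claim that, for a fixed preference $\suc$ and a fixed $k$, a set $S$ with $|S|=k$ equals the top-$k$ set of some $\suc^\pr$ with $d_{\text{max displacement}}(\suc,\suc^\pr)\le\delta$ if and only if $S$ contains every alternative whose position in $\suc$ is at most $k-\delta$ and no alternative whose position in $\suc$ is at least $k+\delta+1$. The forward direction is immediate since an alternative cannot move by more than $\delta$ positions. For the converse, given such an $S$, I would construct $\suc^\pr$ by independently matching $S$ into the position block $\{1,\dots,k\}$ and $\AA\setminus S$ into the block $\{k+1,\dots,m\}$, where an alternative $a$ may be placed at any position in $[\text{pos}(a,\suc)-\delta,\text{pos}(a,\suc)+\delta]$ intersected with the relevant block; each subproblem is a matching in an interval bipartite graph, and a short case analysis on the ``windows'' shows Hall's condition holds for both (the only tight windows being the full blocks, where it holds with equality). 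The content of this lemma is that the per-voter displacement budget imposes \emph{no} restriction on the attainable top-$k$ sets beyond the unavoidable ``forced in'' and ``forced out'' memberships; establishing it cleanly is the step I expect to be the main obstacle, and everything afterwards is bookkeeping plus a standard flow construction.

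Given the lemma, I would first argue that it is without loss of generality to place $c$ inside the top $k$ of \emph{every} preference $\suc_i$ in which that is possible at all, namely every $\suc_i$ with $\text{pos}(c,\suc_i)\le k+\delta$: by the lemma, a valid $S_i$ not containing $c$ can be turned into $(S_i\setminus\{a\})\cup\{c\}$ for any non-forced-in $a\in S_i$ while remaining admissible, and this never raises the top-$k$ count of any $y\ne c$, so if the number of such preferences is less than $h$ this value of $k$ can be rejected outright. For each alternative $y\ne c$ let $\mathrm{base}(y)$ be the number of preferences in which $y$ occupies a position at most $k-\delta$ (hence is forced into every admissible $S_i$); if some $\mathrm{base}(y)\ge h$ we reject this $k$. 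Let $g_i$ be the number of free slots remaining in $S_i$ after committing the forced-in alternatives and, where applicable, $c$. The residual question --- fill these $g_i$ slots in each preference using that preference's ``toggleable'' alternatives (those at positions in $[k-\delta+1,k+\delta]$, other than $c$) so that no $y\ne c$ appears in more than $h-1$ of the $S_i$ in total --- is a degree-constrained bipartite assignment, which I would model as a flow network: a source $s$, a sink $t$, a node $u_i$ per preference, a node $v_y$ per alternative $y\ne c$, an edge $(s,u_i)$ of capacity $g_i$, a unit-capacity edge $(u_i,v_y)$ whenever $y$ is toggleable in $\suc_i$, and an edge $(v_y,t)$ of capacity $h-1-\mathrm{base}(y)$. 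By the structural lemma a feasible assignment corresponds exactly to a tuple of admissible profiles, so the fixed-$k$ instance is a \YES instance if and only if this network admits an $s$--$t$ flow of value $\sum_i g_i$, i.e., one saturating all edges out of $s$.

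Finally, there are only $O(m)$ values of $k$ and each costs one maximum flow computation, so the overall running time is polynomial. The boundary cases --- $k\le\delta$ (no forced-in alternatives, and the toggleable window becomes $\{1,\dots,k+\delta\}$), $k+\delta\ge m$ (no forced-out alternatives), and the exact form of $h$ depending on the parity of $n$ --- are absorbed into the same network by the obvious edits to the capacities, and the co-winner variant is handled as remarked after \Cref{def:prob_dollar}. It is worth noting \emph{why} the argument does not carry over to the ordinary Bucklin rule: there the winner among the alternatives with minimum simplified Bucklin score $k$ is decided by the \emph{number of appearances} within the first $k$ positions, so one can no longer claim that pushing $c$ into as many top-$k$ blocks as possible is harmless --- consistent with the \NPC status of Bucklin in \Cref{thm:ob_bucklin}.
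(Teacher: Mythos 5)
Your proposal is correct and follows essentially the same route as the paper: guess the simplified Bucklin score $k$ of $c$, commit $c$ to the top-$k$ block wherever reachable, and fill the remaining slots via a maximum-flow network with per-preference free-slot capacities and per-alternative capacities of the form $h-1-\mathrm{base}(y)$ (the paper avoids enumerating $k$ by shifting $c$ maximally left and reading off the minimum achievable score, but this is an immaterial difference). The one substantive addition on your side is the explicit Hall-type structural lemma characterizing the attainable top-$k$ sets under the maximum displacement constraint, which the paper's proof (and the proof of \Cref{thm:kapp_maxdis_poly} it defers to) uses implicitly without justification; your interval-matching argument for it is sound.
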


\begin{proof}
  Let $\delta$ be the input distance, \PP the input profile, and $x$ the distinguished alternative. For the simplified Bucklin voting rule, we first move the distinguished alternative $x$ to its left by $\min\{\delta, pos(x, \suc)\}$ positions in every preference $\suc\in\PP$. Let the resulting profile be \QQ. Let $k$ be the minimum positive integer such that the alternative $x$ appears within the first $k$ positions in a majority number of preferences in \QQ. To make $x$ the unique simplified Bucklin winner, every alternative other than $x$ should not appear within the first $k$ positions in a majority number of preferences. This can be achieved by reducing to the maximum flow problem similarly to the proof of \Cref{thm:kapp_maxdis_poly}.
\end{proof}

\section{Hardness Results}\label{sec:hard}

In this section, we present our hardness results. We use the following\longversion{ \TSAT} problem to prove our hardness results which is known to be \NPC~\longversion{\cite{ECCC-TR03-022}}\shortversion{\shortcite{ECCC-TR03-022}}.

\begin{definition}[\TSAT]
 Given a set $\XX=\{x_i: i\in[n]\}$ of $n$ variables and a set $\{C_j: j\in[m]\}$ of $m$ $3$-CNF clauses on \XX such that, for every $i\in[n]$, $x_i$ and $\bar{x}_i$ each appear in exactly $2$ clauses, compute if there exists any Boolean assignment to the variables which satisfy all the $m$ clauses simultaneously.
\end{definition}

The high level idea of all our proofs is as follows. We have two set of preferences (let us call them ``primary'' gadgets): the first set corresponds to the variables of the \TSAT formula which ensures that every variable is assigned either true or false but not both; the second set corresponds to the clauses which ensures that every clause is satisfied by the assignment set by the first set of preferences. To ensure that our primary gadgets function in the intended manner, we add ``auxiliary'' preferences which varies for different voting rules. We now present our hardness result for the $k$-approval voting rule for the swap and footrule distances. In this section, \delta always denotes the distance under consideration.

\begin{theorem}\label{thm:ob_kapp_swap}
 \longversion{The \OB problem is \NPC for the $k$-approval voting rule for any constant $k\ge 2$ for the swap distance even when $\delta=2$. Hence, the \OB problem is \NPC for the $k$-approval voting rule for any constant $k\ge 2$ for the footrule distance even when $\delta=4$.}
 \shortversion{The \OB problem is \NPC for the $k$-approval voting rule for any constant $k\ge 2$ for the swap distance even when $\delta=2$ and for the footrule distance even when $\delta=4$.}
\end{theorem}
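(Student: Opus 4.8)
The plan is to reduce from \TSAT. Since the footrule statement follows from the swap statement via the inequality $d_{\text{swap}}\le d_{\text{footrule}}\le 2d_{\text{swap}}$ (a footrule budget of $4$ simulates a swap budget of $2$, and in our construction every useful move will be a single adjacent transposition costing $2$ in footrule distance), I focus on the swap distance with $\delta=2$. Given a \TSAT instance with variables $x_1,\dots,x_n$ and clauses $C_1,\dots,C_m$, I will build an election whose alternative set contains the distinguished candidate $c$, one ``literal'' alternative for each literal $x_i,\bar x_i$, one alternative $d_j$ per clause $C_j$, plus sufficiently many padding/dummy alternatives so that the relevant positions around the $k$-th slot can be controlled. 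The core idea, as the paper's roadmap indicates, is: \emph{variable gadgets} are preferences in which, within a swap budget of $2$, a voter can push exactly one of the two literals $x_i,\bar x_i$ into (or out of) the top-$k$ block but not both --- forcing a consistent Boolean assignment; \emph{clause gadgets} are preferences built so that $c$ gains enough approvals (or the alternatives blocking $c$ lose enough) precisely when the chosen assignment satisfies every clause; and \emph{auxiliary preferences} fix the baseline scores so that $c$ becomes the unique winner if and only if all these conditions hold.

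Concretely, I would arrange baseline $k$-approval scores so that $c$ is one short of winning and each literal-alternative $\ell$ sits one position outside the top-$k$ block of $2$ dedicated ``variable voters'' for its variable (one voter per occurrence parity), so a single swap there promotes $\ell$; the consistency constraint comes from each such voter having only $x_i$ and $\bar x_i$ as its two candidates reachable into slot $k$, and the global score bookkeeping forcing exactly one of the two to be promoted overall. For each clause $C_j=(\ell_1\vee\ell_2\vee\ell_3)$ I introduce a clause voter placing $c$ at position $k+1$ with the three literal-alternatives of $C_j$ just above it; a swap of budget $2$ (two adjacent transpositions) lets $c$ move into the top-$k$ block only if at least one of those literal slots can be vacated, which I engineer to be possible exactly when that literal is set true by the variable gadgets. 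I will then set the target final score of $c$ to baseline${}+m$, so $c$ reaches it iff every clause voter can be fixed, i.e.\ the assignment is satisfying; dummy alternatives absorb displaced mass so no competitor's score rises. The \OB setting ($p_i=0$, $\BB=0$, uniform $\delta$) means I only need the distance constraints, not costs, which simplifies the construction.

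Membership in \NP is immediate (guess $\suc'$, check the two conditions in polynomial time). For soundness I must argue both directions: from a satisfying assignment, read off the swaps (promote the true literal in each variable gadget, fix each clause voter through its satisfied literal) and verify $c$ is the unique winner; conversely, from a successful bribery, I use the tight score accounting to show the promotions in the variable gadgets must be ``one literal per variable'' (else some competitor ties or beats $c$, or $c$ falls short), yielding a well-defined assignment, and then that each of the $m$ clause voters being fixable forces that assignment to satisfy every clause. The main obstacle I anticipate is the interference between gadgets: a budget-$2$ swap is flexible enough that a clause voter might try to ``cheat'' by manipulating a literal slot in a way inconsistent with the variable gadgets, or a variable voter's two swaps might be spent promoting a dummy or demoting $c$'s blockers in an unintended way. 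Handling this requires a careful choice of which alternatives occupy positions $k-1,k,k+1,k+2$ in each gadget (using distinct dummies so the only ``productive'' swaps are the intended ones) and a rigid global score target that leaves no slack --- getting these numbers to line up simultaneously for \emph{all} constant $k\ge 2$ (the window around slot $k$ behaves slightly differently for $k=2$ versus larger $k$, and near the top of the list) is the delicate part of the argument.
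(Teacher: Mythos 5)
Your high-level skeleton (a reduction from \TSAT with variable gadgets, clause gadgets, and auxiliary votes that pin the baseline scores) matches the paper's, but the forcing mechanism in your clause gadget points in the wrong direction, and this is a genuine gap. You place $c$ at position $k+1$ of each clause vote with the clause's literal alternatives just above it, and you want $c$ to be able to enter the top-$k$ block ``only if'' one of those literal slots can be vacated, with $c$'s target score set to baseline plus $m$. Under $k$-approval, however, vacating a slot means demoting a competitor out of the approved zone; this is always permitted by the distance constraint ($\delta=2$ allows any two inversions), and it can only lower competitors' scores, so nothing can ever make it ``impossible''. Since \OB has no prices and no budget, every clause voter can always promote $c$ regardless of the assignment, and your reduction would accept every instance. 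The paper's gadget is the mirror image: $c$'s score is fixed at $\Delta+3$ from the start; each clause $C_j$ carries a blocker $y_j$ sitting atop three votes, whose score ties $c$ unless $y_j$ is pushed out of the top two positions in at least one of them; and pushing $y_j$ down by two positions necessarily \emph{promotes} the occurrence alternative $a(l,f(l,C_j))$ into the approved zone. That promotion is affordable exactly when $l$ is true, because the variable gadget (again driven by a blocker $w_i$ that must be demoted on one of its two sides) has already consumed the single unit of score slack of the occurrence alternatives of the false literal. In short, the coupling must come from a forced \emph{promotion} of a score-capped alternative, not from a demotion.

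Two further points would need repair even after flipping the direction. First, you use one alternative per literal, but in \TSAT each literal occurs in exactly two clauses, so a true literal's alternative may have to absorb two promotions while a false one must absorb none; the paper sidesteps this by splitting each literal into two occurrence alternatives $a(l,0)$ and $a(l,1)$, each with exactly one unit of slack. Second, the footrule statement does not follow from the sandwich inequality alone: the footrule-$4$ ball strictly contains the swap-$2$ ball (reversing three adjacent alternatives has swap distance $3$ but footrule distance $4$), so the reverse direction of the reduction must be re-verified against these extra admissible moves rather than inherited automatically.
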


\begin{proof}
 Let us present the proof for the $2$--approval voting rule first. The \OB problem for the $2$-approval voting rule for the swap distance is clearly in \NP. To prove \NP-hardness, we reduce from \TSAT to \OB. Let $(\XX=\{x_i: i\in[n]\},\CC=\{C_j: j\in[m]\})$ be an arbitrary instance of \TSAT. Let us assume without loss of generality that both $n$ and $m$ are even integers; if not, we take disjoint union of the instance with itself which doubles both the number of variables and the number of clauses. Let us consider the following instance $(\AA,\PP,c,\delta=2)$ of \OB.
 \begin{align*}
  \AA &= \{a(x_i,0), a(x_i,1), a(\bar{x}_i,0), a(\bar{x}_i,1): i\in[n]\} \cup \{c,u\}\\
  &\cup \{w_i, z_i: i\in[n]\} \cup \{y_j, d_j, d^\pr_j: j\in[m]\}
 \end{align*}
 
 We construct the profile \PP using the following function $f$. The function $f$ takes a literal and a clause as input, and outputs a value in $\{0,1,-\}$. For each literal $l$, let $C_i$ and $C_j$ with $1\le i<j\le m$ be the two clauses where $l$ appears. We define $f(l,C_i)=0, f(l,C_j)=1,$ and $f(l,C_k)=-$ for every $k\in[m]\setminus\{i,j\}$. \longversion{This finishes the description of the function $f$.} We now describe \PP. Let $\Delta=100m^2n^2$.
 \begin{enumerate}[(I)]
  \item For every $i\in[n]$, we have\label{thm:ob_kapp_1}
  \begin{itemize}
   \item $w_i\suc a(x_i,0)\suc a(x_i,1)\suc z_i\suc \text{others}$
   \item $w_i\suc a(\bar{x}_i,0)\suc a(\bar{x}_i,1)\suc z_i\suc \text{others}$
  \end{itemize}
  \item For every $C_j=(l_1\vee l_2\vee l_3)$, $j\in[m]$, we have\label{thm:ob_kapp_2}
  \begin{itemize}
   \item $y_j\suc d_j\suc a(l_1,f(l_1,C_j))\suc d_j^\pr\suc \text{others}$
   \item $y_j\suc d_j\suc a(l_2,f(l_2,C_j))\suc d_j^\pr\suc \text{others}$
   \item $y_j\suc d_j\suc a(l_3,f(l_3,C_j))\suc d_j^\pr\suc \text{others}$
  \end{itemize}
  \item $c\suc d_1\suc d_2\suc d_3\suc \text{others}$\label{thm:ob_kapp_3}
  \item $\Delta+2$ copies: $u\suc d_1\suc d_2\suc c\suc \text{others}$\label{thm:ob_kapp_4}
  \item For every $1\le i\le\nfrac{n}{2}$, we have\label{thm:ob_kapp_5}
  \begin{itemize}
   \item $\Delta+1$ copies: $u\suc w_{2i-1}\suc w_{2i}\suc d_1\suc \text{others}$
  \end{itemize}
  \item For every $i\in[n]$\label{thm:ob_kapp_6}
  \begin{itemize}
   \item $\Delta+1$ copies: $u\suc a(x_i,1)\suc a(\bar{x}_i,1)\suc d_1\suc \text{others}$
   \item $\Delta+1$ copies: $u\suc a(x_i,0)\suc a(\bar{x}_i,0)\suc d_1\suc \text{others}$
  \end{itemize}
  \item For every $1\le i\le\nfrac{m}{2}$, we have\label{thm:ob_kapp_7}
  \begin{itemize}
   \item $\Delta$ copies: $u\suc y_{2j-1}\suc y_{2j}\suc d_1\suc \text{others}$
  \end{itemize}
 \end{enumerate}
 We claim that the two instances are equivalent. In one direction, let the \TSAT instance be a \YES instance with a satisfying assignment $g:\XX\longrightarrow\{0,1\}$. Let us consider the following profile \QQ where the swap distance of every preference is at most $2$ from its corresponding preference in \PP.
 \begin{enumerate}[(I)]
  \item For every $i\in[n]$, we have
  \begin{itemize}
   \item If $g(x_i)=1$, then
   \begin{itemize}
    \item $w_i\suc z_i\suc a(x_i,0)\suc a(x_i,1)\suc \text{others}$
    \item $a(\bar{x}_i,0)\suc a(\bar{x}_i,1)\suc w_i\suc z_i\suc \text{others}$
   \end{itemize}
   \item Else
   \begin{itemize}
    \item $a(x_i,0)\suc a(x_i,1)\suc w_i\suc z_i\suc \text{others}$
    \item $w_i\suc z_i\suc a(\bar{x}_i,0)\suc a(\bar{x}_i,1)\suc \text{others}$
   \end{itemize}
  \end{itemize}
  \item For every $j\in[m]$, if $C_j=(l_1\vee l_2\vee l_3)$ and $g$ makes $l_1=1$ (we can assume by renaming), then we have
  \begin{itemize}
   \item $d_j\suc a(l_1,f(l_1,C_j))\suc y_j\suc d_j^\pr\suc \text{others}$
   \item $y_j\suc d_j\suc a(l_2,f(l_2,C_j))\suc d_j^\pr\suc \text{others}$
   \item $y_j\suc d_j\suc a(l_3,f(l_3,C_j))\suc d_j^\pr\suc \text{others}$
  \end{itemize}
  \item $c\suc d_1\suc d_2\suc d_3\suc \text{others}$
  \item $\Delta+2$ copies: $u\suc c\suc d_1\suc d_2\suc \text{others}$
  \item For every $1\le i\le\nfrac{n}{2}$, we have
  \begin{itemize}
   \item $\Delta+1$ copies: $w_{2i-1}\suc w_{2i}\suc u\suc d_1\suc \text{others}$
  \end{itemize}
  \item For every $i\in[n]$
  \begin{itemize}
   \item $\Delta+1$ copies: $a(x_i,1)\suc a(\bar{x}_i,1)\suc u\suc d_1\suc \text{others}$
   \item $\Delta+1$ copies: $a(x_i,0)\suc a(\bar{x}_i,0)\suc u\suc d_1\suc \text{others}$
  \end{itemize}
  \item For every $1\le i\le\nfrac{m}{2}$, we have
  \begin{itemize}
   \item $\Delta$ copies: $y_{2j-1}\suc y_{2j}\suc u\suc d_1\suc \text{others}$
  \end{itemize}
 \end{enumerate}
 
 The $2$-approval score of every alternative from \QQ is given in \Cref{tbl:score_summary_app}. Hence, $c$ wins uniquely in \QQ and thus the \OB instance is a \YES instance.
 \begin{table}[!htbp]
  \begin{center}
   \begin{tabular}{|cc|}\hline
   Alternatives & Score\\\hline
   $c$& $\Delta+3$ \\\hline
   $w_i, i\in[n], u, y_j, j\in[m]$& $\Delta+2$ \\\hline
   \makecell{$a(x_i,0), a(x_i,1), a(\bar{x}_i,0), a(\bar{x}_i,1), i\in[n]$}& $\le\Delta+2$ \\\hline
   \makecell{$z_i, i\in[n], d_j, d^\pr_j, j\in[m]$}& $<\Delta$ \\\hline   
   \end{tabular}
  \end{center}
  \caption{Summary of scores from \QQ}\label{tbl:score_summary_app}
 \end{table}

 In the other direction, let us assume that there exists a profile \QQ such that the swap distance of every preference in \QQ is at most $2$ from its corresponding preference in \PP and $c$ wins uniquely in \QQ. We observe that $c$ does not receive any score from any preference in \Cref{thm:on_kapp_4} in \PP. As the $2$-approval winner is the alternative having the highest $2$-approval score, we can assume without loss of generality that $c$ appears within the first $2$ positions in every preference in \QQ which corresponds to the preferences in \Cref{thm:ob_kapp_3,thm:ob_kapp_4} in \PP. We  now observe that, irrespective of other preferences in \QQ, the score of $c$ in \QQ is $\Delta+3$. Also the score of $u$ from the preferences corresponding to \Cref{thm:ob_kapp_4} in \QQ is already $\Delta+2$. Hence, for $c$ to win uniquely in \QQ, $u$ must not appear within the first two positions in any preference in \QQ which corresponds to the preferences in \Cref{thm:ob_kapp_5,thm:ob_kapp_6,thm:ob_kapp_7} in \PP. This makes the score of $w_i, i\in[n]$ $\Delta+1$, the score of $a(l,0)$ $\Delta+2$ and the score of $a(l,1)$ $\Delta+1$ for every literal $l$, and the score of $y_j, j\in[m]$ $\Delta$ from the preferences corresponding to  \Cref{thm:ob_kapp_3,thm:ob_kapp_4,thm:ob_kapp_5,thm:ob_kapp_6,thm:ob_kapp_7} in \QQ. We now consider the following assignment $g:\XX\longrightarrow\{0,1\}$ -- for every $i\in[n]$, $g(x_i)=1$ if the preference corresponding to $w_i\suc a(x_i,0)\suc a(x_i,1)\suc z_i\suc \text{others}$ in \QQ keeps $w_i$ within the first $2$ positions; if the preference corresponding to $w_i\suc a(\bar{x}_i,0)\suc a(\bar{x}_i,1)\suc z_i\suc \text{others}$ in \QQ keeps $w_i$ within the first $2$ positions, then we define $g(x_i)=0$. We observe that the function $g$ is well defined since $w_i$ must be pushed out of the first $2$ positions at least once in the preferences corresponding to $w_i\suc a(x_i,0)\suc a(x_i,1)\suc z_i\suc \text{others}$ and $w_i\suc a(\bar{x}_i,0)\suc a(\bar{x}_i,1)\suc z_i\suc \text{others}$; otherwise the score of $w_i$ is the same as the score of $c$ which is a contradiction. We claim that $g$ is a satisfying assignment for the \TSAT instance. Suppose not, then let us assume that $g$ does not satisfy $C_j=(l_1\vee l_2\vee l_3)$ for some $j\in[m]$. Since the literals $l_1, l_2,$ and $l_3$ are set to $0$ by $g$, it follows from the definition of $g$ that the score of every alternative in $\{a(l_i,\mu):i\in[3],\mu\in\{0,1\}\}$ (which is a superset of $\{a(l_i,f(l_i,C_j)):i\in[3]\}$) from the preferences corresponding to \Cref{thm:ob_kapp_1,thm:ob_kapp_3,thm:ob_kapp_4,thm:ob_kapp_5,thm:ob_kapp_6,thm:ob_kapp_7} in \QQ is already $\Delta+2$ (since all these alternatives appear within the first $2$ positions in one preference in \Cref{thm:ob_kapp_1} in \QQ). Hence $y_j$ can never be pushed out of the first $2$ positions in the preferences corresponding to \Cref{thm:ob_kapp_2} in \QQ. However this makes the score of $y_j$ $\Delta+3$ in \QQ. This contradicts our assumption that $c$ is the unique winner in \QQ. Hence $g$ is a satisfying assignment and thus the \TSAT instance is a \YES instance.
 
 Generalizations to $k$--approval for any constant $k\ge 3$ can be done by using $10m^8n^8$ dummy candidates and ensuring that any dummy alternative appears at most once within the first $k+3$ positions in the profile \PP. The proof for the footrule distance follows from the observation that, for any two preferences $\suc_1, \suc_2\in\LL(\AA)$, if we have $d_{\text{footrule}}(\suc_1,\suc_2)=4$, then we have $d_{\text{swap}}(\suc_1,\suc_2)=2$.
\end{proof}

\begin{theorem}\label{thm:kapp_max_hard}
 The \ODB problem is \NPC for the $k$-approval voting rule for the maximum displacement distances even when $\delta_i=2$ for every preference $i$ for any constant $k>1$.
\end{theorem}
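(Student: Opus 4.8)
The plan is to reduce from \TSAT, as in the other $k$-approval hardness results, but here the prices and the budget do the real work: by \Cref{thm:kapp_maxdis_poly} the budget-free problem \OB is already polynomial time solvable for $k$-approval under the maximum displacement distance, so any hardness for \ODB must come from the price/budget constraint rather than from the displacement bound alone. Membership in \NP is immediate --- a certificate consists of the bribed set $J\subseteq[n]$ together with the replacement preferences $(\suc^\pr_i)_{i\in J}$, and the three conditions of \Cref{def:prob_dollar} are verifiable in polynomial time.

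First I record the structural fact governing a single $\delta_i=2$ change. Under maximum displacement $\le 2$, the alternatives in positions $1,\dots,k-2$ of a preference are pinned inside the first $k$ positions and those in positions $k+3,\dots,m$ are pinned outside; among the four ``flexible'' alternatives in positions $k-1,k,k+1,k+2$ exactly two end up inside the first $k$ positions, and one checks by inspection that every one of the $\binom{4}{2}=6$ choices of which two is realizable within displacement $2$. Hence each bribed gadget preference is just a choice of a $2$-element subset of its four flexible alternatives to keep inside the first $k$ positions; in particular, if two flexible alternatives are originally inside and one wants to evict \emph{both}, then one is forced to promote \emph{both} of the other two flexible alternatives --- this rigidity is what will encode logical constraints. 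I plan to use prices in $\{0,1\}$ (with ``$\infty$'', i.e.\ price exceeding the budget, for parts that must stay frozen) and a tight budget so that only a bounded number of such evictions can be paid for.

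Now the gadget construction. Given a \TSAT instance with variables $\{x_i\}_{i\in[n]}$ and clauses $\{C_j\}_{j\in[m]}$ (assume $n,m$ even, doubling the instance if needed), introduce the distinguished alternative $c$, a clause alternative $y_j$ for each $C_j$, per-variable ``deficiency'' alternatives, and polynomially many dummy alternatives to absorb forced promotions. For each variable $x_i$, create gadget preferences of price $1$ in a ``true'' group and a ``false'' group, designed so that paying for the true (resp.\ false) group, via the two-in-two-out rule, evicts exactly the clause alternatives $y_j$ of the two clauses containing the literal $x_i$ (resp.\ $\bar x_i$) --- recall that each literal occurs in exactly two clauses --- and repairs the deficiency alternatives of variable $i$, while only dummies are promoted. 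Every remaining (clause/auxiliary) preference is given a price exceeding the budget, hence is effectively frozen, and serves only to calibrate scores: writing $T$ for the final score of $c$, arrange that every $y_j$ and every deficiency alternative has score exactly $T$ before bribery and drops below $T$ precisely when it has been evicted as above, while no other alternative ever reaches $T$. Set the budget to be just enough to repair all $n$ batches of deficiency alternatives; this forces exactly one of the two groups of each variable to be paid for, i.e.\ a Boolean assignment $g$, and $c$ becomes the unique winner iff every $y_j$ is evicted, i.e.\ iff every clause contains a literal set true by $g$. For the forward direction, a satisfying $g$ dictates which groups to bribe, the budget is met with equality, and a direct score count makes $c$ the unique winner; for the converse, a feasible bribery cannot touch the frozen preferences, the deficiency alternatives force exactly one group paid per variable (hence an assignment $g$), and unique victory of $c$ forces every $y_j$ below $T$, which by the calibration is possible only if $g$ satisfies every clause. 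The passage from $k=2$ to arbitrary constant $k>1$ uses the standard padding with dummy alternatives kept outside the first $k+O(1)$ positions everywhere, exactly as in \Cref{thm:ob_kapp_swap}.

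The step I expect to be the main obstacle is making the variable gadget respect the ``two in $\Rightarrow$ two out'' rigidity while still implementing a clean Boolean choice: a single paid preference can remove at most two alternatives from the first $k$ positions and is forced to insert two others, so one must spread the two clause alternatives of a literal and the deficiency alternatives across enough gadget preferences, route every forced promotion onto harmless dummies, and do so without the budget accidentally permitting a ``mixed'' (inconsistent) payment pattern for some variable. Getting the budget, the per-preference prices, and these placements to line up so that the affordable briberies are in bijection with the satisfying assignments --- while keeping $c$'s winning margin equal to $1$ --- is the delicate part; the surrounding score bookkeeping is routine, as in \Cref{thm:ob_kapp_swap}.
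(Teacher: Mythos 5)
Your framework (reduction from \TSAT, prices in $\{0,1\}$ with ``frozen'' expensive preferences, a tight budget, and the ``two-in, two-out'' analysis of a displacement-$2$ change around the $k$-th position) matches the paper's, and your structural fact that all $\binom{4}{2}=6$ choices of which two flexible alternatives stay inside are realizable is correct --- it is exactly what the paper uses. The gap is the step you yourself flag as delicate, and it is not a technicality: your mechanism for extracting a \emph{consistent} Boolean assignment does not work. You place the clause alternatives $y_j$ inside the variable gadgets and try to enforce ``exactly one group (true or false) paid per variable'' purely through deficiency alternatives plus an exact budget. But \ODB bribes preferences one at a time, and by your own six-choices observation a paid preference may evict \emph{any} two of its four flexible alternatives, so evicting a deficiency alternative neither forces nor forbids evicting the $y_j$ sitting beside it. Moreover, since each literal occurs in two clauses, a group must span at least two preferences (two $y_j$'s plus deficiency alternatives cannot all occupy the two evictable slots of one preference), and the deficiency alternatives must be shared between the true and the false group (otherwise both groups would have to be paid). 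A briber can therefore spend exactly the per-variable quota on a \emph{mixture} of preferences from the two groups of variable $i$, still repair all deficiency alternatives, and evict one $y_j$ attached to $x_i$ together with one attached to $\bar x_i$. Such a ``fractional'' cover is strictly weaker than satisfiability, so the backward direction fails: an unsatisfiable formula can still admit a successful bribery, and no calibration of prices and budget within your scheme rules this out, because the budget can only count bribed preferences, not certify which pairs of alternatives they evicted.

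The paper closes exactly this hole with a different coupling: the variable gadget never touches the $y_j$'s. Instead, evicting the pair $w_i,w_i'$ from one of the two variable preferences \emph{forces}, by the two-in/two-out rigidity, the promotion of the literal pair $a_i,b_i$ or $\bar a_i,\bar b_i$ to one point below $c$'s score, so the assignment is recorded in the \emph{scores of literal alternatives}; each clause preference then places $y_j$ adjacent to a literal alternative of $C_j$, and evicting $y_j$ is tied to promoting that literal alternative, which creates a tie with $c$ precisely when that literal was set false. The budget is used only to limit the briber to one bribe per variable and one per clause --- never to enforce consistency. To repair your proof you would need to replace the deficiency-alternative bookkeeping with this score-based coupling (or an equivalent device); note also that even there one must argue that evicting $y_j$ cannot be achieved by promoting the trailing dummy in its preference instead of the literal alternative, which is the genuinely delicate point of the whole construction.
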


\begin{proof}
 The \ODB problem for the $k$-approval voting rule for the maximum displacement distance is clearly in \NP. To prove \NP-hardness, we reduce from \TSAT to \ODB. Let $(\XX=\{x_i: i\in[n]\},\CC=\{C_j: j\in[m]\})$ be an arbitrary instance of \TSAT. Let us consider the following instance of \ODB. The set of alternatives \AA is 
 $$\AA = \{a_i, \bar{a}_i, b_i, \bar{b}_i, w_i, w_i^\pr: i\in[n]\} \cup \{c\} \cup \{y_j: j\in[m]\}\cup \DD, \text{ where } |\DD|=300m^3 n^3$$ 
 
 We now describe the preference profile \PP along with their corresponding prices. The profile \PP is a disjoint union of two profiles, namely, $\PP_1$ and $\PP_2$. While describing the preferences of \PP below, whenever we say `others' or `for some alternatives in \DD' or `for some subset of \DD', the unspecified alternatives are assumed to be arranged in such a way that, for every unspecified alternative $a\in\AA\setminus\DD$, at least $10$ alternatives from \DD appear immediately before $a$. We also ensure that any alternative in \DD appears within top $k+10$ positions at most once in $\PP$ whereas every alternative in $\AA\setminus\DD$ appears within top $10mn$ position in every preference in $\PP$. This is always possible because $|\DD|=300m^3 n^3$ whereas $|\AA\setminus\DD|=6n+m+1$ and $|\PP|\le 100(m+n)$. Let $f$ and $g$ be functions defined on the set of literals as $f(x_i)=a_i, g(x_i)=b_i,$ and $f(\bar{x}_i)=\bar{a}_i, g(\bar{x}_i)=\bar{b}_i$ for every $i\in[n]$. We also define a function $h$ as $h(C_j, l)=f(l)$ if the literal $l$ appears in the clause $C_j$ but $l$ does not appear in any clause $C_r$ for any $1\le r<l$, $h(C_j, l)=g(l)$ if the literal $l$ appears in the clause $C_j$ but $l$ does not appear in any clause $C_r$ for any $l< r\le m$, and $-$ otherwise. We first describe $\PP_1$ below. The price of every preference in $\PP_1$ is $1$. 
 
 \begin{enumerate}[(I)]
  \item For every $i\in[n]$, we have the following preferences.
  \begin{itemize}\label{kapp:votesi}
   \item $\DD_{k-2}\suc w_i\suc w_i^\pr\suc a_i\suc b_i\suc \text{others}$ for some $\DD_{k-2}\subseteq \DD$ with $|\DD_{k-2}|=k-2$
   \item $\DD_{k-2}\suc w_i\suc w_i^\pr\suc \bar{a}_i\suc \bar{b}_i\suc \text{others}$ for some $\DD_{k-2}\subseteq \DD$ with $|\DD_{k-2}|=k-2$
  \end{itemize}
  \item For every $C_j=(l_1\vee l_2\vee l_3)$ with $j\in[m]$, we have the following preferences.
  \begin{itemize}\label{kapp:votesj}
   \item $\DD_{k-2}\suc d\suc y_j\suc h(C_j,l_1)\suc d^\pr\suc \text{others}$, for some $d,d^\pr\in\DD$ for some $\DD_{k-2}\subseteq \DD$ with $|\DD_{k-2}|=k-2$
   \item $\DD_{k-2}\suc d\suc y_j\suc h(C_j,l_2)\suc d^\pr\suc \text{others}$, for some $d,d^\pr\in\DD$ for some $\DD_{k-2}\subseteq \DD$ with $|\DD_{k-2}|=k-2$
   \item $\DD_{k-2}\suc d\suc y_j\suc h(C_j,l_3)\suc d^\pr\suc \text{others}$, for some $d,d^\pr\in\DD$ for some $\DD_{k-2}\subseteq \DD$ with $|\DD_{k-2}|=k-2$
  \end{itemize}
 \end{enumerate}

 We now describe the preferences in $\PP_2$. The price of every preference in $\PP_2$ is $10mn$.
 \begin{enumerate}[(I)]
  \item We have $10$ copies of $\DD_{k-2}\suc c\suc d\suc \text{others}$, for some $d\in\DD, \DD_{k-2}\subseteq \DD$ with $|\DD_{k-2}|=k-2$
  \item For every $x\in\{a_i, \bar{a}_i, b_i, \bar{b}_i, w_i, w_i^\pr: i\in[n]\}$, we have $8$ copies of the following preference.
  \begin{itemize}
   \item $\DD_{k-2}\suc x\suc d\suc \text{others}$, for some $d\in\DD, \DD_{k-2}\subseteq \DD$ with $|\DD_{k-2}|=k-2$
  \end{itemize}
  \item For every $x\in\{y_j: j\in[m]\}$, we have $7$ copies of the following preference.
  \begin{itemize}
   \item $\DD_{k-2}\suc x\suc d\suc \text{others}$, for some $d\in\DD, \DD_{k-2}\subseteq \DD$ with $|\DD_{k-2}|=k-2$
  \end{itemize}
 \end{enumerate}
 
 The budget of the briber is $(m+n)$. The value of $\delta_i$ for every voter $i$ is $2$. This finishes the description of the \TSAT instance. We now claim that the two instances are equivalent. In one direction, let the \TSAT instance be a \YES instance with a satisfying assignment $\gamma:\{x_i:i\in[n]\}\longrightarrow\{0,1\}$. Let us consider the following profile \QQ where the maximum displacement distance of every preference is at most $2$ from its corresponding preference in \PP.
 \begin{enumerate}[(I)]
  \item For every $i\in[n]$, if $\gamma(x_i)=1$, then we have the following preferences.
  \begin{itemize}
   \item $w_i\suc w_i^\pr\suc a_i\suc b_i\suc \text{others}$
   \item $\bar{a}_i\suc \bar{b}_i\suc w_i\suc w_i^\pr\suc \text{others}$
  \end{itemize}
  \item For every $i\in[n]$, if $\gamma(x_i)=0$, then we have the following preferences.
  \begin{itemize}\label{kapp:votesi}
   \item $a_i\suc b_i\suc w_i\suc w_i^\pr\suc \text{others}$
   \item $w_i\suc w_i^\pr\suc \bar{a}_i\suc \bar{b}_i\suc \text{others}$
  \end{itemize}
  \item For every $C_j=(l_1\vee l_2\vee l_3)$ with $j\in[m]$ and $\gamma$ makes $l_1=1$ (we can assume by renaming), then we have the following preferences.
  \begin{itemize}
   \item $d\suc h(C_j,l_1)\suc y_j\suc d^\pr\suc \text{others}$, for some $d,d^\pr\in\DD$
   \item $d\suc y_j\suc h(C_j,l_2)\suc d^\pr\suc \text{others}$, for some $d,d^\pr\in\DD$
   \item $d\suc y_j\suc h(C_j,l_3)\suc d^\pr\suc \text{others}$, for some $d,d^\pr\in\DD$
  \end{itemize}
 \end{enumerate}
 
 The preferences in $\PP_2$ remain same in \QQ. Also since exactly $(n+m)$ preferences in $\PP_1$ change in \QQ, the total prices of the preferences which are changed is \BB. The $k$-approval score of every alternative in the preference profile \QQ is shown in \Cref{tbl:score_summary_kapp}. Hence the \ODB instance is a \YES instance.
 \begin{table}[!htbp]
 	\begin{center}
 		\begin{tabular}{|cc|}\hline
 			Alternatives & Score\\\hline
 			$c$& $10$ \\\hline
 			$\AA\setminus(\DD\cup\{c\})$& $9$ \\\hline
 			\DD& $\le 1$ \\\hline   
 		\end{tabular}
 	\end{center}
 	\caption{Summary of scores from \QQ}\label{tbl:score_summary_kapp}
 \end{table}

 In the other direction, let us assume that there exists a profile \QQ such that the maximum displacement distance of every preference in \QQ is at most $2$ from its corresponding preference in \PP, $c$ is the unique $k$-approval winner in \QQ, and the sum of the prices of the preferences which differ in \PP and \QQ is at most $(m+n)$. We first observe that the score of $c$ in \QQ is the same as in \PP which is $10$. Also since every alternative in \DD appears within the first $k+10$ positions at most once in \PP, no alternative in \DD can win in \QQ. Also since the budget is $(m+n)$ and the price of every preference in $\PP_2$ is more than $(m+n)$, no preference in $\PP_2$ change in \QQ. Let the preferences in \QQ which correspond to $\PP_1$ be $\QQ_1$. Hence, for $c$ to win uniquely in \QQ, every alternative in $\{a_i, \bar{a}_i, b_i, \bar{b}_i, w_i, w_i^\pr: i\in[n]\}$ can appear within the first $k$ positions at most once in $\QQ_1$. Also every alternative in $\{y_j: j\in[m]\}$ can appear within the first $k$ positions at most twice in $\QQ_1$ for $c$ to win uniquely. Since the budget is $(m+n)$ and the price of every preference in $\PP_1$ is $1$, for every $i\in[n]$, there will be exactly one preference in $\QQ_1$ where both $w_i$ and $w_i^\pr$ appear within the first $k$ positions and there will be exactly two preferences in $\QQ_1$ where $y_j$ appears within the first $k$ positions for every $j\in[m]$. Let us now consider the following assignment $\gamma:\{x_i: i\in[n]\}\longrightarrow\{0,1\}$ defined as: $\gamma(x_i)=0$ if the preference in $\QQ_1$ where both $w_i$ and $w_i^\pr$ appear within the first $k$ positions has $a_i$ at the $(k+1)$-th position; otherwise $\gamma(x_i)=1$. We claim that $\gamma$ is a satisfying assignment for the \TSAT instance. Suppose not, then let us assume that $\gamma$ does not satisfy $C_j=(l_1\vee l_2\vee l_3)$ for some $j\in[m]$. We have already observed that, for $c$ to become the unique $k$-approval winner in \QQ, the alternative $y_j$ must move to its right by one position in at least one of the $3$ preferences in $\PP_1$ where it appears at the first position. However, it follows from the definition of $\gamma$ that this would make the $k$-approval score of at least one alternative in $\{f(l_1), f(l_2), f(l_3)\}$ same as the $k$-approval score of $c$ which contradicts our assumption that $c$ is the unique winner in \QQ. Hence $\gamma$ is a satisfying assignment and thus the \TSAT instance is a \YES instance.
 
\end{proof}

We next present our result for the Borda voting rule.

\begin{theorem}\label{thm:borda_max}
 \longversion{The \OB problem is \NPC for the Borda voting rule for the swap and the maximum displacement distances even when $\delta=1$. Hence, the \OB problem is \NPC for the Borda voting rule for the footrule distance even when $\delta=2$.}
 \shortversion{The \OB problem is \NPC for the Borda voting rule for the swap and the maximum displacement distances even when $\delta=1$ and thus for the footrule distance even when $\delta=2$.}
\end{theorem}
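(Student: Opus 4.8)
\begin{proof_sketch}
Membership in \NP is clear: the bribed profile \QQ is a polynomial‑size certificate, and the conditions ``$d(\suc_i,\cdot)\le\delta$ for all $i$'' and ``$c$ is the unique Borda winner of \QQ'' are checkable in polynomial time. For hardness I reduce from \TSAT, reusing the two‑layer (primary/auxiliary) architecture of \Cref{thm:ob_kapp_swap} but replacing its ``top‑$k$'' counting by Borda score bookkeeping. A single reduction will serve the swap distance with $\delta=1$ and the maximum displacement distance with $\delta=1$ simultaneously, and the footrule claim then comes for free: the footrule distance between two permutations on \AA is always even (the two position multisets both equal $\{1,\dots,m\}$, so $\sum_a(\text{pos}(a,\suc_1)-\text{pos}(a,\suc_2))=0$, and a sum of integers has the parity of the sum of their absolute values), hence $d_{\text{footrule}}\le 2$ holds exactly when the two permutations differ by a single adjacent transposition, i.e.\ exactly when $d_{\text{swap}}\le 1$; so the $\delta=2$ footrule feasibility question is literally the same as the $\delta=1$ swap feasibility question.

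The engine of the construction is the remark that, under Borda, bribing a voter with $\delta=1$ transfers exactly one point from some candidate to the candidate immediately above it in that vote (the one moving up gains $1$, the one moving down loses $1$); maximum displacement $\delta=1$ is the same power, except that a voter may apply several such transfers provided the swapped pairs are pairwise disjoint and adjacent. I build the two primary gadgets of \Cref{thm:ob_kapp_swap} so that this extra freedom is harmless. For each variable $x_i$ there are two \emph{variable votes} with a private candidate $w_i$ near the top; for each clause $C_j=(\ell_1\vee\ell_2\vee\ell_3)$ there are three \emph{clause votes}, one per literal, with a private candidate $y_j$ near the top; and inside every such vote the candidates whose scores are tracked are separated from each other by long blocks of dummy alternatives, each dummy occurring near the top of only boundedly many votes. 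A block of \emph{auxiliary votes} is then added, calibrated so that before any bribery every $w_i$ and every $y_j$ beats $c$ by exactly one point while the dummies and the literal candidates trail $c$, and so that the \emph{only} bribe that can lower $w_i$ (resp.\ $y_j$) is a point transfer inside one of its own variable (resp.\ clause) votes --- a transfer which promotes the literal candidate attached to the corresponding literal, using, as in \Cref{thm:ob_kapp_swap}, the routing $f$ that sends the two occurrences of a literal to two different copies. Any other swap --- in particular any swap that only the maximum displacement metric permits --- either merely shuffles dummies or can only lower a rival of $c$, which never hurts.

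The calibration is arranged so that a clause vote can shed its point onto a literal candidate \emph{without} that candidate catching up to $c$ precisely when the literal was set true by the assignment recorded in the variable votes; if the literal was set false, the variable gadget has already pushed its candidate to within one point of $c$, so a further clause point would tie $c$. With this in place, correctness follows exactly as in \Cref{thm:ob_kapp_swap}: a satisfying assignment of \TSAT tells us which of the two swaps to perform in each variable vote and, for each clause, one satisfied literal through which to route the forced $y_j$‑transfer, producing a \QQ within per‑voter distance $1$ under both metrics in which $c$ wins uniquely; conversely, from any successful $\delta=1$ bribery we read a truth assignment off the variable votes (well defined because $w_i$ must drop in at least one of its two votes, else $w_i$ still beats $c$) and the clause bookkeeping forces every clause to be satisfied.

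I expect the main obstacle to be the calibration itself: choosing the number, shape, and exact placements in the auxiliary votes so that \emph{every} potential rival of $c$ --- the $w_i$'s, the $y_j$'s, the literal candidates, and the dummies --- sits at precisely the margin below $c$ that makes the whole chain of implications tight (this is where the Borda analogue of the two‑copies‑per‑literal device of \Cref{thm:ob_kapp_swap} has to be engineered carefully, since one swap promotes only one candidate), and verifying that the disjoint‑parallel‑swap freedom of the maximum displacement metric can never be parlayed into overtaking $c$; once the dummy blocks are made generous, this last point reduces to a finite check over the few candidates that can ever occur near the top of a single vote. The extension from Borda to the scoring‑rule family of \Cref{cor:scr} is then obtained by the same padding‑with‑dummies device.
\end{proof_sketch}
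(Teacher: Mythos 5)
Your overall architecture is the paper's: a reduction from \TSAT with blocker alternatives ($z_i$/$w_i$ and $y_j$) sitting at the top of the variable and clause votes, literal alternatives immediately below them, generous dummy padding, an auxiliary calibration profile, a truth assignment read off from which of $a_i,\bar a_i$ gets promoted, and the footrule case disposed of by the (correct) observation that footrule distance $\le 2$ is equivalent to swap distance $\le 1$. However, the sketch has a genuine gap exactly where you flag it, and it is not a routine one. Your forcing claim --- ``the \emph{only} bribe that can lower $w_i$ (resp.\ $y_j$) is a point transfer inside one of its own variable (resp.\ clause) votes'' --- is false under Borda. Unlike $k$-approval, where only crossing the top-$k$ boundary matters and a candidate buried among dummies is score-rigid, under Borda every unit shift in \emph{every} vote changes the score: $w_i$ can be lowered by one point in essentially any vote simply by swapping it with the dummy directly below it. The forcing must therefore be a global counting argument: one grants the briber the maximum possible promotion of $c$ and demotion of $w_i$ in all votes, and calibrates the multiplicities of the auxiliary votes so that \emph{even then} $w_i$ ties $c$ unless it is additionally demoted in one of the two votes where it occupies position $1$ --- where the only candidate available to overtake it is $a_i$ or $\bar a_i$. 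This is precisely what the paper's profile $\PP_2$, with its per-alternative vote counts $s_1(c)-s_1(z_i)+2N_1-2$ and $s_1(c)-s_1(a_i)+2N_1-5$, accomplishes; it is the substance of the proof and your sketch defers it entirely. (Incidentally, the paper does not need your two-copies-per-literal device here: a single literal alternative with a calibrated deficit of $3$ relative to the blockers suffices, because a literal is never promoted both in its variable vote and in its clause votes.)

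A second concrete problem is the claim that a single fixed reduction serves swap $\delta=1$ and maximum displacement $\delta=1$ simultaneously. The two metrics give the briber genuinely different per-vote power in the auxiliary votes: under maximum displacement the briber can, in one auxiliary vote, raise $c$ by one position \emph{and} lower every rival by one position (disjoint adjacent swaps with interleaved dummies), a relative gain of $2$ per rival per vote, whereas under swap only one adjacent transposition per vote is available. Consequently the maximal achievable gap between $c$ and a rival over the auxiliary block differs between the metrics, and a calibration that is tight (``exactly one more demotion needed'') for one is slack or infeasible for the other --- slackness kills the reverse direction, infeasibility kills the forward one. The paper resolves this by using two variants of $\PP_2$: for the swap and footrule cases every alternative in $\AA\setminus(\DD\cup\{c\})$ is pre-shifted one position to the right, so that the single permitted swap per auxiliary vote (moving $c$ up past a dummy) already realizes the intended relative scores. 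Your sketch needs either this two-variant fix or an explicit argument that one set of multiplicities satisfies both tightness conditions at once.
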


\begin{proof}
 Let us prove the result for the maximum displacement distance first. The \OB problem for the Borda voting rule for the maximum displacement distance is clearly in \NP. To prove \NP-hardness, we reduce from \TSAT to \OB. Let $(\XX=\{x_i: i\in[n]\},\CC=\{C_j: j\in[m]\})$ be an arbitrary instance of \TSAT. Let us consider the following instance $(\AA,\PP,c,\delta=1)$ of \OB.
 \begin{align*}
  \AA &= \{a_i, \bar{a}_i, z_i: i\in[n]\} \\
  &\cup \{c\} \cup \{y_j: j\in[m]\}\\
  &\cup \DD, \text{ where } |\DD|=10m^7 n^7
 \end{align*}
 
 We construct the profile \PP which is a disjoint union of two profiles, namely, $\PP_1$ and $\PP_2$. We first describe $\PP_1$ below. While describing the preferences below, whenever we say `others' or `for some alternative in \DD' or `for some subset of \DD', the unspecified alternatives are assumed to be arranged in such a way that, for every unspecified alternative $a\in\AA\setminus\DD$, at least $10$ alternatives from \DD appear immediately before $a$. We also ensure that any alternative in \DD appears within top $10m^2n^2$ positions at most once in $\PP_1$ whereas every alternative in $\AA\setminus\DD$ appears within top $10mn$ position in every preference in $\PP_1$. This is always possible because $|\DD|=10m^7 n^7$ whereas $|\AA\setminus\DD|=3n+m+1$ and $|\PP_1|=2n+3m$. Let $f$ be a function defined on the set of literals as $f(x_i)=a_i$ and $f(\bar{x}_i)=\bar{a}_i$ for every $i\in[n]$. 
 
 \begin{enumerate}[(I)]
  \item For every $i\in[n]$, we have the following preferences.
  \begin{itemize}\label{borda:votesi}
   \item $z_i\suc a_i\suc d\suc d^\pr\suc c\suc \text{others}$, for some $d, d^\pr\in\DD$
   \item $z_i\suc \bar{a}_i\suc d\suc d^\pr\suc c\suc \text{others}$, for some $d, d^\pr\in\DD$
  \end{itemize}
  \item For every $C_j=(l_1\vee l_2\vee l_3)$ with $j\in[m]$, we have the following preferences.
  \begin{itemize}\label{borda:votesj}
   \item $y_j\suc f(l_1)\suc d\suc c\suc\text{others}$, for some $d\in\DD$
   \item $y_j\suc f(l_2)\suc d\suc c\suc\text{others}$, for some $d\in\DD$
   \item $y_j\suc f(l_3)\suc d\suc c\suc\text{others}$, for some $d\in\DD$
  \end{itemize}
 \end{enumerate}
 
 This finishes the description of $\PP_1$. Let the function $s_1(\cdot)$ maps every alternative in \AA to the Borda score it receives from $\PP_1$. We observe that $s_1$ is integer valued and $s_1(c)$ is the unique maximum of $s_1$ for large enough $n$ and $m$. We now describe the profile $\PP_2$. Let $N_1$ be the number of preferences in $\PP_1$. While describing the preferences in $\PP_2$, we will leave the choice of alternatives in \DD unspecified. However, we assume that those choice are made in such a way that any alternative in \DD appears within the first $10mn$ positions at most once in $\PP_2$. This is always possible since $|\DD|=10m^7 n^7$ and $|\PP_2|\le 10 m^3n^3$. Let $M=|\AA\setminus\DD|$; we have $M=4n+m+1$.
 
 \begin{enumerate}[(I)]
  \item For every $i\in[n]$, we have the following preferences.
  \begin{itemize}
   \item $s_1(c)-s_1(z_i)+2N_1-2$ preferences of type $d_0\suc d_1\suc c\suc b_1\suc d_2\suc b_2\suc d_3\suc \cdots\suc b_{M-2}\suc d_{M-1}\suc z_i\suc \text{others}$, where $\AA\setminus\DD=\{c,z_i\}\cup\{b_k: k\in[M-2]\}$ and $d_\el\in\DD$ for $0\le \el\le M-1$.
   \item $s_1(c)-s_1(z_i)+2N_1-2$ preferences of type $z_i\suc d_0\suc d_1\suc b_{M-2}\suc d_2\suc \cdots\suc b_1\suc d_{M-1}\suc d_M\suc d_{M+1}\suc c\suc\text{others}$, where $\AA\setminus\DD=\{c,z_i\}\cup\{b_k: k\in[M-2]\}$ and $d_\el\in\DD$ for $0\le \el\le M+1$.
   
   \item $s_1(c)-s_1(a_i)+2N_1-5$ preferences of type $d_0\suc d_1\suc c\suc b_1\suc d_2\suc b_2\suc d_3\suc \cdots\suc b_{M-2}\suc d_{M-1}\suc a_i\suc \text{others}$, where $\AA\setminus\DD=\{c,a_i\}\cup\{b_k: k\in[M-2]\}$ and $d_\el\in\DD$ for $0\le \el\le M-1$.
   \item $s_1(c)-s_1(a_i)+2N_1-5$ preferences of type $a_i\suc d_0\suc d_1\suc b_{M-2}\suc d_2\suc \cdots\suc b_1\suc d_{M-1}\suc d_M\suc d_{M+1}\suc c\suc\text{others}$, where $\AA\setminus\DD=\{c,a_i\}\cup\{b_k: k\in[M-2]\}$ and $d_\el\in\DD$ for $0\le \el\le M+1$.
   
   \item $s_1(c)-s_1(\bar{a}_i)+2N_1-5$ preferences of type $d_0\suc d_1\suc c\suc b_1\suc d_2\suc b_2\suc d_3\suc \cdots\suc b_{M-2}\suc d_{M-1}\suc \bar{a}_i\suc \text{others}$, where $\AA\setminus\DD=\{c,\bar{a}_i\}\cup\{b_k: k\in[M-2]\}$ and $d_\el\in\DD$ for $0\le \el\le M-1$.
   \item $s_1(c)-s_1(\bar{a}_i)+2N_1-5$ preferences of type $\bar{a}_i\suc d_0\suc d_1\suc b_{M-2}\suc d_2\suc \cdots\suc b_1\suc d_{M-1}\suc d_M\suc d_{M+1}\suc c\suc\text{others}$, where $\AA\setminus\DD=\{c,\bar{a}_i\}\cup\{b_k: k\in[M-2]\}$ and $d_\el\in\DD$ for $0\le \el\le M+1$.
  \end{itemize}
  
  \item For every $j\in[m]$, we have the following preferences.
  \begin{itemize}
   \item $s_1(c)-s_1(y_j)+2N_1-2$ preferences of type $d_0\suc d_1\suc c\suc b_1\suc d_2\suc b_2\suc d_3\suc \cdots\suc b_{M-2}\suc d_{M-1}\suc y_j\suc \text{others}$, where $\AA\setminus\DD=\{c,y_j\}\cup\{b_k: k\in[M-2]\}$ and $d_\el\in\DD$ for $0\le \el\le M-1$.
   \item $s_1(c)-s_1(y_j)+2N_1-2$ preferences of type $y_j\suc d_0\suc d_1\suc b_{M-2}\suc d_2\suc \cdots\suc b_1\suc d_{M-1}\suc d_M\suc d_{M+1}\suc c\suc\text{others}$, where $\AA\setminus\DD=\{c,y_j\}\cup\{b_k: k\in[M-2]\}$ and $d_\el\in\DD$ for $0\le \el\le M+1$.
  \end{itemize}
 \end{enumerate}

 This finishes the description of $\PP_2$ and thus of \PP. Since $s_1(c)$ is upper bounded by $10m^7n^7$, \PP has only polynomially in ($m,n$) many preferences. Let $N_2=|\PP_2|$ and $N=|\PP|=N_1+N_2$. We summarize the Borda scores of every alternative from \PP in \Cref{tbl:borda_md}. Let the function $s(\cdot)$ maps every alternative in \AA to the Borda score it receives from $\PP$. We now claim that the two instance are equivalent.
 
 \begin{table*}[!htbp]
  \centering
  \begin{tabular}{c|c}
   Alternatives & Borda scores from \PP\\\hline\hline
   
   $a_i, \bar{a}_i \forall i\in[n]$ & $s(c)+2N-5$ \\
   $z_i, \forall i\in[n]$ & $s(c)+2N-2$\\
   $y_j, \forall j\in[m]$ & $s(c)+2N-2$\\
   $d\in\DD$ & $<s(c)-10m^2n^2$\\\hline
  \end{tabular}
  \caption{Borda scores of the alternatives in \AA from \PP.}\label{tbl:borda_md}
 \end{table*}
 
 In one direction, let us assume that the \TSAT instance is a \YES instance. Let $g:\XX\longrightarrow\{0,1\}$ be a satisfying assignment for the \TSAT instance. Let us consider the following profile \QQ where the maximum displacement distance of every preference in \QQ from its corresponding preference in \PP is at most $1$. The preferences in \QQ which corresponds to $\PP_1$ are as follows.
 
 \begin{enumerate}[(I)]
  \item For every $i\in[n]$, we have the following preferences.
  \begin{itemize}
   \item If $g(x_i)=1$, then
  \begin{itemize}
   \item $z_i\suc d\suc a_i\suc c\suc d^\pr\suc \text{others}$, for some $d, d^\pr\in\DD$
   \item $\bar{a}_i\suc z_i\suc d\suc c\suc d^\pr\suc \text{others}$, for some $d, d^\pr\in\DD$
  \end{itemize}
  \item Else
  \begin{itemize}
   \item $a_i\suc z_i\suc d\suc c\suc d^\pr\suc \text{others}$, for some $d, d^\pr\in\DD$
   \item $z_i\suc d\suc \bar{a}_i\suc c\suc d^\pr\suc \text{others}$, for some $d, d^\pr\in\DD$
  \end{itemize}
  \end{itemize}
  \item For every $j\in[m]$, if $C_j=(l_1\vee l_2\vee l_3)$ and $g$ makes $l_1=1$ (we can assume by renaming), then we have
  \begin{itemize}
   \item $f(l_1)\suc y_j\suc c\suc d\suc\text{others}$, for some $d\in\DD$
   \item $y_j\suc f(l_2)\suc c\suc d\suc\text{others}$, for some $d\in\DD$
   \item $y_j\suc f(l_3)\suc c\suc d\suc\text{others}$, for some $d\in\DD$
  \end{itemize}
 \end{enumerate}
 
 In every preference in \QQ corresponding to $\PP_2$, the alternative $c$ is moved to its left by $1$ position; it is possible because $c$ does not appear at the first position in any preference in $\PP_2$. We also move every alternative in $\AA\setminus(\DD\cup\{c\})$ to its right by $1$ position; it is possible since no alternative in $\AA\setminus(\DD\cup\{c\})$ appears at the last position in any preference in $\PP_2$ and every alternative in $\AA\setminus(\DD\cup\{c\})$ is followed immediately by some alternative in \DD in every preference in $\PP_2$. We summarize the Borda scores of every alternative from \QQ in \Cref{tbl:borda_1_fwd}. Hence $c$ is the unique Borda winner in \QQ and thus the \OB instance is a \YES instance.
 
 \begin{table*}[!htbp]
  \centering
  \begin{tabular}{c|c}
   Alternatives & Borda scores from \QQ\\\hline\hline
   
   $a_i, \bar{a}_i \forall i\in[n]$ & $s(c)-1$ \\
   $z_i, \forall i\in[n]$ & $s(c)-1$\\
   $y_j, \forall j\in[m]$ & $s(c)-1$\\
   $d\in\DD$ & $<s(c)-10mn$\\\hline
  \end{tabular}
  \caption{Borda scores of the alternatives in \AA from \QQ in the proof of \Cref{thm:borda_max}.}\label{tbl:borda_1_fwd}
 \end{table*}
 
 In the other direction, let us assume that there exists a profile \QQ such that the maximum displacement distance of every preference in \QQ is at most $1$ from its corresponding preference in \PP and $c$ is the unique Borda winner in \QQ. We first observe that, irrespective of \QQ (subject to the condition that its maximum displacement distance from \PP is at most $1$), no alternative from \DD wins in \QQ under the Borda voting rule. We can assume without loss of generality that, in every preference in \QQ, the alternative $c$ is moved to its left by $1$ position since $c$ never appears at the first position in any preference in \PP. We can also assume without loss of generality that every alternative in $\AA\setminus(\DD\cup\{c\})$ moves to its right by $1$ position in every preference corresponding to $\PP_2$ since no alternative in $\AA\setminus(\DD\cup\{c\})$ appears at the last position in any preference in $\PP_2$ and every alternative in $\AA\setminus(\DD\cup\{c\})$ is followed immediately by some alternative in \DD in every preference in $\PP_2$. We now observe that, for $c$ to become the unique Borda winner in \QQ, every alternative $z_i, i\in[n]$ must move to its right by one position in at least one of the two preferences in $\PP_1$ where $z_i$ appears at the first position (which are exactly the preferences in \Cref{borda:votesi}). However, this makes either $a_i$ or $\bar{a}_i, i\in[n]$ to appear at the first position in at least one of the two preferences in $\PP_1$ where $z_i$ appears at the first position. Let us now consider the following assignment $g:\{x_i: i\in[n]\}\longrightarrow\{0,1\}$ defined as: $g(x_i)=0$ if $a_i$ appears at the first position in at least one of the two preferences in $\PP_1$ where $z_i$ appears at the first position; we define $g(x_i)=1$ otherwise. We claim that $g$ is a satisfying assignment for the \TSAT instance. Suppose not, then let us assume that $g$ does not satisfy $C_j=(l_1\vee l_2\vee l_3)$ for some $j\in[m]$. We observe that, for $c$ to become the unique Borda winner, the alternative $y_j$ must move to its right by one position in at least one of the $3$ preferences in $\PP_1$ where it appears at the first position (which are exactly the preferences in \Cref{borda:votesj}). However, it follows from the definition of $g$ that this would make the Borda score of at least one alternative in $\{f(l_1), f(l_2), f(l_3)\}$ same as the Borda score of $c$ which contradicts our assumption that $c$ is the unique Borda winner in \QQ. Hence $g$ is a satisfying assignment and thus the \TSAT instance is a \YES instance.
 
 For proving the result for the swap and footrule distances, we change the preferences in $\PP_2$ by shifting every alternative in $\AA\setminus(\DD\cup\{c\})$ to their right by $1$ position. Then analogous argument proves the result.
\end{proof}

The proof of \Cref{thm:borda_max} can be adapted to prove the following result.
\begin{theorem}\label{cor:scr}
 \longversion{Let $\alpha=(\alpha_i)_{i\in[m]}$ be a score vector with $\alpha_j-\alpha_{j+1}=\alpha_{j+1}-\alpha_{j+2}>0$ for some $j\in[\nfrac{m}{2}]$. The \OB problem is \NPC for the scoring rule with score vector $\alpha$ for the swap and the maximum displacement distances even when $\delta=1$. Hence, the \OB problem is \NPC for the scoring rule with score vector $\alpha$ for the footrule distance even when $\delta=2$.}
 \shortversion{Let $\alpha=(\alpha_i)_{i\in[m]}$ be a score vector with $\alpha_j-\alpha_{j+1}=\alpha_{j+1}-\alpha_{j+2}>0$ for some $j\in[\nfrac{m}{2}]$. The \OB problem is \NPC for the scoring rule with score vector $\alpha$ for the swap and the maximum displacement distances even when $\delta=1$ and thus for the footrule distance even when $\delta=2$.}
\end{theorem}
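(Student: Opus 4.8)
The plan is to adapt the reduction from \TSAT in the proof of \Cref{thm:borda_max} with essentially no change to its logic. That proof uses only one feature of the Borda vector: consecutive scores differ by the constant $1$, so that displacing $c$ one position to the left, or an ``active'' alternative one position to the right, changes a score by exactly $\pm1$ no matter where in the preference it happens. Under the hypothesis of \Cref{cor:scr} we instead have three consecutive positions $j,j+1,j+2$ (with $j\le m/2$) on which $\alpha$ is arithmetic with common difference $d\eqdef\alpha_j-\alpha_{j+1}=\alpha_{j+1}-\alpha_{j+2}>0$. The idea is to build the \TSAT instance exactly as in \Cref{thm:borda_max} but to place, in every preference, all the alternatives whose positions will ever be touched into the window $\{j,j+1,j+2\}$, filling the remaining $m-O(1)$ positions with the dummy alternatives of \DD (and the other non-dummies), keeping every \DD-alternative in the top positions at most a constant number of times. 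Then every modification turning the original profile \PP into the briber's profile \QQ is a single adjacent transposition inside $\{j,j+1,j+2\}$, hence changes the two scores involved by exactly $\pm d$, and replacing every ``$1$'' in the proof of \Cref{thm:borda_max} by ``$d$'' makes the whole argument go through. A pleasant by-product is that, since each modification is a single adjacent transposition, it simultaneously has swap distance $1$, maximum displacement distance $1$, and footrule distance $2$, so one construction settles all three cases at once.

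Concretely: keep the alternatives $\{a_i,\bar a_i,z_i:i\in[n]\}\cup\{c\}\cup\{y_j:j\in[m]\}\cup\DD$ with $|\DD|$ a large polynomial. In $\PP_1$ the variable preferences become $\DD_{j-1}\succ z_i\succ a_i\succ(\text{dummies})$ and $\DD_{j-1}\succ z_i\succ\bar a_i\succ(\text{dummies})$ (so $z_i$ at position $j$, $a_i/\bar a_i$ at $j+1$), and the clause preferences become $\DD_{j-1}\succ y_j\succ f(l_k)\succ(\text{dummies})$ (so $y_j$ at $j$, $f(l_k)$ at $j+1$); the alternative $c$ is parked at a fixed position below the window in every $\PP_1$-preference and never moves. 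The padding profile $\PP_2$ is rebuilt so that each of its preferences contains, besides dummies, only $c$ (at position $j+1$) and exactly one active alternative $x\in\{a_i,\bar a_i,z_i,y_j\}$ (at position $j$); from such a preference the unique way to raise $c$'s score is to transpose $x$ and $c$, which gives $c$ exactly $+d$ and $x$ exactly $-d$. Choosing, for each $x$, the multiplicity of the $x$-preferences in $\PP_2$ to be $s_1(c)-s_1(x)$ plus the appropriate $\Theta(N_1)$ offset (mirroring the ``$2N_1-2$'' and ``$2N_1-5$'' of the Borda proof) makes the \PP-score of every active alternative exceed $s(c)$ by a controlled amount, tuned so that in \QQ the alternative $c$ can, and (to win uniquely) must, overtake each of them by exactly the slack of one adjacent swap. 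The forward and reverse arguments are then verbatim those of \Cref{thm:borda_max}: in \QQ one reads off $g(x_i)$ from which of the two $z_i$-preferences has $z_i$ pushed out of position $j$; and if some clause $C_j$ were unsatisfied, then for $c$ to win uniquely $y_j$ must be pushed out of position $j$ in one of the three $C_j$-preferences, which promotes some $f(l_k)\in\{a_i,\bar a_i\}$ that the definition of $g$ has already promoted in its variable preference, so that alternative's score reaches $s(c)$ in \QQ, contradicting the uniqueness of $c$. The footrule statement, as usual, then follows since a pair of preferences at footrule distance $2$ is exactly a pair at swap distance $1$.

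The main obstacle is the re-engineering of $\PP_2$. In the Borda proof a single $\PP_2$-preference simultaneously shifts $c$ and all of the interleaved filler alternatives, which works precisely because every gap is $1$; here each $\PP_2$-preference can perform only one ``$c$ versus one active alternative'' exchange within the window, so the multiplicities must be recomputed so that (a) every \DD-alternative still reaches the top positions only $O(1)$ times and hence cannot win, (b) $|\PP_2|$ stays polynomial in $n$ and $m$, and (c) the \PP-scores of $c$ and of $a_i,\bar a_i,z_i,y_j$ differ by exactly the right multiples of $d$ for the two forcing arguments to bite with strict inequality. This bookkeeping is routine but tedious; everything else is a transcription of \Cref{thm:borda_max} with $1$ replaced by $d$.
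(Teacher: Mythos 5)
Your overall strategy is the one the paper itself intends: the paper's entire proof of \Cref{cor:scr} is a single sentence saying that the reduction of \Cref{thm:borda_max} ``can be adapted,'' and localising the score-relevant moves to the arithmetic window $\{j,j+1,j+2\}$ while replacing every ``$1$'' by $d=\alpha_j-\alpha_{j+1}$ is the natural (and surely intended) adaptation. You also correctly identify why the hypothesis demands two equal consecutive gaps rather than one positive gap, and why the footrule case comes for free.

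However, the part you set aside as ``routine but tedious'' is where the actual content lives, and as sketched it has a genuine gap. First, the claim that every modification of \PP is a single adjacent transposition inside the window is only something you can impose in the forward direction; in the reverse direction the briber may perform any single swap (for swap distance $1$) or any disjoint collection of adjacent transpositions (for maximum displacement $1$), including at positions $p\notin\{j,j+1,j+2\}$ where $\alpha_p-\alpha_{p+1}$ is unknown and possibly positive and large. Such moves let the briber raise $c$ or lower a competitor by amounts unrelated to $d$, so the ``verbatim'' reverse argument does not go through until these moves are neutralised or budgeted into the calibration; in \Cref{thm:borda_max} this is handled by the without-loss-of-generality shifts precisely because every consecutive gap equals $1$, a fact you no longer have. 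Second, and more seriously, when $j>1$ the $j-1$ positions above the window must be occupied in every preference, and the Borda trick of taking $s_1(c)-s_1(x)$ copies of a calibration preference can then make $|\PP_2|$ super-polynomial or fail outright: for $\alpha=(2^{m},2,1,0,\dots,0)$ one has $j=2$, and any alternative that is ever placed first outscores any alternative that never is, so the dummies you park at position $1$ (or the gap $s_1(c)-s_1(x)$ itself) destroy the reduction. Repairing this requires arranging that $c$, every competitor, and the top-filling alternatives occupy each non-window position in a balanced way across the profile, so that all non-window contributions cancel and the residual differences are small multiples of $d$. That balancing step is genuinely new relative to \Cref{thm:borda_max} (where it is vacuous) and is absent from your proposal.
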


We will use the following structural lemma in our proofs for the maximin and Copeland$^\alpha$ voting rules.

\begin{lemma}\label{lem:maximin}
 Let $\AA=\BB\cup\CC$ be a set of alternatives, $Z_{(a,b)}, a,b\in\BB, a\ne b$, non-negative integers which are either all even or all odd, and $Z_{(a,b)}=-Z_{(b,a)}$ for every $a,b\in\BB, a\ne b$. Let $|\CC|>10 K^2|\BB|^2 \sum_{a,b\in\BB, a\ne b} |Z_{(a,b)}|$. Then there exists a profile \PP such that 
 \begin{enumerate}[(i)]
  \item $\DD_\PP(a,b)=Z_{(a,b)}$ for every $a,b\in\BB, a\ne b$, $\DD_\PP(b^\pr,c^\pr)>0$ for every $b^\pr\in\BB, c^\pr\in\CC$.
  
  \item for every alternative $b\in\BB$, there are at least $\nfrac{K}{2}$ alternatives from \CC in the immediate $\nfrac{K}{2}$ positions on both left and right of $a$.
  
  \item For every $c\in\CC$, there exists at most $1$ preference in \PP where the distance of $c$ from any alternative in \BB is less than $\nfrac{K}{2}$.
  
 \end{enumerate}
 Moreover, \PP contains poly($\sum_{a,b\in\BB, a\ne b} |Z_{(a,b)}|$) many preferences, and there is an algorithm for constructing \PP which runs in time polynomial in $\sum_{a,b\in\BB, a\ne b} |Z_{(a,b)}|+m$.
\end{lemma}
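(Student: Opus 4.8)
The plan is to realise the prescribed majority margins among $\BB$ by a McGarvey-style construction on $\BB$ alone, and then to \emph{inflate} every vote of that profile by inserting fresh blocks of $\CC$-alternatives around the alternatives of $\BB$. The only role of the hypothesis $|\CC|>10K^2|\BB|^2\sum_{a,b}|Z_{(a,b)}|$ is to guarantee that $\CC$ is large enough that no $\CC$-alternative ever has to be placed near $\BB$ in more than one vote.

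\emph{Step 1: McGarvey on $\BB$.} First I would build a profile $\PP_1$, each vote of which is a total order on $\BB$ only, with $\DD_{\PP_1}(a,b)=Z_{(a,b)}$ for every $a,b\in\BB$. If the $Z_{(a,b)}$ are all even this is the textbook construction: for each ordered pair $(a,b)$ with $Z_{(a,b)}>0$ add $Z_{(a,b)}/2$ pairs of votes of the form $a\suc b\suc R$ and $(\text{reverse of }R)\suc a\suc b$, where $R$ is a fixed order on $\BB\setminus\{a,b\}$; such a pair contributes $+2$ to the $(a,b)$-margin and $0$ to every other pairwise margin. If the $Z_{(a,b)}$ are all odd, start from a single base vote $b_1\suc b_2\suc\cdots\suc b_t$ (which sets every margin to $\pm1$) and then correct each margin to its target using the same gadget pairs. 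Finally append at most two \emph{reversal pairs} (a vote on $\BB$ together with its reverse, which leave every margin unchanged) so that $\PP_1$ has at least three votes. This yields $|\PP_1|=O(\sum_{a,b}|Z_{(a,b)}|)+O(1)$ votes, computable in time polynomial in $\sum_{a,b}|Z_{(a,b)}|+m$.

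\emph{Step 2: inflation with dummies.} To each vote $\pi$ of $\PP_1$ I would assign a private set $\CC_\pi\subseteq\CC$ of exactly $(|\BB|+1)K$ dummy alternatives, with all the $\CC_\pi$ pairwise disjoint; this is possible because $\sum_\pi|\CC_\pi|=O\!\left(K|\BB|\big(\sum_{a,b}|Z_{(a,b)}|+1\big)\right)$, which the hypothesised lower bound on $|\CC|$ comfortably exceeds. The leftover dummies form a reservoir $\CC_0$. Replace $\pi$ by the order on all of $\AA$ obtained by keeping the alternatives of $\BB$ in the order they have in $\pi$, inserting one block of $K$ dummies of $\CC_\pi$ immediately before the first $\BB$-alternative, one between each consecutive pair, and one after the last, and then listing every remaining alternative of $\AA$ (that is, $\CC_0$ together with all $\CC_{\pi'}$ for $\pi'\ne\pi$) at the bottom in one fixed canonical order. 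Let $\PP$ be the resulting profile; it has the same number of votes as $\PP_1$.

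\emph{Step 3: verification, and the main difficulty.} Since inflation only inserts $\CC$-alternatives and never reorders the $\BB$-alternatives inside a vote, $\NN_\PP(a,b)=\NN_{\PP_1}(a,b)$ for $a,b\in\BB$, so $\DD_\PP(a,b)=Z_{(a,b)}$; moreover, in every vote of $\PP$ each $b\in\BB$ has a full block of $K\ge\nfrac{K}{2}$ consecutive dummies on each side, which gives property (ii). A dummy $c$ lies at distance less than $\nfrac{K}{2}$ from some alternative of $\BB$ only in the unique vote $\pi$ with $c\in\CC_\pi$, since in every other vote $c$ sits below all of $\BB$, separated from the lowest $\BB$-alternative by a block of $K$ dummies; this gives (iii). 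For the remaining half of (i), fix $b\in\BB$ and $c\in\CC$: in the $\ge|\PP|-1$ votes in which $c$ is not a private dummy, $c$ lies below all of $\BB$ and so $b\suc c$, while $c$'s single special vote contributes at worst $-1$ to $\DD_\PP(b,c)$; hence $\DD_\PP(b,c)\ge(|\PP|-1)-1\ge1>0$ because $|\PP|\ge3$. I expect the only genuinely delicate point to be the bookkeeping that simultaneously secures (iii) and the ``$b$ beats $c$'' clause of (i): one must ensure that the dummies padding a given vote are truly private to it --- this is exactly what forces the quantity $\sum_\pi|\CC_\pi|$ and hence the lower bound on $|\CC|$ --- and that those same dummies are provably far from $\BB$ in all other votes; everything else is routine McGarvey-style margin counting, together with the harmless parity split and the degenerate all-zero case absorbed by the reversal pairs.
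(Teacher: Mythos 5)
Your proposal is correct and follows essentially the same route as the paper's proof: McGarvey-style pairs of votes (one with $a\succ b$ at the top, one with the rest of $\BB$ reversed and $a\succ b$ at the bottom) to realise the even margins, a single base order on $\BB$ to absorb the odd-parity case, and pairwise-disjoint blocks of $\CC$-dummies interleaved around the $\BB$-alternatives in each vote to secure conditions (ii) and (iii) and the $\DD_\PP(b',c')>0$ clause. The only difference is organisational --- you build the profile on $\BB$ first and then inflate with private dummies, while the paper inserts the dummy blocks from the outset --- and your explicit padding before the first $\BB$-alternative and the reversal-pair count argument for $\DD_\PP(b',c')>0$ are, if anything, slightly more careful than the paper's ``it is immediate.''
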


\longversion{
\begin{proof}
 Let $|\BB|=\el, \CC=\bigcup_{a,b\in\BB, a\ne b} \CC_{(a,b)}$ where $\CC_{(a,b)}$s are pairwise disjoint and $|\CC_{(a,b)}|\ge 4k\el(|Z_{(a,b)}|+|Z_{(b,z)}|)$ for every $a,b\in\BB, a\ne b$. First let us prove the result when $Z_{(a,b)}, a,b\in\BB, a\ne b$, are non-negative integers which are either all even. For $a,b\in\BB, a\ne b$, let $\CC_{(a,b)}=\bigcup_{i=1}^{2(|Z_{(a,b)}|+|Z_{(b,z)}|)} C_{(a,b)}^i$ where $C_{(a,b)}^i$s are pairwise disjoint and $|C_{(a,b)}^i|\ge 2k\el$. For $a,b\in\BB, a\ne b$ such that $Z_{(a,b)}>0$, we have the following preferences. Let $\BB=\{b_i: i\in[\el-2]\}\cup\{a,b\}$.
 \begin{itemize}
  \item $\nfrac{(Z_{(a,b)})}{2}$ copies of: $a\suc \CC_{(a,b)}^1\suc b\suc\CC_{(a,b)}^2 \suc b_1\suc \CC_{(a,b)}^3\suc \dots \suc b_{\el-2}\suc \CC_{(a,b)}^\el\cdots$
  
  \item $\nfrac{(Z_{(a,b)})}{2}$ copies of: $b_{\el-2}\suc \CC_{(a,b)}^{\el+1}\suc b_{\el-3}\suc \CC_{(a,b)}^{\el+2}\suc\cdots \suc  \CC_{(a,b)}^{2\el-1}\suc a\suc \CC_{(a,b)}^{2\el}\suc b\suc\cdots$
 \end{itemize}

 Let \PP be the resulting profile. It is immediate that \PP satisfies all the conditions in the statement. This proves the statement when $Z_{(a,b)}, a,b\in\BB, a\ne b$, are non-negative integers which are either all even.
 
 From the proof above, it follows that the statement holds even when $|\CC|=8 K^2|\BB|^2 \sum_{a,b\in\BB, a\ne b} |Z_{(a,b)}|$. Let $\BB=\{b_i: i\in[\el]\}$. Let $\CC=\CC^\pr\cup (\bigcup_{i=1}^\el \CC_i)$ where $|\CC_i|=5K$ for every $i\in[\el]$. Let us consider the following preference.
 
 $$\suc = b_1 \suc \CC_1\suc b_2\suc \CC_2 \suc \cdots\suc \CC_{\el-1}\suc b_\el\suc \CC_\el\suc\cdots$$
 
 Let us define $Z^\pr:\BB\times\BB\longrightarrow\ZB$ as follows.
 
 $$
 Z^\pr(b_i,b_j) = 
 \begin{cases}
  Z(b_i,b_j)+1 & \text{ if } i>j\\
  Z(b_i,b_j)-1 & \text{ otherwise}
 \end{cases}
 $$
 
 Let \RR be the profile which satisfies the conditions in the statement for the set of alternatives $\BB\cup\CC^\pr$ and the integers $Z_{(a,b)}^\pr, a,b\in\BB, a\ne b$. We append the alternatives in $\cup_{i=1}^\el \CC_i$ at the bottom of every preference in \RR; let the resulting profile be \QQ. Clearly, $(\QQ,\suc)$ satisfies all the conditions in the statement.
\end{proof}
}

We now present our result for the maximin voting rule.

\begin{theorem}\label{thm:ob_maximin_swap}
 \longversion{The \OB problem is \NPC for the maximin voting rule for the swap distance and the maximum displacement distance even when $\delta=1$. Hence, the \OB problem is \NPC for the maximin voting rule for the footrule distance even when $\delta=2$.}
 \shortversion{The \OB problem is \NPC for the maximin voting rule for the swap distance and the maximum displacement distance even when $\delta=1$ and thus for the footrule distance even when $\delta=2$.}
\end{theorem}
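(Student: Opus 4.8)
The plan is to follow the same template as \Cref{thm:borda_max}: membership in \NP is immediate (guess the bribed profile $\QQ$, check that every preference is within distance \delta of its counterpart and that $c$ is the unique maximin winner, all in polynomial time), and \NP-hardness is by reduction from \TSAT. As in the previous proofs, the constructed instance consists of a small collection of \emph{primary} gadget preferences — two per variable and three per clause — carrying the combinatorial content, together with a large collection of \emph{auxiliary} preferences whose only job is to pin down the pairwise margins among a small \emph{core} set of alternatives while rendering a huge \emph{dummy} set harmless. The auxiliary preferences are exactly what \Cref{lem:maximin} is designed to produce: take $\BB=\{c\}\cup\{a_i,\bar a_i,z_i: i\in[n]\}\cup\{y_j: j\in[m]\}$, possibly with a constant number of extra ``threshold'' alternatives; take $\CC$ to be a set of $\mathrm{poly}(n,m)$ dummies; choose target integers $Z_{(a,b)}$ (all of the same parity, which is consistent since every margin of an $n$-voter profile is $\equiv n \bmod 2$) and invoke \Cref{lem:maximin} to obtain an auxiliary profile realizing $\DD_\PP(a,b)=Z_{(a,b)}$ on $\BB$, with every dummy beaten by every core alternative and, crucially, separated from every core alternative by $\ge K/2$ dummies in all but at most one auxiliary preference.

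That separation guarantee does the heavy lifting for $\delta=1$. Since no two core alternatives are ever adjacent in an auxiliary preference, no feasible $\delta=1$ perturbation of such a preference — a single adjacent swap under the swap distance, a union of disjoint adjacent swaps under the maximum displacement distance — can change any core--core margin; and since each dummy is adjacent to a core alternative in at most one auxiliary preference and in $O(1)$ primary preferences, while its margin against every core alternative is made $\gg K$ from the start, no dummy can ever attain a non-negative maximin score, so $c$ (with positive maximin score) always beats every dummy. Hence the whole problem reduces to what happens to the core--core margins, which can be moved only by perturbing the $2n+3m$ primary preferences. These are designed so that in each of them the core alternatives sit at the top, pairwise separated by a single dummy except where interaction is wanted, so that the only margin-changing $\delta=1$ perturbation in a variable preference is a swap ``activating $x_i$'' or one ``activating $\bar x_i$'', and in a clause preference a swap dropping the binding margin of $y_j$ through the alternative of one of its literals. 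The target margins $Z$ are tuned so that: (i) $c$'s maximin score equals a fixed $\theta$ and $c$ is untouched by every primary preference, hence frozen; (ii) in the input profile each $z_i$ and each $y_j$ is tied with or just above $c$ while every literal alternative is just below; and (iii) the only way to pull every $z_i$ and every $y_j$ strictly below $\theta$ without pushing some literal alternative up to $\theta$ is to make, for each $i$, a consistent choice of which of $x_i,\bar x_i$ to activate and, for each $j$, to drop $y_j$ through a literal of $C_j$ that has been activated.

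For the forward direction, a satisfying assignment yields a bribery that uses a single adjacent swap in each affected primary preference, which is feasible for the swap distance with $\delta=1$ and, being a single adjacent swap, also for the maximum displacement distance with $\delta=1$. For the reverse direction one argues against the more permissive maximum-displacement-$1$ perturbations (which subsumes the swap-$1$ case), reading off a satisfying assignment from which literal each $z_i$ was activated through. The footrule claim is then free: a preference lies within footrule distance $2$ of another exactly when it differs by at most one adjacent transposition, i.e.\ the feasible perturbation sets for footrule $\delta=2$ and swap $\delta=1$ coincide, so the very same reduction gives \NP-hardness for the footrule distance with $\delta=2$.

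The main obstacle is item (iii): choosing the core set (including how many ``threshold'' alternatives are needed), the exact shape of the two variable and three clause preferences, and the exact target margins $Z$ so that (a) the per-variable choice is provably an XOR — activating both $x_i$ and $\bar x_i$, or neither, must create an alternative tying or beating $c$; (b) a clause preference can be repaired iff one of its literals was activated; and (c) none of this breaks under the extra freedom of performing several disjoint adjacent swaps inside a single primary preference. Making the parities line up (all margin changes are by $\pm 2$, so every threshold must be reachable in an even number of steps) and ensuring that a literal alternative, occurring in two clauses of the \TSAT instance, cannot accumulate enough margin gains across its two clause preferences to reach $\theta$, is the delicate part; the remainder is the same style of bookkeeping already carried out in \Cref{thm:borda_max}.
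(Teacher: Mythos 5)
Your proposal follows the same route as the paper's proof: membership in \NP is immediate, hardness is by reduction from \TSAT with two primary preferences per variable and three per clause, and \Cref{lem:maximin} supplies an auxiliary profile that fixes the pairwise margins on a small core set while burying a large dummy set; your argument that a $\delta=1$ perturbation (one adjacent swap, or a union of disjoint adjacent swaps for maximum displacement) cannot touch any core--core margin inside an auxiliary preference, and that dummies can never catch up, is exactly the paper's use of the separation guarantee, and your reduction of the footrule-$\delta=2$ case to the swap-$\delta=1$ case is also the paper's. The problem is that the proposal stops precisely where the proof has to start. What you label ``the main obstacle'' --- the choice of core alternatives, the exact shape of the variable and clause preferences, and the target margins that make the per-variable choice an exclusive one and make a clause repairable iff one of its literals was activated --- is not a detail to be filled in later; it is the entire combinatorial content of the hardness argument, and nothing in the proposal lets a reader verify that such a gadget exists, that the parities work out, or that a literal alternative occurring in two clauses cannot cheat. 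As it stands there is no checkable reduction.

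For concreteness, here is how the paper discharges your obstacle, and why the missing pieces are not generic. The core set is augmented by one alternative $w_i$ \emph{per variable} (not a constant number of thresholds) plus a single alternative $b$; the primary gadgets are \emph{two copies each} of $z_i\suc a_i\suc w_i\suc\cdots$ and $z_i\suc \bar a_i\suc w_i\suc\cdots$, and $y_j\suc f(l)\suc w_{i(l)}\suc\cdots$ for each literal $l$ of $C_j$. The auxiliary profile sets $\DD_\PP(b,c)=10$ (freezing $c$'s maximin score at $-10$; note it is negative, not positive as you assert --- only its invariance and $c$'s pairwise dominance over dummies matter), $\DD_\PP(w_i,a_i)=\DD_\PP(w_i,\bar a_i)=\DD_\PP(a_i,z_i)=\DD_\PP(\bar a_i,z_i)=8$, $\DD_\PP(f(l),y_j)=10$, and $\DD_\PP(y_1,b)=\DD_\PP(b,w_i)=12$. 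Since each swap moves a margin by exactly $2$, pulling $z_i$ below $-10$ forces two swaps against the \emph{same} literal alternative among the four variable copies, which is how the exclusive choice is enforced; a literal alternative falls below $-10$ only if $\DD(w_{i(l)},f(l))$ gains $+4$, which it can collect either from its two variable copies (the activated literal) or from its two clause occurrences (a deactivated literal) --- this is exactly where the \TSAT promise that each literal occurs in exactly two clauses is consumed, and it is the mechanism that prevents the ``cheating'' you worry about; and $y_j$ falls below $-10$ only by promoting some $f(l)$ above it, which spends one of that literal's two clause slots and is therefore affordable only for an activated literal. Without this (or an equivalent) explicit construction and the accompanying score table, the equivalence of the two instances cannot be verified, so the proposal is a correct plan but not a proof.
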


\begin{proof}
 Let us first prove the result for the swap distance. The \OB problem for the maximin voting rule for the swap distance is clearly in \NP. To prove \NP-hardness, we reduce from \TSAT to \OB. Let $(\XX=\{x_i: i\in[n]\},\CC=\{C_j: j\in[m]\})$ be an arbitrary instance of \TSAT. Let us consider the following instance $(\AA,\PP,c,\delta=1)$ of \OB.
 \begin{align*}
  \AA &= \{a_i, \bar{a}_i,w_i, z_i: i\in[n]\} \cup \{y_j: j\in[m]\}\\
  &\cup \{c, b\} \cup \DD, \text{ where } |\DD|=10m^5 n^5
 \end{align*}
 
 We construct the profile \PP which is a disjoint union of two profiles, namely, $\PP_1$ and $\PP_2$. We first describe $\PP_1$ below. While describing the preferences below, whenever we say `others' , the unspecified alternatives are assumed to be arranged in such a way that, for every unspecified alternative $a\in\AA\setminus\DD$, there are at least $10$ alternatives from \DD in the immediate $10$ positions on both left and right of $a$. We also ensure that any alternative in \DD appears within top $10mn$ positions at most once in $\PP_1$ whereas every alternative in $\AA\setminus\DD$ appears within top $10mn$ position in every preference in $\PP_1$. This is possible because $|\DD|=10m^5 n^5$ and $|\AA\setminus\DD|=4n+m+2$. Let $f$ be a function defined on the set of literals as $f(x_i)=a_i$ and $f(\bar{x}_i)=\bar{a}_i$ for every $i\in[n]$.  
 \begin{enumerate}[(I)]
  \item For every $i\in[n]$, we have the following preferences.
  \begin{itemize}
   \item $2$ copies of $z_i\suc a_i\suc w_i\suc \text{others}$
   \item $2$ copies of $z_i\suc \bar{a}_i\suc w_i\suc \text{others}$
  \end{itemize}
  \item For every $C_j=(l_1\vee l_2\vee l_3), j\in[m]$, we have the following preferences. Let $h$ be a function defined on the set of literals as $h(x_i)=w_i$ and $h(\bar{x}_i)=w_i$ for every $i\in[n]$.
  \begin{itemize}
   \item $y_j\suc f(l_1)\suc h(l_1)\suc\text{others}$
   \item $y_j\suc f(l_2)\suc h(l_2)\suc\text{others}$
   \item $y_j\suc f(l_3)\suc h(l_3)\suc\text{others}$
  \end{itemize}
 \end{enumerate}
 
 Due to \Cref{lem:maximin} (using $K=30$ in \Cref{lem:maximin}), there exists a profile $\PP_2$ consisting of $\text{poly}(m,n)$ preferences such that the weighted majority graph of the profile $\PP=\PP_1\cup\PP_2$ is as described in \Cref{tbl:maximin_score_all}. Moreover, due to \Cref{lem:maximin}, in every preference in $\PP_2$, for every alternative $a\in\AA\setminus\DD$, there are at least $10$ alternatives from \DD in the immediate $10$ positions on both left and right of $a$, and the distance of every $d\in\DD$ from every $a\in\AA\setminus\DD$ is less than $10$ at most once in $\PP_2$. This finishes the description of \PP. We now claim that the two instances are equivalent.
  
%
%
%
%
 
 \begin{table*}
  \centering
  \begin{adjustbox}{max width=\textwidth}
  \begin{tabular}{|c|c|c|c|}\hline
   Edges in weighted majority graph & weight in \PP & weight in \QQ & Remark \\\hline\hline
   
   $(b,c)$ & $10$ & $10$ & Maximin score of $c$ is $-10$ in \PP and \QQ\\\hline
   
   \makecell{$(w_i,a_i), (w_i,\bar{a}_i) \forall i\in[n]$} & $8$ & $12$ & \makecell{Maximin scores of $a_i, \bar{a}_i, i\in[n]$ are $-8$ in \PP\\ Maximin scores of $a_i, \bar{a}_i, i\in[n]$ are $-12$ in \QQ}\\\hline
   
   \makecell{$(a_i,z_i), (\bar{a}_i,z_i) \forall i\in[n]$} & $8$ & $-$ & \makecell{Maximin scores of $z_i, i\in[n]$ are $-8$ in \PP}\\\hline
   
   \makecell{If $g(x_i)=1$, then $(\bar{a}_i,z_i)$; else $(a_i,z_i), i\in[n]$}& $-$ & $12$ &\makecell{Maximin score of $z_i$ is $-12$ in \QQ}\\\hline
   
   \makecell{$\forall j\in[m]$ if $C_j=(l_1\vee l_2\vee l_3)$, then\\ $(f(l_k),y_j)\forall k\in[3]$}& $10$ & $-$ & \makecell{Maximin score of $y_j$ is $-10$ in \PP}\\\hline
   
   \makecell{$\forall j\in[m]$ if $C_j=(l_1\vee l_2\vee l_3)$, then\\ $(f(l_k),y_j)$ for some $ k\in[3]$}& $-$ & $12$ & \makecell{Maximin score of $y_j$ is $-12$ in \QQ}\\\hline
   
   $(y_1,b), (b,w_i), \forall i\in[n]$& $12$ & $12$ & \makecell{Maximin scores of $b, w_i, i\in[n]$\\ are $-12$ in \PP and \QQ}\\\hline
   
   $(a,d)\forall a\in\AA\setminus\DD, d\in\DD$ & $10mn$ & $10mn$ & Maximin score of $d$ is $-10mn$ in \PP and \QQ\\\hline
   Any edge not mentioned above & $0$& $-$ & Weights are same in both \PP and \QQ \\\hline
  \end{tabular}
  \end{adjustbox}
  \caption{Weighted majority graph for \PP in \Cref{thm:ob_maximin_swap}.}\label{tbl:maximin_score_all}
 \end{table*}

 In one direction, let us assume that the \TSAT instance is a \YES instance with a satisfying assignment $g:\XX\longrightarrow\{0,1\}$. Let us consider the following profile \QQ where every preference is obtained from the corresponding preference in \PP by performing at most $1$ swap. The preferences in \QQ which corresponds to $\PP_1$ are as follows.
 
 \begin{enumerate}[(I)]
  \item For every $i\in[n]$, we have the following preferences.
  \begin{itemize}
   \item If $g(x_i)=1$, then
  \begin{itemize}
   \item $2$ copies of $z_i\suc w_i\suc a_i\suc \text{others}$
   \item $2$ copies of $\bar{a}_i\suc z_i\suc w_i\suc \text{others}$
  \end{itemize}
  \item Else
  \begin{itemize}
   \item $2$ copies of $a_i\suc z_i\suc w_i\suc \text{others}$
   \item $2$ copies of $z_i\suc w_i\suc \bar{a}_i\suc \text{others}$
  \end{itemize}
  \end{itemize}
  \item For every $j\in[m]$, if $C_j=(l_1\vee l_2\vee l_3)$ and $g$ makes $l_1=1$ (we can assume by renaming), then we have
  \begin{itemize}
   \item $f(l_1)\suc y_j\suc h(l_1)\suc\text{others}$
   \item $y_j\suc h(l_2)\suc f(l_2)\suc\text{others}$
   \item $y_j\suc h(l_3)\suc f(l_3)\suc\text{others}$
  \end{itemize}
 \end{enumerate}
  
 The preferences in $\PP_2$ remain unchanged in \PP and \QQ. It can be verified that $c$ is the unique maximin winner in \QQ. We describe the weighted majority graph for \QQ in \Cref{tbl:maximin_score_all} which shows that $c$ is the unique maximin winner in \QQ.
 
%
 
 In the other direction, let $\QQ=\QQ_1\cup\QQ_2$ be a profile where $c$ is the unique maximin winner, the sub-profile $\QQ_k$ corresponds to $\PP_k$ for $k\in[2]$, and every preference in \QQ is at most $1$ swap away from its corresponding preference in \PP. We begin with the observation that, since in every preference in \QQ, for every alternative $a\in\AA\setminus\DD$, there are at least $10$ alternatives from \DD in the immediate $10$ positions on both left and right of $a$ and, for every alternative $d\in\DD$, there is at most two preferences in $\PP_2$ where the distance of $d$ from any alternative in $\AA\setminus\DD$ is at most $10$, performing any one swap in any preference in $\PP_2$ leaves the maximin score of every alternative in $\AA\setminus\DD$ unchanged. Also, it is clear that any alternative in \DD can never win by performing at most one swap per preference in \PP. So, let us assume without loss of generality, that $\QQ_2=\PP_2$. Hence, the weighted majority graph of $\PP_1\cup\QQ_2$ is also as described in the second column of \Cref{tbl:maximin_score_all}. Since, there are at least $10$ alternatives from \DD in the immediate $10$ positions on both left and right of $c$ in every preference in $\PP_1$, the maximin score of $c$ is $-10$ in \QQ. Hence, for $c$ to win uniquely, $z_i$ must be preferred over either $a_i$ or $\bar{a}_i$ in at least twice of the $4$ preferences in $\PP_1$ where $z_i$ appears at the first position. Let us now consider the following assignment $g:\{x_i: i\in[n]\}\longrightarrow\{0,1\}$ defined as: $g(x_i)=0$ if $\bar{a}_i$ is preferred over $z_i$ twice among the preferences in $\QQ_1$ which correspond to the $4$ preferences in $\PP_1$ where $z_i$ appears at the first position; otherwise $g(x_i)=1$. We claim that $g$ is a satisfying assignment for the \TSAT instance. Suppose not, then let us assume that $g$ does not satisfy $C_j=(l_1\vee l_2\vee l_3)$ for some $j\in[m]$. We first observe that, since $C_j$ is not satisfied, from the definition of $g$, for every $k\in[3]$, the maximin score of $f(l_k)$ is $-8$ in the profile $\QQ_1^k\cup\QQ_2$, where $\QQ_1^k$ contains every preference in $\QQ_1$ except the $2$ preferences corresponding to the preferences $\suc_1, \suc_2$ in $\PP_1$ where $f(l_k)$ appears immediately after $y_t$ for some $t\in[m]$ and contains $\suc_1$ and $\suc_2$. Now, for $c$ to win uniquely, there must exist a $k\in[3]$ such that $f(l_k)$ is preferred over $y_j$ in at least one of the three preferences in $\QQ_1$ which corresponds to the $3$ preferences corresponding to $j$. Let us assume without loss of generality that $f(l_1)$ is preferred over $y_j$ in one of the three preferences in $\QQ_1$ which corresponds to the $3$ preferences corresponding to $j$. Then the maximin score of $f(l_1)$ in $\QQ$ is at least $-10$ which contradicts our assumption that $c$ is the unique maximin winner in \QQ. Hence, $g$ is a satisfying assignment for the \TSAT instance\longversion{ and thus the \TSAT instance is a \YES instance}.
 
 The exact same reduction and analogous proof proves the result for the footrule distance and the maximum displacement distance.
\end{proof}

\shortversion{We also prove the following results by reducing from the \TSAT problem.}
\longversion{We now prove the result for the Copeland$^\alpha$ voting rule for any $\alpha\in[0,1]$.}

\begin{theorem}\label{thm:ob_copeland_swap}
 Let $\alpha\in[0,1]$. Then the \OB problem is \NPC for the Copeland$^\alpha$ voting rule for the swap distance and the maximum displacement distance even when $\delta=1$. Hence, the \OB problem is \NPC for the Copeland$^\alpha$ voting rule for the footrule distance even when $\delta=2$.
\end{theorem}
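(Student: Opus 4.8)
The plan is to adapt the reduction behind \Cref{thm:ob_maximin_swap}. Membership in \NP being clear, I would prove \NP-hardness by a reduction from \TSAT. Given an instance $(\XX=\{x_i:i\in[n]\},\CC=\{C_j:j\in[m]\})$, I would take $\AA=\{a_i,\bar{a}_i,w_i,z_i:i\in[n]\}\cup\{y_j:j\in[m]\}\cup\{c,b\}\cup\DD$ with $|\DD|$ polynomial in $m,n$, and keep the ``primary gadget'' $\PP_1$ exactly as in \Cref{thm:ob_maximin_swap}: two copies each of $z_i\suc a_i\suc w_i\suc\text{others}$ and $z_i\suc\bar{a}_i\suc w_i\suc\text{others}$ for every $i$, and $y_j\suc f(l_k)\suc h(l_k)\suc\text{others}$ for $k\in[3]$ for every clause $C_j=(l_1\vee l_2\vee l_3)$, with $f,h$ as there and every non-dummy outside the displayed prefix padded by dummies on both sides as in \Cref{thm:ob_maximin_swap}. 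Since $\PP_1$ only supplies the flippable top adjacencies and a fixed small contribution to each majority margin, it can be reused verbatim; the only ingredient that must be reworked is the auxiliary profile $\PP_2$, which I would again obtain from \Cref{lem:maximin} (with a suitable constant $K$) but now choosing the target weighted majority graph of $\PP=\PP_1\cup\PP_2$ with Copeland$^\alpha$'s win/tie/loss accounting in mind rather than maximin's minimum of margins.

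Concretely, I would arrange that every non-dummy beats every dummy, so no dummy can win and every non-dummy gets the same $|\DD|$ Copeland points from $\DD$; that $c$ is padded away from every other alternative (in $\PP_1$ explicitly, in $\PP_2$ by \Cref{lem:maximin}), so it keeps beating all dummies and its comparisons with the remaining non-dummies are untouched by any bribery with $\delta=1$, fixing its Copeland$^\alpha$ score to a value $\kappa$ independent of the bribery; that in $\PP$ each $z_i$ and each $y_j$ has Copeland$^\alpha$ score strictly above $\kappa$ (so $c$ does not win $\PP$) while each $a_i,\bar{a}_i,w_i$ and $b$ sits comfortably below $\kappa$; and that the contested edges $z_i$ vs $a_i$, $z_i$ vs $\bar{a}_i$, and $y_j$ vs $f(l_k)$ have majority margins whose parity (fixed by the all-even/all-odd option in \Cref{lem:maximin}) lets one swap in the corresponding $\PP_1$ vote flip the pairwise outcome in the needed direction. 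Taking all margins odd leaves the instance tie-free, so Copeland$^\alpha$ reduces to plain Copeland and uniformity in $\alpha$ is automatic; alternatively one keeps a controlled number of ties and carries the $\alpha$ terms through the arithmetic. Because the contested $\PP_1$ votes place the relevant non-dummies adjacent at the top, each flip is achievable by a single swap, and also by a single move of maximum displacement $1$ --- the only extra freedom of such a move (two disjoint adjacent transpositions) touches nothing but dummies in these gadgets, so it confers nothing new.

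For the forward direction, from a satisfying assignment $g$ I would, in each variable gadget, swap $z_i$ past $a_i$ if $g(x_i)=0$ and past $\bar{a}_i$ if $g(x_i)=1$ (driving $z_i$'s score below $\kappa$), together with the absorbing swaps of \Cref{thm:ob_maximin_swap} that re-seat $w_i$ so that $a_i,\bar{a}_i,w_i$ keep low score; and in each clause gadget swap one satisfied-literal-alternative $f(l_k)$ past $y_j$ plus the neutralising swaps in the other two votes. A routine recomputation of the weighted majority graph, to be displayed in a table as in \Cref{thm:ob_maximin_swap}, then shows $c$ is the unique Copeland$^\alpha$ winner. For the reverse direction, given a successful bribery $\QQ=\QQ_1\cup\QQ_2$, \Cref{lem:maximin} guarantees that one swap in any $\QQ_2$ vote changes no margin between two non-dummies and lets no dummy beat a non-dummy, so we may take $\QQ_2=\PP_2$; then $c$ winning forces, for every $i$, $z_i$ to be overtaken by $a_i$ or by $\bar{a}_i$, which defines $g(x_i)$, and, for every $j$, $y_j$ to be overtaken by some $f(l_k)$ with $l_k$ a literal of $C_j$, which by the parity-and-slack design can happen only if $g$ makes $l_k$ true; hence $g$ satisfies every clause.

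The step I expect to be the main obstacle is pinning the base Copeland$^\alpha$ scores in the weighted majority graph finely enough: one swap flipping a contested edge simultaneously lowers one endpoint's score and raises the other's, so the graph must be balanced so that every $z_i$ and $y_j$ is driven all the way below $\kappa$ while the ``helped'' alternatives $a_i,\bar{a}_i,w_i$ and literal-alternatives $f(l_k)$ never reach $\kappa$ even after every incident contested edge is flipped in their favour. This is exactly what the auxiliary alternatives $w_i$ and the absorbing swaps are for, now re-tuned for Copeland's counting, and verifying that all the required inequalities hold simultaneously for every $\alpha\in[0,1]$ is the delicate part. The footrule statement with $\delta=2$ then follows at once, since two preferences lie at footrule distance $2$ exactly when they differ by a single adjacent transposition, i.e.\ at swap distance $1$.
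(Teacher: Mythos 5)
Your proposal follows essentially the same route as the paper's proof: a reduction from \TSAT with the maximin-style variable and clause gadgets, \Cref{lem:maximin} to realize an auxiliary profile whose weighted majority graph pins every alternative's Copeland$^\alpha$ score relative to $c$'s, all pairwise margins made odd so the argument is uniform in $\alpha$, and the same forward/backward flip-counting arguments (the paper streamlines the gadget slightly, using a single $w$ and one copy of each gadget vote, but this is cosmetic). The delicate score-balancing step you flag is exactly what the paper resolves by fixing a threshold $\NN$ with $c$, each $z_i$, and each $y_j$ defeating exactly $\NN$ alternatives and the literal alternatives defeating $\NN-2$.
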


\begin{proof}
 Let us first prove the result for the swap distance. The \OB problem for the Copeland$^\alpha$ voting rule for the swap distance is clearly in \NP. To prove \NP-hardness, we reduce from \TSAT to \OB. Let $(\XX=\{x_i: i\in[n]\},\CC=\{C_j: j\in[m]\})$ be an arbitrary instance of \TSAT. Before presenting the \OB instance that we consider we make an important remark. We will see that the number of preferences in our \OB instance is odd (which happens if and only if margins of all pairwise elections are odd integers) and thus our proof goes through for any value of $\alpha$. We now consider the following instance $(\AA,\PP,c,\delta=1)$ of \OB.
 \begin{align*}
  \AA &= \{a_i, \bar{a}_i, z_i: i\in[n]\} \\
  &\cup \{c, w\} \cup \{y_j: j\in[m]\}\cup \DD, \text{ where } |\DD|=10m^8 n^8
 \end{align*}
 
 We construct the profile \PP which is a disjoint union of two profiles, namely, $\PP_1$ and $\PP_2$. We first describe $\PP_1$ below. While describing the preferences in $\PP_1$, whenever we say `others' or `for some alternative in \DD' or `for some subset of \DD', the unspecified alternatives are assumed to be arranged in such a way that, for every unspecified alternative $a\in\AA\setminus\DD$, there are at least $10$ alternatives from \DD in the immediate $10$ positions on both left and right of $a$. We also ensure that any alternative in \DD appears within top $10mn$ positions at most once in $\PP_1$ whereas every alternative in \AA appears within top $10mn$ position in every preference in $\PP_1$. This is possible $|\DD|=10m^8 n^8, |\AA\setminus\DD|=3n+m+2,$ and $|\PP_1|=2n+3m$. Let $f$ be a function defined on the set of literals as $f(x_i)=a_i$ and $f(\bar{x}_i)=\bar{a}_i$ for every $i\in[n]$.

 \begin{enumerate}[(I)]
  \item For every $i\in[n]$, we have the following preferences.
  \begin{itemize}
   \item $z_i\suc a_i\suc w\suc \text{others}$
   \item $z_i\suc \bar{a}_i\suc w\suc \text{others}$
  \end{itemize}
  \item For every $j\in[m]$ if $C_j=(l_1\vee l_2\vee l_3)$, we have the following preferences.
  \begin{itemize}
   \item $y_j\suc f(l_1)\suc \text{others}$
   \item $y_j\suc f(l_2)\suc \text{others}$
   \item $y_j\suc f(l_3)\suc \text{others}$
  \end{itemize}
 \end{enumerate}
 
 Due to \Cref{lem:maximin}, there exists a profile $\PP_2$ consisting of $\text{poly}(m,n)$ preferences such that the weighted majority graph \GG of the profile $\PP=\PP_1\cup\PP_2$ has the following properties. There is no tie in \GG. There exists a positive integer $\NN<|\AA|$ such that,
 
 \begin{enumerate}[(i)]
  \item For every $i\in[n]$, $a_i$ and $\bar{a}_i$ defeat exactly $\NN-2$ alternatives including $w$ and lose against the remaining alternatives.
  
  \item For every $C_j=(l_1\vee l_2\vee l_3), j\in[m]$, then $y_j$ defeats exactly \NN alternatives including the alternatives $f(l_1), f(l_2), f(l_3)$ and loses against the remaining alternatives.
  
  \item For every $i\in[n]$, $z_i$ defeats exactly \NN alternatives including the alternatives $a_i$ and $\bar{a}_i$ for every $i\in[n]$ and loses against the remaining alternatives.
  
  \item The alternative $c$ defeats exactly \NN alternatives and loses against the remaining alternatives.
  
  \item The alternative $w$ and every alternative in \DD defeat at most $\NN-1$ alternatives and lose against the remaining alternatives.
  
  \item The weight of all the edges described above is $10m^3 n^3 + 1$ except the following. The weight of the edges $(z_i, a_i), (z_i, \bar{a}_i), (a_i,w), (\bar{a}_i, w)$ are $1$ for every $i\in[n]$. The weight of the edges $(y_j, f(l_1)), (y_j, f(l_2)), (y_j, f(l_3))$ are $1$ for every $j\in[m]$ where $C_j=(l_1\vee l_2\vee l_3)$. We observe that all pairwise margins are odd integers and thus we have an odd number of preferences in the constructed preference profile. In particular, all the arguments made below work equally well for any value of $\alpha$.
 \end{enumerate}

 Moreover, due to \Cref{lem:maximin} (using $K=30$ in \Cref{lem:maximin}), in every preference in $\PP_2$, for every alternative $a\in\AA\setminus\DD$, there are at least $10$ alternatives from \DD in the immediate $10$ positions on both left and right of $a$, and the distance of every $d\in\DD$ from every $a\in\AA\setminus\DD$ is less than $10$ at most once in $\PP_2$. This finishes the description of \PP. We now claim that the two instances are equivalent.

 In one direction, let us assume that the \TSAT instance is a \YES instance with satisfying assignment $g:\XX\longrightarrow\{0,1\}$. Let us consider the following profile \QQ where every preference is obtained from the corresponding preference in \PP by performing at most $1$ swap. The preferences in \QQ which corresponds to $\PP_1$ are as follows.
 
 \begin{enumerate}[(I)]
  \item For every $i\in[n]$, we have the following preferences.
  \begin{itemize}
   \item If $g(x_i)=1$, then
  \begin{itemize}
   \item $z_i\suc w\suc a_i\suc \text{others}$
   \item $\bar{a}_i\suc z_i\suc w\suc \text{others}$
  \end{itemize}
  \item Else
  \begin{itemize}
   \item $a_i\suc z_i\suc w\suc \text{others}$
   \item $z_i\suc w\suc \bar{a}_i\suc \text{others}$
  \end{itemize}
  \end{itemize}
  \item For every $j\in[m]$, if $C_j=(l_1\vee l_2\vee l_3)$ and $g$ makes $l_1=1$ (we can assume by renaming), then we have
  \begin{itemize}
   \item $f(l_1)\suc y_j\suc \text{others}$
   \item $y_j\suc h(l_2)\suc \text{others}$
   \item $y_j\suc h(l_3)\suc \text{others}$
  \end{itemize}
 \end{enumerate}
 
 The preferences in $\PP_2$ remain unchanged in \PP and \QQ. We observe that the Copeland$^\alpha$ score of $c$ is \NN whereas the Copeland$^\alpha$ score of every alternative is at most $\NN-1$. Thus $c$ is the unique Copeland$^\alpha$ winner in \QQ.

 In the other direction, let $\QQ=\QQ_1\cup\QQ_2$ be a profile where $c$ is the unique Copeland$^\alpha$ winner, the sub-profile $\QQ_k$ corresponds to $\PP_k$ for $k\in[2]$, and every preference in \QQ is at most $1$ swap away from its corresponding preference in \PP. We begin with the observation that, due to the structure of $\PP_1$ and $\PP_2$ (from \Cref{lem:maximin}), any change in any preference in $\PP_2$ up to $1$ swap, irrespective of $\QQ_1$, leaves the Copeland$^\alpha$ score of every alternative in $\AA\setminus\DD$ unchanged. So we can assume without loss of generality that $\QQ_2=\PP_2$. Since, for every alternative $a\in\AA\setminus\DD$, there are at least $10$ alternatives from \DD in the immediate $10$ positions on both left and right of $a$ in every preference in $\PP_1$, the weight of any edge incident on $s$ in the weighted majority graph is $10m^3 n^3 + 1$, and any alternative in $\AA\setminus\{c\}$ follows immediately $c$ in at most one preference in \PP, the Copeland$^\alpha$ score of $c$ is $\NN$ in \QQ. Hence, for $c$ to win uniquely, $z_i$ must be preferred over either $a_i$ or $\bar{a}_i$ in at least $1$ of the $2$ preferences in $\PP_1$ where $z_i$ appears at the first position. Let us now consider the following assignment $g:\{x_i: i\in[n]\}\longrightarrow\{0,1\}$ defined as: $g(x_i)=1$ if $\bar{a}_i$ is preferred over $z_i$ once among the preferences in $\QQ_1$ which corresponds to the $2$ preferences in $\PP_1$ where $z_i$ appears at the first position; otherwise $g(x_i)=0$. We claim that $g$ is a satisfying assignment for the \TSAT instance. Suppose not, then let us assume that $g$ does not satisfy $C_j=(l_1\vee l_2\vee l_3)$ for some $j\in[m]$. Let $h$ be a function defined on the set of literals as $h(x_i)=z_i$ and $h(\bar{x}_i)=z_i$ for every $i\in[n]$. We first observe that, since $C_j$ is not satisfied, from the definition of $g$, for every $k\in[3]$, the alternative $f(l_k)$ defeats $h(l_k)$ in \QQ. Hence, for $c$ to become the unique Copeland$^\alpha$ winner in \QQ, $f(l_k)$ must lose to $y_j$ for every $k\in[3]$. However, this makes the Copeland$^\alpha$ score of $y_j$ in \QQ $\NN$ which contradicts ous assumption that $c$ is the unique Copeland$^\alpha$ winner in \QQ. Hence, $g$ is a satisfying assignment for the \TSAT instance and thus the \TSAT instance is a \YES instance.
 
 The exact same reduction and analogous proof proves the result for the footrule distance and the maximum displacement distance.
\end{proof}

\begin{theorem}\label{thm:ob_bucklin_swap}
 The \OB problem is \NPC for the simplified Bucklin voting rule for the swap distance even when $\delta=2$. Hence the \OB problem is \NPC for the simplified Bucklin voting rule for the footrule distance even when $\delta=4$.
\end{theorem}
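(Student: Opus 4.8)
The plan is to reduce from \TSAT, reusing the primary gadgets of \Cref{thm:ob_kapp_swap} (one pair of votes per variable, which will be forced to encode a truth value, and three votes per clause, which will be forced to ``use'' a satisfied literal), but lifting everything to an \emph{active level} $\ell^{*}=\Theta(n+m)$ rather than level $2$. Concretely, I would pad the top $\ell^{*}-2$ positions of every vote with dummy alternatives, and place, in each variable/clause gadget vote, the relevant alternatives ($w_i$, the $a(x_i,\mu)$ and $z_i$; or $y_j$, $d_j$, the clause representative, $d_j'$) at positions $\ell^{*}-1$ through $\ell^{*}+2$, so that a change of swap distance $\le\delta=2$ to such a vote shifts the top-$\ell^{*}$ boundary in exactly the way a $2$-swap change shifts the top-$2$ boundary in \Cref{thm:ob_kapp_swap}. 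I would then add (i) a block of identical ``pre-loading'' votes, each putting all $\Theta(n+m)$ relevant alternatives \emph{safely inside} the top $\ell^{*}$ (at positions $\le\ell^{*}-2$, where a $2$-swap change cannot evict them) while parking $c$ just outside the top $\ell^{*}$, and (ii) a constant number of ``$c$-padding'' votes putting $c$ just outside the top $\ell^{*}$ and only dummies inside it; the block sizes are chosen so that the majority threshold $t$ (about $2n+3m$) equals the count the briber can realize for $c$ by pulling $c$ into the top $\ell^{*}$ of every pre-loading and $c$-padding vote.

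With this setup the simplified Bucklin winner is controlled exactly as in \Cref{thm:ob_kapp_swap}. In the yes-direction, from a satisfying assignment $g$ the briber pulls $c$ into the top $\ell^{*}$ of all pre-loading and $c$-padding votes (raising $c$'s count to $t$, hence its simplified Bucklin score to $\ell^{*}$), evicts each $w_i$ from the variable-gadget vote selected by $g(x_i)$ (dragging the corresponding pair $a(x_i,\cdot)$ or $a(\bar x_i,\cdot)$ into the top $\ell^{*}$ there), and evicts each $y_j$ from the clause vote of some literal that $g$ satisfies (dragging that literal's representative in); a routine count then shows every alternative other than $c$ lies in the top $\ell^{*}$ of at most $t-1$ votes, hence has score $\ge\ell^{*}+1>\ell^{*}$. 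In the no-direction, the pre-loading is the forcing mechanism: each $w_i$ is in the top $\ell^{*}$ of all pre-loading votes and of both its variable-gadget votes unless it is evicted from one of them, so if $c$ wins then each $w_i$ must be evicted from at least one, which reads off a truth assignment; likewise each $y_j$ must be evicted from at least one of its three clause votes, and the pre-load level of the $a(l,\mu)$'s is tuned so that this is affordable (keeping the representative below $t$) precisely when the underlying literal is set true by that assignment — so a successful bribery yields a satisfying assignment.

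The hard part is the global bookkeeping. Unlike \Cref{thm:ob_kapp_swap}, where the ``winning value'' is a fixed additive score $\Delta+3$ unrelated to the electorate size, here the winning value must coincide with the majority threshold, so the number of pre-loading copies, the number of $c$-padding votes, the value of $\ell^{*}$, and the precise positions of $c$ and of every relevant alternative have to be pinned down simultaneously and checked for consistency; in particular I would need to verify that $c$'s count stays below $t$ at every level $k<\ell^{*}$ (so that $c$'s score is exactly $\ell^{*}$, not smaller, which would make the instance trivially ``yes''), and that no dummy reaches a majority at any level $\le\ell^{*}$ (a polynomially large dummy pool suffices). This bookkeeping is also precisely why the active level cannot stay at $2$: there are $\Theta(n+m)$ relevant alternatives that must each be pre-loaded to within one vote of the threshold, while a vote has only two ``top-$2$'' slots but $\ell^{*}$ ``top-$\ell^{*}$'' slots, so a single pre-loading vote can hold them all only once $\ell^{*}=\Omega(n+m)$. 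Finally, the footrule statement ($\delta=4$) follows from $d_{\text{swap}}\le d_{\text{footrule}}\le 2\,d_{\text{swap}}$ exactly as in \Cref{thm:ob_kapp_swap}: the yes-direction uses only swap-distance-$\le 2$ changes (footrule distance $\le 4$), and in the no-direction any footrule-$\le 4$ change that alters the top-$\ell^{*}$ membership of a relevant alternative is of the swap-distance-$2$ type already handled.
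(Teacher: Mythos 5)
Your overall strategy is exactly the paper's: reduce from \TSAT, reuse the variable/clause gadgets of \Cref{thm:ob_kapp_swap}, lift the ``active'' positions to a high level (the paper uses $\ell^{*}=mn$) by padding with a polynomially large dummy pool, and add a block of pre-loading votes of the form $(\AA\setminus(\DD\cup\{c\}))\suc \DD\suc c$ that push every non-dummy, non-$c$ alternative to within one appearance of the majority threshold, so that each $w_i$ must be evicted from one of its two variable votes (reading off an assignment) and each $y_j$ from one of its three clause votes (reading off a satisfied literal). The forcing logic, the role of the dummies, and the footrule corollary via $d_{\text{swap}}\le d_{\text{footrule}}\le 2\,d_{\text{swap}}$ all match.

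There is, however, one concrete point where the construction as you state it fails, and it sits exactly in the bookkeeping you deferred. With \emph{identical} pre-loading votes (each relevant alternative appearing $P$ times at positions $\le\ell^{*}-2$) and a target count $t=P+C$ for $c$, the alternative $w_i$ has base count $P+2$ at level $\ell^{*}$ and $y_j$ has base count $P+3$; requiring $w_i$ to drop below $t$ forces $e_i>2-C$ evictions while requiring $y_j$ to drop below $t$ forces $e_j>3-C$. No single choice of $C$ makes both constraints read ``at least one eviction'': $C=2$ forces $y_j$ out of \emph{two} of its three clause votes (which corresponds to two true literals, not one), and $C=3$ lets $w_i$ stay everywhere (destroying the assignment gadget). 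The fix is differential pre-loading: the $y_j$'s must appear in the top block one time fewer than the $w_i$'s. The paper does this with a single asymmetric vote $(\AA\setminus(\DD\cup\{c\}\cup\{y_j: j\in[m]\}))\suc \DD\suc y_1\suc\cdots\suc y_m\suc c$ appended to $2n+3m-2$ uniform copies; you flag tuning only for the $a(l,\mu)$'s, so this needs to be added. Separately, your handling of $c$ (parked just outside the top $\ell^{*}$ in \emph{every} pre-loading vote, to be pulled in everywhere) leaves the briber freedom you would still have to neutralize --- e.g.\ pulling $c$ to position $\ell^{*}-1$ rather than $\ell^{*}$ in all those votes changes the level at which $c$ first reaches a majority, and the uniqueness comparison must survive at every level $\le\ell^{*}$. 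The paper sidesteps this by putting $c$ at position $1$ in all $2n+3m$ gadget votes, at the very bottom of the pre-loading votes, and two positions below the threshold in exactly one special vote, which pins $c$'s achievable count to $2n+3m+1$ and its score to exactly $\ell^{*}$ with no briber discretion. Both issues are repairable, but as written the arithmetic does not close.
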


\begin{proof}
 The \OB problem for the simplified Bucklin voting rule for the swap distance is clearly in \NP. To prove \NP-hardness, we reduce from \TSAT to \OB. Let $(\XX=\{x_i: i\in[n]\},\CC=\{C_j: j\in[m]\})$ be an arbitrary instance of \TSAT. Let us consider the following instance $(\AA,\PP,c,\delta=2)$ of \OB.
 \begin{align*}
  \AA &= \{a(x_i,0), a(x_i,1), a(\bar{x}_i,0), a(\bar{x}_i,1): i\in[n]\} \cup \{c\}\\
  &\cup \{w_i: i\in[n]\} \cup \{y_j: j\in[m]\} \cup \DD, \text{ where } |\DD|=10m^8 n^8
 \end{align*}
 
 We construct the profile \PP using the following function $f$. The function $f$ takes a literal and a clause as input, and outputs a value in $\{0,1,-\}$. For each literal $l$, let $C_i$ and $C_j$ with $1\le i<j\le m$ be the two clauses where $l$ appears. We define $f(l,C_i)=0, f(l,C_j)=1,$ and $f(l,C_k)=-$ for every $k\in[m]\setminus\{i,j\}$. This finishes the description of the function $f$. We are now ready to describe \PP. While describing the preferences below, whenever we say `others' or `for some alternative in \DD' or `for some subset of \DD', the unspecified alternatives are assumed to be arranged in such a way that, for every unspecified alternative $a\in\AA\setminus\DD$, there are at least $10$ alternatives from \DD in the immediate $10$ positions on both left and right of $a$. We also ensure that any alternative in \DD appears within top $10m^2n^2$ positions at most once in $\PP$ whereas every alternative in \AA appears within top $10m^2n^2$ position in every preference in $\PP$. This is possible because $|\DD|=10m^8 n^8, |\AA\setminus\DD|=5n+m+2,$ and $|\PP|=4n+6m+1$. Let us define $\WW=\{w_i: i\in[n]\}$.
 \begin{enumerate}[(I)]
  \item For every $i\in[n]$, we have\label{thm:ob_bucklin_1}
  \begin{itemize}
   \item $c\suc\DD_{mn-3}\suc w_i\suc a(x_i,0)\suc a(x_i,1)\suc d\suc \text{others}$, for some $d\in\DD$ and $\DD_{mn-3}\subset\DD$ with $|\DD_{mn-3}|=mn-3$
   \item $c\suc\DD_{mn-3}\suc w_i\suc a(\bar{x}_i,0)\suc a(\bar{x}_i,1)\suc d\suc \text{others}$, for some $d\in\DD$ and $\DD_{mn-3}\subset\DD$ with $|\DD_{mn-3}|=mn-3$
  \end{itemize}
  \item For every $j\in[m]$, if $C_j=(l_1\vee l_2\vee l_3)$, then we have\label{thm:ob_bucklin_2}
  \begin{itemize}
   \item $c\suc\DD_{mn-3}\suc y_j\suc d\suc a(l_1,f(l_1,C_j))\suc d^\pr\suc \text{others}$, for some $d, d^\pr\in\DD$ and $\DD_{mn-3}\subset\DD$ with $|\DD_{mn-3}|=mn-3$
   \item $c\suc\DD_{mn-3}\suc y_j\suc d\suc a(l_2,f(l_2,C_j))\suc d^\pr\suc \text{others}$, for some $d, d^\pr\in\DD$ and $\DD_{mn-3}\subset\DD$ with $|\DD_{mn-3}|=mn-3$
   \item $c\suc\DD_{mn-3}\suc y_j\suc d\suc a(l_3,f(l_3,C_j))\suc d^\pr\suc \text{others}$, for some $d, d^\pr\in\DD$ and $\DD_{mn-3}\subset\DD$ with $|\DD_{mn-3}|=mn-3$
  \end{itemize}
  
  \item $\DD_{2mn}\suc \text{others}$, for some $\DD_{2mn}\subset\DD$ with $|\DD_{2mn}|=2mn$ \label{thm:ob_bucklin_3}
  
  \item $2n+3m-2$ copies: $(\AA\setminus(\DD\cup\{c\}))\suc \DD \suc c$\label{thm:ob_bucklin_4}
  
  \item $\DD_{mn+1}\suc c\suc \text{ others}$, for some $\DD_{mn+1}\subset\DD$ with $|\DD_{mn+1}|=mn+1$ \label{thm:ob_bucklin_5}
  
  \item $(\AA\setminus(\DD\cup\{c\}\cup\{y_j: j\in[m]\}))\suc \DD\suc y_1\suc\cdots y_m \suc c$\label{thm:ob_bucklin_6}
 \end{enumerate}

 We now claim that the two instances are equivalent. In one direction, let us assume that the \TSAT instance is a \YES instance with satisfying assignment $g:\XX\longrightarrow\{0,1\}$. Let us consider the following profile \QQ where every preference is obtained from the corresponding preference in \PP by performing at most $2$ swaps.
 \begin{enumerate}[(I)]
  \item For every $i\in[n]$, we have
  \begin{itemize}
   \item If $g(x_i)=1$, then
   \begin{itemize}
    \item $c\suc\DD_{mn-3}\suc w_i\suc d\suc a(x_i,0)\suc a(x_i,1)\suc \text{others}$ for some $d\in\DD$ and $\DD_{mn-3}\subset\DD$ with $|\DD_{mn-3}|=mn-3$
   \item $c\suc\DD_{mn-3}\suc a(\bar{x}_i,0)\suc a(\bar{x}_i,1)\suc w_i\suc d\suc \text{others}$ for some $d\in\DD$ and $\DD_{mn-3}\subset\DD$ with $|\DD_{mn-3}|=mn-3$
   \end{itemize}
   \item Else
   \begin{itemize}
   \item $c\suc\DD_{mn-3}\suc a(x_i,0)\suc a(x_i,1)\suc w_i\suc d\suc \text{others}$ for some $d\in\DD$ and $\DD_{mn-3}\subset\DD$ with $|\DD_{mn-3}|=mn-3$
   \item $c\suc\DD_{mn-3}\suc w_i\suc d\suc a(\bar{x}_i,0)\suc a(\bar{x}_i,1)\suc \text{others}$ for some $d\in\DD$ and $\DD_{mn-3}\subset\DD$ with $|\DD_{mn-3}|=mn-3$
   \end{itemize}
  \end{itemize}
  \item For every $j\in[m]$, if $C_j=(l_1\vee l_2\vee l_3)$ and $g$ makes $l_1=1$ (we can assume by renaming), then we have
  \begin{itemize}
   \item $c\suc\DD_{mn-3}\suc d\suc a(l_1,f(l_1,C_j))\suc y_j\suc d^\pr\suc \text{others}$ for some $d, d^\pr\in\DD$ and $\DD_{mn-3}\subset\DD$ with $|\DD_{mn-3}|=mn-3$
   \item $c\suc\DD_{mn-3}\suc y_j\suc d\suc a(l_2,f(l_2,C_j))\suc d^\pr\suc \text{others}$ for some $d, d^\pr\in\DD$ and $\DD_{mn-3}\subset\DD$ with $|\DD_{mn-3}|=mn-3$
   \item $c\suc\DD_{mn-3}\suc y_j\suc d\suc a(l_3,f(l_3,C_j))\suc d^\pr\suc \text{others}$ for some $d, d^\pr\in\DD$ and $\DD_{mn-3}\subset\DD$ with $|\DD_{mn-3}|=mn-3$
  \end{itemize}
  
  \item $\DD_{2mn}\suc \text{others}$, for some $\DD_{2mn}\subset\DD$ with $|\DD_{2mn}|=2mn$
  
  \item $2n+3m-2$ copies: $(\AA\setminus\{c\})\suc \DD \suc c$
  
  \item $\DD_{mn-1}\suc c\suc \text{ others}$, for some $\DD_{mn-1}\subset\DD$ with $|\DD_{mn-1}|=mn-1$
  
  \item $(\AA\setminus(\DD\cup\{c\}\cup\{y_j: j\in[m]\}))\suc \DD\suc y_1\suc\cdots y_m \suc c$
 \end{enumerate}
 
 Only the alternative $c$ appears a majority number of times within the first $mn$ positions in \QQ. Hence $c$ is the unique simplified Bucklin winner in \QQ and thus the \OB instance is a \YES instance.
 
 In the other direction, let $\QQ$ be a profile where $c$ is the unique simplified Bucklin winner and every preference in \QQ is at most $2$ swaps away from its corresponding preference in \PP. We observe that in the preference in \QQ which corresponds to \Cref{thm:ob_bucklin_5}, the alternative $c$ moves to its left by $2$ positions since otherwise, irrespective of \QQ (subject to the condition that the swap distance of every preference in \QQ is at most $2$ from its corresponding preference in \PP), $c$ appears within the first $mn$ positions only $(2n+3m)$ times in \QQ which is not a majority number of times whereas every alternative in \WW appears a majority number of times within the first $(mn+1)$ positions contradicting our assumption that $c$ is the unique simplified Bucklin winner in \QQ. Hence, $c$ appears $(2n+3m+1)$ (which is a majority) number of times within the first $mn$ positions in \QQ and does not appear a majority number of times within the first $mn-1$ positions in \QQ. Now, since $c$ is the unique simplified Bucklin winner in \QQ, no alternative other than $c$ appears a majority number of times within the first $mn$ positions in \QQ. Since, every alternative $w_i\in \WW$ appears $(2n+3m+1)$ number of times within the first $mn-1$ positions in \PP, $w_i$ must be moved in \QQ to its right by $2$ positions in at least one of the two preferences in \PP where $w_i$ appears at position $(mn-1)$.  We now consider the following assignment $g:\XX\longrightarrow\{0,1\}$ -- for every $i\in[n]$, $g(x_i)=0$ if the preference in \QQ corresponding to the preference $c\suc \DD_{mn-3}\suc w_i\suc a(x_i,0)\suc a(x_i,1)\suc d\suc \text{others}$ in \PP moves $w_i$ to its right by two positions; otherwise $g(x_i)=1$. We claim that $g$ is a satisfying assignment for the \TSAT instance. Suppose not, then let us assume that $g$ does not satisfy $C_j=(l_1\vee l_2\vee l_3)$ for some $j\in[m]$. We first observe that $y_j$ also appears $(2n+3m+1)$ number of times within the first $mn-1$ positions in \PP. Hence, $y_j$ must be moved in \QQ to its right by $2$ positions in at least one of the three preferences in \PP where $y_j$ appears at position $(mn-1)$. However, this implies that at least one alternative in $\{a(l_i, f(l_i, C_j)):i\in[3]\}$ appears at least $(2n+3m+1)$ times within the first $mn$ positions in \QQ due to the definition of $g$. This contradicts our assumption that $c$ is the unique simplified Bucklin winner in \QQ. Hence $g$ is a satisfying assignment and thus the \TSAT instance is a \YES instance.
 
 The proof for the footrule distance follows from the observation that, for any two preferences $\suc_1, \suc_2\in\LL(\AA)$, if we have $d_{\text{footrule}}(\suc_1,\suc_2)=4$, then we have $d_{\text{swap}}(\suc_1,\suc_2)=2$.
\end{proof}

\begin{theorem}\label{thm:ob_bucklin_maxdis_hard}\shortversion{[\star]}
 The \ODB problem is \NPC for the simplified Bucklin voting rule for the maximum displacement distance even when $\delta_i=2$ for every preference $i$.
\end{theorem}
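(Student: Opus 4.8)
The plan is to combine the budgeted $k$-approval reduction of \Cref{thm:kapp_max_hard} with the simplified-Bucklin ``threshold gadget'' of \Cref{thm:ob_bucklin_swap}. Membership in \NP is immediate: guess the bribed set $J$ and the modified profile $\QQ$, then check prices, distances, and the winner in polynomial time. For hardness I reduce from \TSAT. A remark first: by \Cref{thm:bucklin_maxdis_poly} the \OB problem (no budget) for simplified Bucklin under the maximum displacement distance is polynomial, so the hardness must be driven by the budget together with a block of effectively unbribable votes, exactly as in \Cref{thm:kapp_max_hard}. Given a \TSAT instance with variables $x_1,\dots,x_n$ and clauses $C_1,\dots,C_m$, I would take the alternatives $c$, a large dummy set $\DD$, the per-variable alternatives $w_i, a(x_i,0), a(x_i,1), a(\bar x_i,0), a(\bar x_i,1)$ and the per-clause alternatives $y_j$, as in \Cref{thm:ob_bucklin_swap}, and build $\PP=\PP_1\cup\PP_2$ with $\delta_i=2$ for every $i$ and budget $\BB=m+n$. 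In $\PP_1$ (price $1$ each) each variable contributes two preferences in which $w_i$ sits exactly at the intended simplified-Bucklin threshold position $k$ with the literal-alternatives $a(x_i,\cdot)$ (resp.\ $a(\bar x_i,\cdot)$) placed immediately below it, and each clause $C_j=(l_1\vee l_2\vee l_3)$ contributes three preferences in which $y_j$ sits at position $k$ with $a(l_t,f(l_t,C_j))$ just below (a dummy in between), where $f$ is the occurrence-indexing function of \Cref{thm:ob_bucklin_swap}. In $\PP_2$ (price $>m+n$ each, hence never bribed) I add dummy-padded preferences chosen so that $c$ is within the first $k$ positions in exactly a strict majority of all votes (its score is $k$), each $w_i$ and each $y_j$ is also within the first $k$ positions in exactly a strict majority (so before bribery they tie $c$ and $c$ is not yet the unique winner), and each $a(l,v)$ is within the first $k$ positions in exactly a strict majority minus one (so one extra promotion makes it tie $c$); the counts are tuned, as in \Cref{thm:kapp_max_hard}, so that exactly one variable-preference modification per variable and one clause-preference modification per clause is both needed and affordable.

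For the forward direction, from a satisfying assignment I would bribe, for each variable, whichever of its two $\PP_1$ preferences the truth value selects, and for each clause one of its three $\PP_1$ preferences corresponding to a satisfying literal; each bribe is a displacement-$\le 2$ rearrangement of the bounded window around $w_i$ (resp.\ $y_j$) that pushes $w_i$ (resp.\ $y_j$) out of the first $k$ positions while moving exactly the intended $a(\cdot,\cdot)$ alternatives in or out. The total cost is $n+m=\BB$. Using the property of $f$ (the two occurrences of a literal point to distinct $a(\cdot,\cdot)$ alternatives) together with the fact that the assignment is satisfying, one verifies that no $a(l,v)$ is promoted beyond the ``majority minus one'' baseline, so every alternative other than $c$ still fails to reach a majority within the first $k$ positions and $c$ is the unique simplified-Bucklin winner; a score table makes this routine.

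For the converse, suppose $(\QQ,J)$ is a successful bribery. No $\PP_2$ vote is bribed, since its price exceeds $\BB$, so $c$ is still within the first $k$ positions in exactly a majority and its score is still $k$; also no dummy can win, because each dummy is within the first $k+O(1)$ positions at most once in $\PP$ and a displacement-$2$ move cannot change this enough. Since $w_i$ (resp.\ $y_j$) is within the first $k$ positions in a majority of $\PP$, at least one of its two (resp.\ three) $\PP_1$ preferences must be bribed; with $\BB=m+n$ this forces exactly one bribe per variable and exactly one per clause, and in a bribed variable preference $w_i$ must leave the first $k$ positions, which because $\delta_i=2$ unavoidably pulls the literal-alternatives directly below $w_i$ into the first $k$ positions. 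Reading off which of the two preferences of each variable was bribed defines an assignment $g$. If a clause $C_j$ were unsatisfied by $g$, then the bribed preference of $C_j$ would push $y_j$ out via some literal $l_t$ of $C_j$, promoting $a(l_t,f(l_t,C_j))$ into the first $k$ positions; since $l_t$ is false under $g$, the definition of $g$ makes the bribed variable preference of $l_t$'s variable the one whose two literal-alternatives are $a(l_t,0)$ and $a(l_t,1)$, so $a(l_t,f(l_t,C_j))$ had already been promoted once there, whence it now reaches a strict majority within the first $k$ positions and ties $c$, contradicting uniqueness. Thus $g$ satisfies every clause, the two instances are equivalent, and the reduction is clearly polynomial.

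The main obstacle is the padding arithmetic, as in \Cref{thm:kapp_max_hard}: choosing the $\PP_2$ counts and the exact $\PP_1$ positions so that (i) $c$'s threshold is exactly $k$ and can be neither lowered nor matched by accident, (ii) the budget $m+n$ is simultaneously sufficient and tight, forcing precisely one bribe per variable and per clause, (iii) a displacement-$\le 2$ move that removes $w_i$ or $y_j$ from the first $k$ positions must promote the intended $a(\cdot,\cdot)$ alternative and cannot be ``absorbed'' by a cascade of harmless dummy shifts, and (iv) an unsatisfied clause yields an exact tie with $c$ rather than merely getting close. One must also argue away bribes that are not the intended shifts (pulling $c$ down somewhere, lifting a dummy, swapping two literal-alternatives, and so on); the facts that $\PP_2$ pins every relevant score and that the dummies of $\DD$ are spread so thinly that each appears near the top at most once in $\PP$ make all such bribes useless, exactly as in \Cref{thm:ob_bucklin_swap,thm:kapp_max_hard}.
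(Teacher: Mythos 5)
Your high-level plan is exactly what the paper intends: its entire proof of \Cref{thm:ob_bucklin_maxdis_hard} is the single sentence that the argument is analogous to \Cref{thm:kapp_max_hard}, i.e., the budget/price mechanism of that reduction transplanted onto the simplified-Bucklin threshold machinery. So the approach is the right one. However, the concrete gadgets you describe have two forcing failures in the backward direction, and both are precisely the issue you name as obstacle (iii) without resolving it. First, in your clause gadget you place a dummy between $y_j$ (at position $k$) and $a(l_t,f(l_t,C_j))$. Then the displacement-$1$ move that swaps $y_j$ with that dummy removes $y_j$ from the first $k$ positions while promoting only the dummy, which is harmless since each dummy appears near the top at most once in \PP; the literal alternative is never forced in, so an unsatisfiable formula can still admit a successful bribery. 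In \Cref{thm:kapp_max_hard} the occurrence-alternative sits \emph{immediately} after $y_j$ (the padding dummy is at position $k-1$, before $y_j$), which is what makes demoting $y_j$ promote the intended alternative. Second, your variable gadget uses a single blocker $w_i$ at position $k$ with $a(x_i,0),a(x_i,1)$ at positions $k+1,k+2$. Pushing $w_i$ out of the top $k$ forces only \emph{one} of these two alternatives into the top $k$, and the briber chooses which. The clause argument needs \emph{both} occurrence-alternatives of a false literal to have been charged by the variable bribe; that is why \Cref{thm:kapp_max_hard} uses two blockers $w_i,w_i^\pr$ at positions $k-1,k$, so that both vacated slots must be refilled from $k+1,k+2$. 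With one blocker the briber can charge $a(x_i,0)$ at the variable and have the clause bribe promote $a(x_i,1)$, producing no tie and defeating the reduction.

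There is also an off-by-one in your score targets: if each $a(l,v)$ starts at ``majority minus one'' within the top $k$ and the variable bribe necessarily promotes it once, it already reaches a majority and $c$ fails to win even in the forward direction. The baseline must be majority minus two, with the variable bribe bringing it to minus one and only a clause bribe on a false literal creating the tie (this is the $8$/$9$/$10$ pattern of \Cref{thm:kapp_max_hard}). All three problems disappear if you copy that layout literally --- two blockers per variable preference, occurrence-alternative adjacent to $y_j$, auxiliary counts set two below the threshold --- which is presumably what the paper means by ``analogous.''
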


\longversion{
\begin{proof}
 The proof is analogous to the proof of \Cref{thm:kapp_max_hard}.
\end{proof}
}

\begin{theorem}\label{thm:ob_bucklin}\shortversion{[\star]}
 \longversion{The \OB problem is \NPC for the Bucklin voting rule for the swap distance and maximum displacement distance even when $\delta=1$. Hence, The \OB problem is \NPC for the Bucklin voting rule for the footrule distance even when $\delta=2$.}
 \shortversion{The \OB problem is \NPC for the Bucklin voting rule for the swap distance and maximum displacement distance even when $\delta=1$ and thus for the footrule distance even when $\delta=2$.}
\end{theorem}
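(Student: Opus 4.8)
The plan is to reduce from \TSAT, following the same blueprint as the proofs of \Cref{thm:borda_max,thm:ob_maximin_swap,thm:ob_copeland_swap}: a ``primary'' part $\PP_1$ with one variable gadget per $x_i$ and one clause gadget per $C_j$, plus an ``auxiliary'' part $\PP_2$ built from a large dummy set $\DD$ that tunes the relevant counts. Besides $c$ and the dummies, I would use a guard alternative $w_i$ for each variable, clause alternatives $y_j$, and literal alternatives $a(l,\cdot)$ with the bookkeeping function $f(l,C_j)\in\{0,1,-\}$ exactly as in the proof of \Cref{thm:ob_bucklin_swap} (so that the two occurrences of a literal are routed to distinct literal alternatives). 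Fix a level $L$ polynomial in $m,n$ (say $L=mn$). The design goal is that in $\PP$ the simplified Bucklin score of $c$, of every $w_i$, and of every $y_j$ equals $L$, while every other alternative---every dummy and every $a(l,\cdot)$---has simplified Bucklin score strictly larger than $L$; thus phase~1 of the Bucklin rule produces threshold $k^\star=L$ and a tie among $c,\{w_i\},\{y_j\}$, and phase~2 compares top-$L$ appearance counts, which on the tied set behaves like $L$-approval. The $\PP_2$ padding is tuned so that at level $L$ the count of $c$ is exactly the majority value $T$, and so is the count of every $w_i$ and every $y_j$; hence in $\PP$ there is no unique Bucklin winner.

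The variable gadget places $w_i$ at position $L$ in a small bundle of preferences, each followed at position $L+1$ by a literal alternative of $x_i$ or of $\bar x_i$, with dummies everywhere else so that the only useful single swap exchanges $w_i$ with its successor. Since $w_i$'s count is exactly $T$, for $c$ to become the unique winner $w_i$ must leave the top-$L$ window in at least one of these preferences, and the pattern of which ones it leaves defines a truth assignment $g$; leaving an ``$x_i$-preference'' boosts the top-$L$ count of an $x_i$-alternative by one. The padding gives each literal alternative a calibrated slack below $T$ (using enough duplication/variable-gadget copies that a literal set false by $g$ ends at count $T-1$, so any further boost reaches $T$ and ties $c$, while a literal set true retains enough slack to absorb the boosts from the at most two clauses containing it). The clause gadget places $y_j$ at position $L$ in three preferences, each followed at position $L+1$ by the appropriately indexed literal alternative $a(l_k,f(l_k,C_j))$; since $y_j$ has count $T$ it must leave the top-$L$ window in at least one of the three, which raises that literal alternative's count to $T$---an impermissible tie---unless the corresponding literal is true under $g$. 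Thus the move is possible exactly when $C_j$ is satisfied.

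For the forward direction, given a satisfying assignment one performs in each gadget the single swap dictated above (push $w_i$ out in the preference of the false literal, push $y_j$ out in the preference of a true literal), leaves $\PP_2$ untouched, and checks from the resulting counts that $k^\star$ is still $L$, that $c$'s count is still $T$, and that every other alternative now has top-$L$ count strictly below $T$; hence $c$ is the unique Bucklin winner and we get a \YES instance. For the reverse direction one argues: no dummy can win under any single swap (each dummy appears near the top at most once); $k^\star$ can neither drop below $L$ nor rise above it (the majority structures at levels $L-1$ and $L$ are robust to the available local moves); $c$'s top-$L$ count is unchanged; therefore every $w_i$ and every $y_j$ must leave the top-$L$ window somewhere, yielding $g$; and an unsatisfied clause would force a displacement of its $y_j$ that lifts some literal alternative's top-$L$ count to $T$, contradicting uniqueness of $c$. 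The maximum-displacement claim with $\delta=1$ follows from the same instance with the reverse analysis re-examined for arbitrary displacement-$1$ moves (disjoint adjacent transpositions), and the footrule claim with $\delta=2$ follows because $d_{\text{footrule}}(\suc_1,\suc_2)\le 2$ holds exactly when $d_{\text{swap}}(\suc_1,\suc_2)\le 1$.

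The main obstacle I expect is making everything work with $\delta=1$: a single swap has only a local, $\pm1$ effect on the counts, so unlike the $\delta=2$ proofs for $k$-approval (\Cref{thm:ob_kapp_swap,thm:kapp_max_hard}) and simplified Bucklin (\Cref{thm:ob_bucklin_swap}) one cannot move a guard past two alternatives at once. The required leverage has to come entirely from the two-phase nature of the Bucklin rule---the fact that the value $L$ at which phase~2 is evaluated, and whether some alternative sneaks to majority at level $L-1$, are at stake simultaneously with the phase-$2$ counts---together with a careful choice of how much to duplicate the literal alternatives and how many variable-gadget preferences to use, so that the at-most-two clause boosts on a literal never pile up and a consistent assignment can be extracted in the reverse direction. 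Pinning down the $\PP_2$ padding so that it simultaneously (i) locks $k^\star=L$, (ii) fixes $c$'s count at $T$, (iii) sets each literal alternative's slack correctly, and (iv) keeps every dummy unelectable, is the delicate bookkeeping---and it is precisely here that the contrast with the simplified Bucklin rule (where phase~2 is absent and the corresponding problem is polynomial-time solvable for the maximum displacement distance, \Cref{thm:bucklin_maxdis_poly}) is exploited.
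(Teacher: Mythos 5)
Your blueprint (a reduction from \TSAT with variable and clause gadgets plus dummy padding) is the right family of ideas, but the central calibration you propose breaks the reduction. You set the top-$L$ count of $c$ to be exactly the bare majority $T$, and you want $c$ to become the unique winner precisely when every rival's top-$L$ count is kept below $T$ and nobody reaches majority at an earlier level. But whenever $c$'s count at its own simplified-Bucklin level equals the bare majority, being the unique Bucklin winner is \emph{equivalent} to being the unique simplified Bucklin winner: a rival reaching count $T$ at level $L$ already ties $c$ in phase~2, and a rival staying below $T$ has simplified Bucklin score $>L$ and is eliminated in phase~1. Since (by your own stipulation) $c$'s count stays pinned at $T$ in every reachable profile, your instances behave identically under Bucklin and simplified Bucklin; moreover your gadgets --- a guard at position $L$ with a single literal alternative at position $L+1$, the only useful move being the boundary swap exchanging them --- have exactly the one-leaves/one-enters structure that the flow algorithm behind \Cref{thm:bucklin_poly} solves in polynomial time for $\delta=1$. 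A correct reduction of this shape would therefore place \TSAT in \Pb. The two-phase leverage you correctly identify as necessary is not actually present in the gadgets you describe: phase~2 never does any work when $c$ sits at bare majority.

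The paper's construction supplies that leverage by giving $c$ a phase-2 count strictly above majority and by making the gadgets operate at two levels simultaneously. There $c$ appears $\nfrac{N}{2}+5$ times within the top $k$ positions while the majority threshold is $\nfrac{N}{2}+1$; each clause alternative $y_j$ occurs at position $k-1$ in some of its five preferences and at position $k$ in the others, and must satisfy two independent constraints: drop below majority within the top $k-1$ positions (else its simplified Bucklin score beats $c$'s) and stay below count $\nfrac{N}{2}+5$ within the top $k$ positions (else it ties $c$ in phase~2). A per-clause blocker $e_j$, calibrated at $\nfrac{N}{2}+4$, sits immediately to the right of $y_j$ in two of these preferences; when the clause is unsatisfied the literal alternatives are saturated, the only available swaps push $e_j$ leftward in both blocker preferences, and $e_j$ then reaches $\nfrac{N}{2}+5$ and ties $c$. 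The variable gadget likewise uses the level-$(k-1)$ majority constraint on $z_i$ while the literal alternatives are policed by the level-$k$ phase-2 count. No single-level design with $c$ at bare majority can reproduce this, so the fix is not better tuning of your padding but a redesign of the gadgets along these two-level lines. (Your closing observations --- that footrule distance $2$ coincides with swap distance $1$, and that maximum displacement $1$ amounts to disjoint adjacent transpositions --- are correct but inherit the same gap.)
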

 
\longversion{
\begin{proof}
 Let us first prove the result for the maximum displacement distance. The \OB problem for the Bucklin voting rule for the maximum displacement distance is clearly in \NP. To prove \NP-hardness, we reduce from \TSAT to \OB. Let $(\XX=\{x_i: i\in[n]\},\CC=\{C_j: j\in[m]\})$ be an arbitrary instance of \TSAT. Let us consider the following instance $(\AA,\PP,c,\delta=1)$ of \OB.
 \begin{align*}
  \AA &= \{a_i, \bar{a}_i, z_i: i\in[n]\} \cup \{c\}\\
  &\cup \{y_j, e_j: j\in[m]\} \cup \DD, \text{ where } |\DD|=10m^8 n^8
 \end{align*}
 
 We construct the profile \PP using the following function $f$. The function $f$ takes a literal and a clause as input, and outputs a value in $\{0,1,-\}$. For each literal $l$, let $C_i$ and $C_j$ with $1\le i<j\le m$ be the two clauses where $l$ appears. We define $f(l,C_i)=0, f(l,C_j)=1,$ and $f(l,C_k)=-$ for every $k\in[m]\setminus\{i,j\}$. This finishes the description of the function $f$. We are now ready to describe \PP. While describing the preferences below, whenever we say `others' or `for some alternative in \DD' or `for some subset of \DD', the unspecified alternatives are assumed to be arranged in such a way that, for every unspecified alternative $a\in\AA\setminus\DD$, there are at least $10$ alternatives from \DD in the immediate $10$ positions on both left and right of $a$. We also ensure that any alternative in \DD appears within top $10mn$ positions at most once in $\PP$ whereas every alternative in \AA appears within top $10m^2n^2$ position in every preference in $\PP$. This is possible because $|\DD|=10m^8 n^8, |\AA\setminus\DD|=3n+2m+2,$ and $|\PP|\le 10m^3 n^3$. Let $h$ be a function defined on the set of literals as $h(x_i)=a_i$ and $h(\bar{x}_i)=\bar{a}_i$ for every $i\in[n]$. Let $k=10(m+n)$ and $N^\pr=2n+5m$. For any integer $s$ with $1\le s\le 4$, let $Y_s = \{y_j: j\in[m]$ and if $C_j=(l_1\vee l_2\vee l_3)$ and $f(l_r, C_j)=0$ for exactly $s-1$ many $r\in\{1,2,3\}\}$. The profile $\PP$ is the disjoint union of two profiles $\PP_1$ and $\PP_2$. We first describe $\PP_1$ below.
 
 \begin{enumerate}[(I)]
  \item For every $i\in[n]$, we have\label{thm:ob_bucklin_tight_1}
  \begin{itemize}
   \item $c\suc\DD_{k-3}\suc z_i\suc a_i\suc \text{others}$, for some $\DD_{k-3}\subset\DD$ with $|\DD_{k-3}|=k-3$
   \item $c\suc\DD_{k-3}\suc z_i\suc \bar{a}_i\suc \text{others}$, for some $\DD_{k-3}\subset\DD$ with $|\DD_{k-3}|=k-3$
  \end{itemize}
  \item For every $j\in[m]$, if $C_j=(l_1\vee l_2\vee l_3)$, then we have\label{thm:ob_bucklin_tight_2}
  \begin{itemize}
   \item $c\suc\DD_{k-3}\suc y_j\suc e_j\suc \text{others}$, for some $\DD_{k-3}\subset\DD$ with $|\DD_{k-3}|=k-3$
   \item $c\suc\DD_{k-2}\suc y_j\suc e_j\suc \text{others}$, for some $\DD_{k-2}\subset\DD$ with $|\DD_{k-2}|=k-2$
   \item For every $r\in[3]$\\
   If $f(l_r, C_j)=0$, then\\
   $c\suc\DD_{k-3}\suc y_j\suc h(l_r)\suc \text{others}$, for some $\DD_{k-3}\subset\DD$ with $|\DD_{k-3}|=k-3$\\
   Otherwise\\
   $c\suc\DD_{k-2}\suc y_j\suc h(l_r)\suc \text{others}$, for some $\DD_{k-2}\subset\DD$ with $|\DD_{k-2}|=k-2$
  \end{itemize}
  
  \item $5$ copies: $\DD_k \suc c\suc \text{others}$, for some $\DD_k\subset\DD$ with $|\DD_k|=k$\label{thm:ob_bucklin_tight_3}
  
  \item For every $i\in[n]$\label{thm:ob_bucklin_tight_4}
  \begin{itemize}
   \item $N^\pr-1$ copies: $\DD_{k-3} \suc z_i\suc \text{others}$, for some $\DD_{k-3}\subset\DD$ with $|\DD_{k-3}|=k-3$
   \item $N^\pr-1$ copies: $\DD_{k-3} \suc a_i\suc \text{others}$, for some $\DD_{k-3}\subset\DD$ with $|\DD_{k-3}|=k-3$
   \item $N^\pr-1$ copies: $\DD_{k-3} \suc \bar{a}_i\suc \text{others}$, for some $\DD_{k-3}\subset\DD$ with $|\DD_{k-3}|=k-3$
   \item $3$ copies: $\DD_{k-2} \suc a_i\suc \text{others}$, for some $\DD_{k-2}\subset\DD$ with $|\DD_{k-2}|=k-2$
   \item $3$ copies: $\DD_{k-2} \suc \bar{a}_i\suc \text{others}$, for some $\DD_{k-2}\subset\DD$ with $|\DD_{k-2}|=k-2$
  \end{itemize}
  
  \item For every $j\in[m]$\label{thm:ob_bucklin_tight_5}
  \begin{itemize}
   \item $N^\pr+3$ copies: $\DD_{k-2} \suc e_j\suc \text{others}$, for some $\DD_{k-2}\subset\DD$ with $|\DD_{k-2}|=k-2$
  \end{itemize}

  \item For every $s\in[4]$, for every $y_j\in Y_s$\label{thm:ob_bucklin_tight_6}
  \begin{itemize}
   \item $N^\pr-s+1$ copies: $\DD_{k-3} \suc y_j\suc \text{others}$, for some $\DD_{k-3}\subset\DD$ with $|\DD_{k-3}|=k-3$
   \item $s-1$ copies: $\DD_{k-2} \suc y_j\suc \text{others}$, for some $\DD_{k-2}\subset\DD$ with $|\DD_{k-2}|=k-2$
  \end{itemize}
 \end{enumerate}
 
 This finishes the description of $\PP_1$. Let the number of preferences in $\PP_1$ be $N^\prr$. Let $\PP_2$ be a profile consisting of $N^\prr-2N^\pr$ copies of $(\AA\setminus\DD)\suc\DD$. This finishes the description of $\PP_2$ and thus of \PP. Let $N$ denotes the total number of preferences in \PP. Then we have $N=2(N^\prr-N^\pr)$. We observe that $N$ is an even integer. This finishes the description of the reduced \OB instance. \Cref{fig:buck1} shows the number of times every alternative appears within the first $k-2, k-1, k,$ and $k+1$ times in \PP.
 
 \begin{table*}[!htbp]
  \centering
  \begin{tabular}{c|cccc}\hline
   \multirow{2}{*}{Alternatives} & \multicolumn{4}{c}{Number of times it appears within first}\\\cline{2-5}
   & $k-2$ positions& $k-1$ positions& $k$ positions& $k+1$ positions\\\hline\hline
   
   $c$ & $(\nfrac{N}{2})$ & $(\nfrac{N}{2})$ & $(\nfrac{N}{2})$ & $(\nfrac{N}{2})+5$ \\
   
   $z_i, i\in[n]$ & $(\nfrac{N}{2})-1$ & $(\nfrac{N}{2})+1$ & $(\nfrac{N}{2})+1$ & $(\nfrac{N}{2})+1$ \\
   
   $a_i, \bar{a}_i, i\in[n]$ & $(\nfrac{N}{2})-1$ & $(\nfrac{N}{2})+2$ & $(\nfrac{N}{2})+4$ &  $(\nfrac{N}{2})+5$ \\
   
   $y_j, y_j\in Y_s, s\in[4]$ & $(\nfrac{N}{2})-s+1$ & $(\nfrac{N}{2})+s$ & $(\nfrac{N}{2})+5$ & $(\nfrac{N}{2})+5$ \\
   
   $e_j, j\in[m]$ & $0$ & $(\nfrac{N}{2})+3$ & $(\nfrac{N}{2})+4$ & $(\nfrac{N}{2})+5$  \\
   
   $d\in\DD$ & $\le1$ & $\le1$ & $\le1$ & $\le1$ \\\hline
  \end{tabular}
  \caption{Number of times every alternative appears within first $k-2, k-1, k,$ and $k+1$ times in \PP.}\label{fig:buck1}
 \end{table*}

 We now claim that the two instances are equivalent. In one direction, let us assume that the \TSAT instance is a \YES instance with satisfying assignment $g:\XX\longrightarrow\{0,1\}$. Let us consider the following profile \QQ where the maximum displacement distance of every preference in \QQ from its corresponding preference in \PP is at most $1$.

 \begin{enumerate}[(I)]
  \item For every $i\in[n]$, we have
  \begin{itemize}
   \item If $g(x_i)=0$, then
   \begin{itemize}
    \item $c\suc\DD_{k-3}\suc a_i\suc z_i\suc \text{others}$, for some $\DD_{k-3}\subset\DD$ with $|\DD_{k-3}|=k-3$
    \item $c\suc\DD_{k-3}\suc z_i\suc d\suc \bar{a}_i\suc \text{others}$, for some $d\in\DD, \DD_{k-3}\subset\DD$ with $|\DD_{k-3}|=k-3$    
   \end{itemize}
   \item If $g(x_i)=1$, then
   \begin{itemize}
    \item $c\suc\DD_{k-3}\suc z_i\suc d\suc a_i\suc \text{others}$, for some $d\in\DD, \DD_{k-3}\subset\DD$ with $|\DD_{k-3}|=k-3$
    \item $c\suc\DD_{k-3}\suc \bar{a}_i\suc z_i\suc \text{others}$, for some $\DD_{k-3}\subset\DD$ with $|\DD_{k-3}|=k-3$    
   \end{itemize}
  \end{itemize}
  \item For every $j\in[m]$, if $C_j=(l_1\vee l_2\vee l_3)$, then we have
  \begin{itemize}
   \item If there exists $t\in[3]$ with $f(l_t, C_j)=0$ and $g(l_t)=1$, then
    \begin{itemize}
    \item $c\suc\DD_{k-3}\suc y_j\suc d\suc e_j\suc \text{others}$, for some $d\in\DD, \DD_{k-3}\subset\DD$ with $|\DD_{k-3}|=k-3$
    \item $c\suc\DD_{k-2}\suc e_j\suc y_j\suc \text{others}$, for some $\DD_{k-2}\subset\DD$ with $|\DD_{k-2}|=k-2$
    \item $c\suc\DD_{k-3}\suc h(l_t)\suc y_j\suc \text{others}$, for some $\DD_{k-3}\subset\DD$ with $|\DD_{k-3}|=k-3$
    \item For every $r\in[3], r\ne t$\\
    If $f(l_r, C_j)=0$, then\\
    $c\suc\DD_{k-3}\suc y_j\suc h(l_r)\suc \text{others}$, for some $\DD_{k-3}\subset\DD$ with $|\DD_{k-3}|=k-3$\\
    Otherwise\\
    $c\suc\DD_{k-2}\suc y_j\suc h(l_r)\suc \text{others}$, for some $\DD_{k-2}\subset\DD$ with $|\DD_{k-2}|=k-2$
    \end{itemize}
    
   \item If there exists $t\in[3]$ with $f(l_t, C_j)=1$ and $g(l_t)=1$, then
    \begin{itemize}
    \item $c\suc\DD_{k-3}\suc e_j\suc y_j\suc \text{others}$, for some $\DD_{k-3}\subset\DD$ with $|\DD_{k-3}|=k-3$
    \item $c\suc\DD_{k-2}\suc y_j\suc e_j\suc \text{others}$, for some $\DD_{k-2}\subset\DD$ with $|\DD_{k-2}|=k-2$
    \item $c\suc\DD_{k-2}\suc h(l_t)\suc y_j\suc \text{others}$, for some $\DD_{k-2}\subset\DD$ with $|\DD_{k-2}|=k-2$
    \item For every $r\in[3], r\ne t$\\
    If $f(l_r, C_j)=0$, then\\
    $c\suc\DD_{k-3}\suc y_j\suc h(l_r)\suc \text{others}$, for some $\DD_{k-3}\subset\DD$ with $|\DD_{k-3}|=k-3$\\
    Otherwise\\
    $c\suc\DD_{k-2}\suc y_j\suc h(l_r)\suc \text{others}$, for some $\DD_{k-2}\subset\DD$ with $|\DD_{k-2}|=k-2$
    \end{itemize}
  
  \end{itemize}
  \item $5$ copies: $\DD_{k-1} \suc c\suc \text{others}$, for some $\DD_{k-1}\subset\DD$ with $|\DD_{k-1}|=k-1$
  
  \item For every $i\in[n]$
  \begin{itemize}
   \item $N^\pr-1$ copies: $\DD_{k-2} \suc z_i\suc \text{others}$, for some $\DD_{k-2}\subset\DD$ with $|\DD_{k-2}|=k-2$
   \item $N^\pr-1$ copies: $\DD_{k-2} \suc a_i\suc \text{others}$, for some $\DD_{k-2}\subset\DD$ with $|\DD_{k-2}|=k-2$
   \item $N^\pr-1$ copies: $\DD_{k-2} \suc \bar{a}_i\suc \text{others}$, for some $\DD_{k-2}\subset\DD$ with $|\DD_{k-2}|=k-2$
   \item $3$ copies: $\DD_{k-1} \suc a_i\suc \text{others}$, for some $\DD_{k-1}\subset\DD$ with $|\DD_{k-1}|=k-1$
   \item $3$ copies: $\DD_{k-1} \suc \bar{a}_i\suc \text{others}$, for some $\DD_{k-1}\subset\DD$ with $|\DD_{k-1}|=k-1$
  \end{itemize}
  
  \item For every $j\in[m]$
  \begin{itemize}
   \item $N^\pr+3$ copies: $\DD_{k-1} \suc e_j\suc \text{others}$, for some $\DD_{k-1}\subset\DD$ with $|\DD_{k-1}|=k-1$
  \end{itemize}

  \item For every $s\in[4]$, for every $y_j\in Y_s$
  \begin{itemize}
   \item $N^\pr-s+1$ copies: $\DD_{k-2} \suc y_j\suc \text{others}$, for some $\DD_{k-2}\subset\DD$ with $|\DD_{k-2}|=k-2$
   \item $s-1$ copies: $\DD_{k-1} \suc y_j\suc \text{others}$, for some $\DD_{k-1}\subset\DD$ with $|\DD_{k-1}|=k-1$
  \end{itemize}
 \end{enumerate}
 
 \Cref{fig:buck2} shows the number of times every alternative appears within the first $k-1$ and $k$ times in \QQ which proves that $c$ is the unique Bucklin winner in \QQ and thus the \OB instance is a \YES instance.
 
 \begin{table*}[!htbp]
  \centering
  \begin{tabular}{c|cc}\hline
   \multirow{2}{*}{Alternatives} & \multicolumn{2}{c}{Number of times it appears within first}\\\cline{2-3}
   & $k-1$ positions& $k$ positions\\\hline\hline
   
   $c$ & $(\nfrac{N}{2})$ & $(\nfrac{N}{2})+5$ \\
   
   $z_i, i\in[n]$ & $(\nfrac{N}{2})$ & $(\nfrac{N}{2})+1$ \\
   
   $a_i, \bar{a}_i, i\in[n]$ & $\le(\nfrac{N}{2})$ & $\le(\nfrac{N}{2})+4$ \\
   
   $y_j, j\in[m]$ & $(\nfrac{N}{2})$ & $(\nfrac{N}{2})+4$ \\
   
   $e_j, j\in[m]$ & $\le 1$ & $(\nfrac{N}{2})+4$ \\
   
   $d\in\DD$ & $\le1$ & $\le1$ \\\hline
  \end{tabular}
  \caption{Number of times every alternative appears within first $k-1$ and $k$ times in \QQ in the forward direction of the proof of \Cref{thm:ob_bucklin}.}\label{fig:buck2}
 \end{table*}

 In the other direction, let us assume that there exists a profile \QQ such that the maximum displacement distance of every preference in \QQ is at most $1$ from its corresponding preference in \PP and $c$ is the unique Bucklin winner in \QQ. We first observe that, irrespective of \QQ (subject to the condition that its maximum displacement distance from \PP is at most $1$), any alternative in \DD does not appear a majority number of times within the first $2k$ positions in \QQ. Nest we can assume without loss of generality that in the preference in \QQ which corresponds to \Cref{thm:ob_bucklin_tight_3}, the alternative $c$ moves to its left by $1$ position since any alternative to the left of $c$ in \Cref{thm:ob_bucklin_tight_3} belongs to \DD. Since in any preference in \QQ other than \Cref{thm:ob_bucklin_tight_3}, $c$ never appears in any position in $\{\el\in\NB: k-3\le \el\le k+3\}$, $c$ does not appear a majority number of times within the first $k-1$ positions (it actually appears $(\nfrac{N}{2})$ times) in \QQ. However, $c$ appears within the first $k$ positions $(\nfrac{N}{2})+5$ times in \QQ. Since, in every preference in \Cref{thm:ob_bucklin_tight_4,thm:ob_bucklin_tight_5,thm:ob_bucklin_tight_6}, every alternative in $\AA\setminus\DD$ has at least 10 alternatives from \DD in its immediate left and right, we can assume without loss of generality that every alternative in $\AA\setminus(\DD\cup\{c\})$ moves to its right by $1$ position in every preference in \QQ corresponding to the preferences in \Cref{thm:ob_bucklin_tight_4,thm:ob_bucklin_tight_5,thm:ob_bucklin_tight_6}. Now the alternative $z_i, i\in[n]$ appears $(\nfrac{N}{2})+1$ times within the first $k-1$ positions, $y_j, j\in[m]$ appears $(\nfrac{N}{2})+1$ and $(\nfrac{N}{2})+5$ times within the first $k-1$ and $k$ positions respectively, $a_i, \bar{a}_i, i\in[n]$ appear $(\nfrac{N}{2})-1$ and $(\nfrac{N}{2})+4$ times within the first $k-1$ and $k$ positions respectively, and $e_j, j\in[m]$ appears $1$ and $(\nfrac{N}{2})+4$ times within the first $k-1$ and $k$ positions. We now observe that there are exactly two preferences in \PP where $z_i$ appears at the $(k-1)$-th position. Hence, for $c$ to be the unique Bucklin winner in \QQ, $z_i$ must move to its right by $1$ position in at least one preference among the two preferences in \PP where it appears at $(k-1)$-th position. We now consider the following assignment $g:\XX\longrightarrow\{0,1\}$ -- for every $i\in[n]$, $g(x_i)=0$ if there exists a preference in \QQ where $z_i$ appears at the $k$-th position and $a_i$ appears at the $(k-1)$-th position; otherwise we define $g(x_i)=1$. We claim that $g$ is a satisfying assignment for the \TSAT instance. Suppose not, then let us assume that $g$ does not satisfy $C_j=(l_1\vee l_2\vee l_3)$ for some $j\in[m]$. Suppose $y_j\in Y_s$ for some $s\in\{1,2,3,4\}$. Then, among the $5$ preferences in \Cref{thm:ob_bucklin_tight_2} corresponding to $y_j$ (let us call them $\QQ_j$ and $\PP_j$ respectively in \QQ and in \PP), the alternative $y_j$ appears $s$ times at the $(k-1)$-th position and $5-s$ times at the $k$-th position. Now it follows from the definition of $g$ that every alternative in $\{h(l_1), h(l_2), h(l_3)\}$ appears $(\nfrac{N}{2})$ and $(\nfrac{N}{2})+4$ times within the first $k-1$ and $k$ positions in $(\QQ\setminus\QQ_j)\cup \PP_j$ and thus cannot move to their left in any preference in $\PP_j$. Now, to make $c$ the unique Bucklin winner of \QQ, the alternative $y_j$ must move to its right by one position each in at least one preferences where it appears at the $(k-1)$-th position and where it appears at the $k$-th position. Hence, $e_j$ must move to its left in both the preferences in $\PP_j$ where it appears at the immediate right of $y_j$. However, this makes $e_j$ appear $(\nfrac{N}{2})+5$ times within the first $k$ preferences in \QQ which contradicts our assumption that $c$ is the unique Bucklin winner in \QQ. Hence $g$ is a satisfying assignment and thus the \TSAT instance is a \YES instance.
 
 Now the result for the swap and footrule distances also follow from the analogous reductions from \TSAT.
\end{proof}
}

\section{Conclusion and Future Direction}\label{sec:con}

In this paper, we have proposed a new model of bribery. We have argued that the bribery models studied so far in computational social choice may fail to capture intricacies in certain situations, for example, where there is a fear of information leakage and voters care about social reputation. We have discussed how our \ODB and \OB problems can suitably model those scenarios. We have then shown that the \ODB problem is polynomial time solvable for the plurality and veto voting rules for the swap, footrule, and maximum displacement distances, and for the $k$-approval voting rule for the swap distance if the distance allowed is $1$ (and thus for the footrule distance, it is $3$). For the $k$-approval and simplified Bucklin voting rules for the maximum displacement distance, we have shown that the \OB problem is polynomial time solvable. We have then proved that the \OB problem (and thus the \ODB problem) is \NPC for the $k$-approval and simplified Bucklin voting rules for the swap distance even if the distance allowed is $2$ (and thus for the footrule distance, it is $4$), for a class of scoring rules which includes the Borda voting rule, maximin, Copeland$^\alpha$ for any $\alpha\in[0,1]$, and Bucklin voting rules for the swap and maximum displacement distances even when the distance allowed is $1$ (and thus for the footrule distance, it is $3$). In particular, we have proved tight (in terms of \delta) computational complexity results for the \OB problem for the swap, footrule, and maximum displacement distances. Our results show that the notion of optimality makes bribery much richer than optimal manipulation in the sense that the complexity of the \OB problem for some commonly used voting rule ($k$-approval for example) can change drastically if we change the measure of distance under consideration.

\longversion{It would be interesting to find approximation algorithms for the \OB problem where it is \NPC (our hardness proofs already show APX-hardness). The bribery problem in this paper can be extended by introducing a pricing model and a (global) budget for the briber\longversion{ which the briber needs to respect}. In any setting where our \OB problem is \NPC, hardness in such sophisticated models in the corresponding setting will immediately follow. However, it would be interesting to extend our polynomial time algorithms to those\longversion{ sophisticated} models.}

\shortversion{\bibliographystyle{ACM-Reference-Format}}  
\longversion{\bibliographystyle{alpha}}

\bibliography{references}
\end{document}